\documentclass[11pt]{article}

\usepackage[main]{preamble}

\begin{document}
\maketitle

\begin{abstract}
We give a proof of the cubic parallel-repetition bound for finite
two-player entangled games using quantum conditional relative entropy.
For a game with entangled value $1-\varepsilon<1$ and answer alphabets
$\mathcal A,\mathcal B$, its $n$-fold repeated value is at most
\[
 \exp[-\Omega(\varepsilon^3n/L)],\qquad
 L:=\max\{1,\log(|\mathcal A||\mathcal B|)\},
\]
for every $n\ge1$. This recovers Song's cubic rate.
We also prove a threshold bound: the probability of winning a fraction
at least $1-\varepsilon+\delta$ is at most
$\exp[-\Omega(\delta^5n/(\varepsilon^2L))]$ for
$0<\delta\le\varepsilon$.

Our information charge is the decrease in relative entropy when a
question reference is discarded. Our comparison states preserve the
original conditional question law. A chain rule bounds the coordinate
charges using the cost of conditioning on checked wins and retaining
checked answers. Our main lemma converts small charges into nearby
pure-state descriptions and local corrections, each depending only on
one player's question. Together with Song's sampler, these enable
approximate local preparation of the conditioned workspace.
\end{abstract}
\tableofcontents

\section{Introduction}
\label{sec:intro}
\label{sec:introduction}
Parallel repetition is a basic method for reducing the error of
multiprover interactive proofs. A two-player one-round game sends
classical questions to two separated players and checks their classical
answers. In an entangled game the players may share a quantum state
before the questions arrive, but cannot communicate afterward. The
repeated game draws question pairs independently across coordinates and
accepts only if every coordinate is won. Joint measurements across the coordinates can
correlate the answers, so exponential decay requires a proof.

Raz established parallel repetition for classical games \cite{Raz},
and Holenstein gave an information-theoretic proof with cubic dependence
on the gap from value one \cite{Holenstein}. Rao established concentration
above the classical value for general games \cite{RaoConcentration}.
For entangled games,
Cleve, Slofstra, Unger, and Upadhyay proved perfect repetition for XOR
games \cite{CSUU}; Kempe, Regev, and Toner treated unique games
\cite{KRT}; and Dinur, Steurer, and Vidick treated projection games
\cite{DSV}. Jain, Pereszl\'enyi, and Yao (JPY) proved repetition for
product question distributions, in which the two players' questions are
independent \cite{JPY}. For these free games, Chailloux and Scarpa obtained
quadratic gap dependence \cite{CSFree}; Chung, Wu, and Yuen improved the
gap power to $3/2$ for sufficiently many repetitions and extended the
result to multiple players \cite[Theorem~1.1]{CWY}.
Chailloux and Scarpa also proved exponential decay for correlated
question laws with full support \cite{CSFullSupport}.
Yuen obtained polynomial decay
for general entangled games \cite{Yuen}. OpenAI subsequently proved
exponential decay for arbitrary question distributions with gap power
thirteen \cite[Chapter~6]{OpenAI}, and Song improved the general gap
power to three \cite{Song}. OpenAI also established threshold repetition
for arbitrary finite entangled games \cite{OpenAIThreshold}.

We give a different proof of the cubic bound for arbitrary question
distributions. The main contribution is a choice of conditional comparison
states (Definition~\ref{def:conditional-comparisons}) and a conversion of
their quantum conditional relative entropies into compatible pure-state
descriptions, one for each player
(Lemma~\ref{lem:smooth-preparation}\ref{part:preparation-descriptions}).
The comparisons preserve the original conditional question law. Our chain
rule expresses each coordinate charge as the difference of two successive
relative entropies, so summing the charges cancels the intermediate terms
and leaves a single parent bound (Lemma~\ref{lem:ordered-reference-loss}).
The descriptions then enable local preparation of the conditioned workspace
(Lemma~\ref{lem:smooth-preparation}\ref{part:preparation-sampling}).

\subsection{All-win and threshold repetition}
Throughout, $\log$ denotes the base-two logarithm and $\ln$ the natural logarithm.
Let $G$ be a finite two-player game with question law $\mu$, answer
alphabets $\cA,\cB$, and entangled value $\omega^*(G)$
(Definition~\ref{def:game}). Write
\[
 \varepsilon:=1-\omega^*(G),\qquad
 L:=\max\{1,\log(|\cA||\cB|)\}.
\]
The entangled supremum need not be attained. We prove the following bound.
\begin{theorem}[Cubic repetition using Song's sampler]
\label{thm:smooth-repetition}
There is a universal constant $c>0$ such that, for every finite
 two-player game with $\varepsilon>0$ and every $n\ge1$,
\[
 \omega^*(G^{\otimes n})\le\exp[-c\varepsilon^3n/L].
\]
\end{theorem}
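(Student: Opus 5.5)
The plan is the standard Raz--Holenstein conditioning argument, run with the quantum conditional relative entropy charges described in the introduction.

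Fix an entangled strategy for $G^{\otimes n}$: a shared state and measurements for each question tuple. Because the supremum need not be attained, we bound the winning probability of this arbitrary strategy and then take the supremum. We build a growing set $C\subseteq[n]$ of checked coordinates and let $W_C$ be the event that all coordinates in $C$ are won. It suffices to prove the following \emph{one-step} claim: if $|C|\le \alpha n$ with $\alpha:=c'\varepsilon^2/L$ and $\Pr[W_C]\ge 2^{-c''\varepsilon^3 n/L}$, then some coordinate $i\notin C$ satisfies $\Pr[W_{\{i\}}\mid W_C]\le 1-\varepsilon/2$. Assuming the claim, we add coordinates one at a time. After $t\le\alpha n$ steps, either the win probability has already fallen below the exponential threshold, or it is at most $(1-\varepsilon/2)^{\alpha n}=\exp[-\Omega(\varepsilon^3 n/L)]$. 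Both cases give the bound of Theorem~\ref{thm:smooth-repetition}.

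To prove the one-step claim, I would use the comparison states of Definition~\ref{def:conditional-comparisons}. These condition the joint classical--quantum workspace (questions, checked answers on $C$, shared state) on $W_C$ and discard the question reference for coordinate $i$, while keeping the original conditional question law. Lemma~\ref{lem:ordered-reference-loss} bounds the total charge, summed over unchecked coordinates in an ordered chain. By telescoping, the bound is the cost of conditioning on $W_C$ plus the cost of retaining the answers on $C$, which is at most
\[
 \log\frac{1}{\Pr[W_C]}+|C|\,L\;\lesssim\; \frac{\varepsilon^3 n}{L}+\alpha nL .
\]
Averaging over $i\notin C$ (at least $n/2$ coordinates), a typical coordinate has charge $\eta=O(\varepsilon^3/L+\alpha L)=O(\varepsilon^2)$ once the constants are chosen. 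I would also pick $i$ by Markov so that its charge is at most a constant times this average.

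Next, Lemma~\ref{lem:smooth-preparation}\ref{part:preparation-descriptions} turns charge $\eta$ into pure-state descriptions together with local correction unitaries, one depending only on Alice's question $x_i$ and one only on Bob's $y_i$. The conditioned workspace is then $O(\sqrt{\eta})$-close to the state obtained by applying those corrections to a fixed shared state. Song's sampler, through Lemma~\ref{lem:smooth-preparation}\ref{part:preparation-sampling}, lets the players jointly and locally prepare this conditioned workspace, including the correlated dummy questions and answers on $C$. With this we obtain a strategy for a single copy of $G$: on input $(x,y)$, plant it in coordinate $i$, prepare the workspace, and run the $n$-fold measurements. This strategy wins with probability at least $\Pr[W_{\{i\}}\mid W_C]-O(\sqrt{\eta})$. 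Since it cannot exceed $\omega^*(G)=1-\varepsilon$, we get $\Pr[W_{\{i\}}\mid W_C]\le 1-\varepsilon+O(\sqrt\eta)\le 1-\varepsilon/2$, because $\sqrt\eta\le\varepsilon/C_0$.

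The main obstacle is the bookkeeping of losses. The error from Lemma~\ref{lem:smooth-preparation} must scale as $\sqrt{\eta}$ rather than something worse such as $\eta^{1/4}$, since that scaling is exactly what gives the cubic exponent. The sampler's failure probability must also be absorbed into the same $O(\sqrt\eta)$. I would first verify that the lemma's trace-distance guarantee is linear in $\sqrt{\eta}$, uniformly over $\mu$ with no full-support assumption. Then I would tune $c',c''$ so that the conditioning cost and the retained-answer cost $|C|L$ each contribute at most $\varepsilon^2/(4C_0^2)$ per coordinate.
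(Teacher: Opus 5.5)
Your architecture matches the paper's. You condition on checked wins, telescope the charges to the budget $\log(1/p_C)+|C|L$, apply Lemma~\ref{lem:smooth-preparation}, extract a single-game strategy, and check greedily while the per-coordinate budget is $O(\varepsilon^2)$. The gap is in the step that converts preparation error into lost winning probability. Lemma~\ref{lem:smooth-preparation}\ref{part:preparation-sampling} bounds the mean \emph{infidelity} $P^2$ by $C_{\rm prep}\sqrt{s}$. So the purified distance, and hence the trace distance, is only $O(\eta^{1/4})$, and the lemma gives nothing stronger. The check you propose, that ``the trace-distance guarantee is linear in $\sqrt\eta$'', therefore fails. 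The additive estimate ``wins with probability at least $\Pr[W_{\{i\}}\mid W_C]-O(\sqrt\eta)$'' is unjustified. With only an additive loss of $O(\eta^{1/4})$ you would need $\eta\le c\,\varepsilon^4$. By your own accounting this gives $\exp[-\Omega(\varepsilon^5n/L)]$, not the cubic bound.

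The missing idea is the square-root rejection comparison (Lemma~\ref{lem:square-root-rejection}, from Song). For a rejecting effect, $|\sqrt\beta-\sqrt\alpha|\le P(\rho,\sigma)$, and on average $\sqrt{r_p}\le\sqrt{a_q}+\sqrt{e_q}+P(p,q)$. Because the extracted strategy rejects with probability at least $\varepsilon$, this yields $\sqrt\varepsilon\le\sqrt{A_C}+\sqrt{C_{\rm prep}}\,(2t)^{1/4}+\sqrt{(\ln2)t}$. Hence the target rejects with probability at least $\varepsilon/4$ once the infidelity is a small multiple of $\varepsilon$, that is, once $t\le c\,\varepsilon^2$. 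The loss in rejection probability this gives is $O(\sqrt\varepsilon\,\eta^{1/4})$. It agrees with your $O(\sqrt\eta)$ only at the scale $\eta\approx\varepsilon^2$, which is why your final parameter choice is right even though the intermediate claim is not.

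The same inequality repairs a second omission. Your extracted strategy runs on fresh questions from $\mu$, while the target family is indexed by $q_i$. For correlated $\mu$ the charges do not control $D(q_i\|\mu)$. Take a fixed workspace, $\mu$ uniform on $\{(0,0),(1,1)\}$, and $q$ the point mass at $(0,0)$: suitable comparisons give $b_A=b_B=0$, while $D(q\|\mu)=1$. You need the separate postselection bound $\sum_{i\in J}D(q_i\|\mu)\le\log(1/p_C)$, paid for by the term $P(p,q)$. Your Markov choice of $i$ must also control this divergence together with the charge.
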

The same analysis also bounds the probability of winning a large
fraction of the coordinates.

\paragraph{Threshold repetition.}
Let $N$ be the number of coordinates won. For every $0<\delta\le\varepsilon$,
Corollary~\ref{cor:threshold-repetition} gives a universal $c_{\rm th}>0$ such that
\[
 \Pr\!\left[N\ge\left\lceil(1-\varepsilon+\delta)n\right\rceil\right]
 \le\exp\!\left[-c_{\rm th}\frac{\delta^5n}{\varepsilon^2L}\right]
\]
for every finite-dimensional repeated strategy. The choice $\delta=\varepsilon$
gives the all-win event. Taking $\delta=\varepsilon/2$
allows an $\varepsilon/2$ fraction of failures while retaining cubic dependence
on $\varepsilon$. OpenAI proves a universal threshold rate
$\Omega(\delta^5/L)$ and a complementary $\Omega_\mu(\delta^3/L)$
rate whose constant depends on the question distribution
\cite[Theorem~1.1 and Corollary~5.2]{OpenAIThreshold}.
Our bound improves the universal rate's gap dependence by a factor
$1/\varepsilon^2$.

\subsection{Proof strategy: extraction and conditioning}
\label{sec:intro-local-construction}
\paragraph{The extraction objective.}
Our main reduction turns high average success on unchecked coordinates,
conditioned on checked wins, into high single-game success. The extracted
strategy receives fresh questions from $\mu$ and approximately prepares
the corresponding conditional workspace of a uniformly chosen unchecked
coordinate. Every legal single-game strategy rejects with probability
at least $\varepsilon$. Thus, sufficiently small preparation and
question-law errors force mean unchecked rejection probability
$\Omega(\varepsilon)$. We will control these errors by the cost of
conditioning on the checked wins and retaining the checked answers.

\paragraph{The two strategy states.}
Fix a finite-dimensional repeated strategy and a checked set
$C\subsetneq[n]$. Represent the strategy coherently, using a purified
shared resource, question amplitudes
$\sqrt{\mu^{\otimes n}(\mathbf x,\mathbf y)}$, and local measurement
isometries. Retain all outputs and coherent basis copies of the questions
and checked answers. Call the resulting pure state $\ket{\psi}$ the
\emph{preselection state}. Let $W_C$ be the event that all checked
coordinates are won, and let $\Pi_C$ project onto its winning transcripts.
The \emph{postselection state} is
\[
 \ket{\phi}:=\frac{\Pi_C\ket{\psi}}{\sqrt{p_C}},
 \qquad p_C:=\Pr[W_C]=\bra{\psi}\Pi_C\ket{\psi}>0.
\]
Throughout, the bare symbol associated with a unit ket denotes its
density operator. The measured question-and-answer laws of $\psi,\phi$
are $\mathsf P$ and
$\mathsf Q=\mathsf P(\,\cdot\mid W_C)$. Targets and comparisons will be
conditional marginals of these two states; Definition~\ref{def:two-global-states}
gives their precise registers.

We use relative entropy $D$ and mutual information $I$ in bits, and write
$P$ for purified distance; infidelity means $P^2$. Superscripts denote
reduced systems. Definition~\ref{def:information-quantities} gives these
conventions, including conditional mutual information. For classical
systems they reduce to the corresponding classical quantities.

\paragraph{The Raz--Holenstein argument for classical games.}
The classical proof reconstructs the conditioned view of one unchecked
coordinate using shared randomness \cite{Raz,Holenstein}. Write $X,Y$
for its questions and $A,B$ for the classical views the players must
reconstruct.
The relevant information control can be expressed by the conditional
mutual information $I(X:B\mid Y)$: it measures, in average relative
entropy, how much Bob's view depends on Alice's question once his own
question is known. Small charge makes this view close to one Bob can
sample from $Y$ alone. Alice uses the symmetric charge $I(Y:A\mid X)$.
Conditioning and chain rules bound these dependencies on average by the
cost of the checked wins and answers. A record containing the checked
transcript and one publicly chosen question in each other coordinate
breaks the dependence between the residual classical views. Correlated
sampling lets the players agree on this record, after which they
generate the remaining parts locally. Applying the repeated strategy
to these reconstructed views extracts a single-game strategy; iterating
the resulting rejection bound yields exponential decay
\cite[Section~3, Corollary~4.3, and Lemmas~5.2 and~5.4]{Holenstein}.

\paragraph{JPY's argument for product questions.}
JPY use this conditioning approach for product question distributions,
following Raz and Holenstein \cite[Section~I, ``Our techniques'']{JPY}.
For one unchecked coordinate, let $h$ record the checked questions and
answers and the earlier question pairs. Write $\phi_h$ for the resulting
conditional state. Suppress the coordinate index, writing $X,Y$ for its
questions, $\hX,\hY$ for their copies, and $A,B$ for the remaining
workspaces. JPY's coherent charges have the form
\[
 I(\hY:\hX XA)_{\phi_h}
 \qquad\text{and}\qquad
 I(\hX:\hY YB)_{\phi_h},
\]
averaged over the history. They measure mutual information between each
player's coherent local view and the other player's question.
With the checked questions fixed and the checked-answer
registers traced out, the preselection marginal of $\psi$ is a mixture
over checked answers. Each winning branch is also the corresponding
conditional marginal of $\phi$, and its relative entropy to the mixture
is at most minus the logarithm of its mixing weight. Averaging over
winning branches bounds the parent divergence by
$B_C:=\log(1/p_C)+|C|L$: the first term is the conditioning cost,
and the second bounds the entropy of the checked answers
\cite[Lemma~III.1]{JPY}. Fixing the checked questions in both states
incurs no cost depending on their alphabet.

Product structure and a chain rule allocate this bound to the
unchecked-coordinate charges. When the budget per unchecked coordinate
is small, Uhlmann's theorem (Fact~\ref{fact:local-transitions}) supplies
input-dependent local unitaries;
the product question law allows the two corrections to be combined.
After accounting for the embedding error and the changed question law,
conditional success too close to one would give a single-game strategy
with success above $\omega^*(G)$. This forces an unchecked coordinate
with a definite rejection probability. Iterating the conditioning step
yields exponential decay.

\subsection{The conditional relative-entropy charge}
\label{sec:intro-information-model}
To extend this argument to correlated questions, we need a charge that
both controls local preparation and admits a budget bounded by conditioning.
We test candidate charges on general families of conditional workspace
states, asking whether the players can prepare them from an
input-independent shared state using only their own questions.

\paragraph{Why the Raz--Holenstein charge does not suffice.}
For quantum workspaces, the classical charge can miss phase dependence.
For independent uniform question bits $x,y$, take the phase version of
Chailloux and Scarpa's CHSH example \cite[Section~3.1]{CSFullSupport}:
\[
 \ket{\phi_{xy}}^{AB}
 :=\frac{\ket{00}+(-1)^{xy}\ket{11}}{\sqrt2}.
\]
The corresponding state with coherent question copies is
\[
 \ket{\phi}=\frac12\sum_{x,y\in\{0,1\}}
 \ket{xx}^{\hX X}\ket{yy}^{\hY Y}\ket{\phi_{xy}}^{AB}.
\]
The relative sign changes only when both questions are $1$. Yet Alice's
and Bob's individual workspace states are always $I/2$. Hence
$I(X:B\mid Y)_\phi=I(Y:A\mid X)_\phi=0$.
Keeping the question copies coherent does detect it: JPY's charge
distinguishes this family from the constant Bell-state family
$(\ket{00}+\ket{11})/\sqrt2$, even though their individual workspace
states are identical.\footnote{The two coherent mutual informations in
JPY's charge equal $1/2$ bit for the phase family and vanish for the
constant family. For Alice, each $y$-conditional state on $\hX XA$ has
entropy one, while their average has nonzero eigenvalues $1/2,1/4,1/4$
and hence entropy $3/2$. Their difference is $3/2-1=1/2$ bit. Bob's
calculation is the same with the players exchanged.}

\paragraph{Why the JPY charge does not suffice.}
For correlated questions, JPY's mutual information already charges
correlations in the original question law, even before conditioning.
For example, consider an unconditioned coordinate with
$\mu(0,0)=\mu(1,1)=1/2$ and fixed workspace $\ket{00}^{AB}$.
Then $I(\hY:\hX XA)_\phi=1$, although preparing $AB$ requires no
input-dependent operation. Over independent coordinates this charge is
linear even for the untouched source. The comparison must therefore
retain the prior correlation, rather than demand independence from the
remote question.

\paragraph{Why some natural alternatives do not suffice.}
One might try conditioning a coherent mutual information on the local
question to remove the prior correlations. Two natural choices fail for
different reasons. First, consider $I(\hX X:B\mid Y)$ and its Alice
counterpart. Enlarge the phase example by adding registers $A_{\rm c}$
and $B_{\rm c}$ to the players' workspaces, containing copies of their
own questions. With $A':=AA_{\rm c}$ and $B':=BB_{\rm c}$, the full state is
\[
 \ket{\widetilde\phi}
 :=\frac12\sum_{x,y\in\{0,1\}}
 \ket{xx}^{\hX X}\ket{yy}^{\hY Y}
 \ket{\phi_{xy}}^{AB}\ket{x}^{A_{\rm c}}\ket{y}^{B_{\rm c}}.
\]
In the marginal $\widetilde\phi^{\hX X Y B'}$, tracing out $A_{\rm c}$
makes $\hX X$ classical. Conditioned on $Y=y$, Bob's state is
$(I_B/2)\otimes\proj y^{B_{\rm c}}$, independent of $x$. Therefore
\[
 I(\hX X:B'\mid Y)_{\widetilde\phi}
 =I(\hY Y:A'\mid X)_{\widetilde\phi}=0.
\]
JPY's charges remain $1/2$ bit: adding these copies is a local isometry
of each full coherent view. Yet preparing the enlarged family locally
with arbitrarily small error would allow success approaching one in
CHSH. Discard the added copies and measure the original qubits $A,B$ in
the Hadamard basis; the ideal outputs satisfy $a\oplus b=xy$.
Tsirelson's bound limits entangled success to $\cos^2(\pi/8)<1$
\cite[p.~94]{Tsirelson1980}. Thus these conditional mutual informations,
like the classical charges, can vanish despite an obstruction to local
preparation.

A second choice, $I(\hY\hX:A\mid X)$ and its Bob counterpart, has the
opposite problem: it charges coherence created by copying one's own
question, even for product questions. For independent uniform bits, take
$\ket{\phi_{xy}}^{AB}=\ket{x}^A\ket{y}^B$, which the players prepare
locally. In its coherent copied-input state,
\[
 I(\hY\hX:A\mid X)_\phi=I(\hX\hY:B\mid Y)_\phi=1.
\]
Indeed, the state on $\hX XA$ is
$(\ket{000}+\ket{111})/\sqrt2$, and $\hY$ is independent of it.
Its two-register marginals and its $X$ marginal have entropy one, giving
$I(\hY\hX:A\mid X)_\phi=1+1-1-0=1$; Bob is symmetric.
Each player's total charge is therefore $n$ on $n$ untouched coordinates,
although $B_\varnothing=0$. We need a charge that detects the phase
obstruction while allowing the original question correlations and local
copying at no cost.

\paragraph{Our charge: quantum conditional relative entropy.}
Our replacement compares the coherent local view with a state that
preserves the original conditional question law and allows local
dependence on the player's own question. Subtracting the divergence of
the retained local marginal yields the quantum conditional relative
entropy used in both the preparation estimate and the conditioning budget.

In the one-coordinate view of $\phi$, let $q$ be the actual question
law, supported on the prescribed law $\mu$, and let
$\ket{\phi_{xy}}^{AB}$ be the conditional workspace vectors.
Alice holds $X,A$ and Bob holds $Y,B$. This gives the copied-input form
of Definition~\ref{def:copied-input-family}:
\[
 \ket{\phi}=\sum_{xy}\sqrt{q(x,y)}
 \ket{xx}^{\hX X}\ket{yy}^{\hY Y}\ket{\phi_{xy}}^{AB},
 \qquad R=\hX\hY.
\]
The players leave the reference $R$ untouched. Here
$\phi_{xy}^{AB}=\phi_{X=x,Y=y}^{AB}$, with the fixed question copies
suppressed. The charge and preparation estimates below apply to any
family of this form; in the repetition argument they are applied to
coordinate views of the postselection state.

An Alice conditional comparison is a state $\nu_A^{RXA}$ admitting a
purification of the form
\[
 \ket{\xi_A}
 =\sum_{x,y}\sqrt{r_X(x)\mu(y\mid x)}
 \ket{xx}^{\hX X}\ket{yy}^{\hY Y}\ket{\xi_{A,x}}^{AF},
 \qquad \nu_A=\Tr_{YF}\proj{\xi_A},
\]
for a distribution $r_X$ and unit vectors $\ket{\xi_{A,x}}$
(Lemma~\ref{lem:conditional-comparison-forms}). The conditional vector
uses only $x$; the prescribed kernel supplies its dependence on $y$.
Bob's comparisons interchange the players. The local marginal remains
free because conditioning can change it, and preparation later matches
it to the target's marginal. We prescribe the original conditional
kernels because the repetition construction supplies comparisons of this
form and preserves it under ordered restrictions. The comparison retains the
coherent $x,x'$ blocks needed for purification. On diagonal inputs it is the usual conditional form
$\nu_A(x,y,a)=\mu(y\mid x)\nu_A(x,a)$.

For a comparison pair with finite full divergences, define
(Definition~\ref{def:local-losses})
\[
 \begin{aligned}
 b_A&=D(\phi^{RXA}\|\nu_A)-D(\phi^{XA}\|\nu_A^{XA}),\\
 b_B&=D(\phi^{RYB}\|\nu_B)-D(\phi^{YB}\|\nu_B^{YB}),
 \qquad s:=b_A+b_B.
 \end{aligned}
\]
These are quantum conditional relative entropies, in the sense of Capel, Lucia, and P\'erez-Garc\'ia
\cite[Theorem~2]{CLPConditional}, for the reference $R$ conditioned on
the local question and workspace.
Data processing makes them nonnegative (Fact~\ref{fact:master-standard}). A fixed workspace
with $q=\mu$ is reproduced exactly by taking $r_X=\mu_X$ and the
same conditional vector at every $x$ in the defining purification.
Both $b_A$ and $b_B$ therefore vanish even for a correlated source, including
the correlated-bit example above. For the locally prepared family
$\ket{x}^A\ket{y}^B$, keep $r_X=\mu_X$ and take
$\ket{\xi_{A,x}}=\ket{x}^A\ket{0}^F$ and use the symmetric choice for
Bob; both comparisons again match their targets and both charges vanish.

The charges recover familiar quantities in two special cases. For product
$\mu=\mu_X\otimes\mu_Y$, every Alice comparison is independent of
$\hY$. Optimizing over these comparisons gives
\[
 \inf_{\nu_A} b_A
 =I(\hY:\hX XA)_\phi+D(q_Y\|\mu_Y).
\]
Lemma~\ref{lem:benchmark-loss-identities} proves this identity by the
product expansion of relative entropy. Thus $b_A$
recovers JPY's coherent charge and the change in the remote question
marginal. In particular, the minimum values of $b_A,b_B$ are $1/2$
for the phase example and zero for the constant Bell-state family.
The identity allows a correlated posterior $q$; the allocation itself
keeps the inherited comparisons fixed.

For a diagonal target, write $q^{XYA}$ for its classical law. With any
fixed diagonal comparison, the equality case of
Lemma~\ref{lem:loss-cq-information} gives
\[
 b_A=I(Y:A\mid X)_q+D(q^{XY}\|q_X\mu_{Y|X}).
\]
This is the classical conditional-information charge underlying the
Raz--Holenstein approach \cite{Raz,Holenstein}, together with the change
in the conditional question law. Lemma~\ref{lem:loss-cq-information}
gives the corresponding lower bound for classical questions and a
quantum workspace.

\subsection{Conditioning and allocation of the charge}
\label{sec:intro-chain-rule-budget}
We now show that the coordinate charges are small on average.
Put $J=[n]\setminus C$ and $m=|J|$, and let $q_i$ be the $(X_i,Y_i)$
marginal of $\mathsf Q$ (Definition~\ref{def:conditioned-strategy}).
Each coordinate view of $\phi$ has the form above with $q=q_i$.

The parent bound in the JPY overview also holds for correlated questions.
Compare a marginal of $\phi$ conditioned on the checked questions and
answers with the corresponding marginal of $\psi$ conditioned only on
the checked questions. Both marginals omit the checked-answer registers;
the target conditions on their values, while the comparison averages
over them.
Lemma~\ref{lem:conditional-marginal-budget} bounds the mean divergence
by $B_C$, also after taking either player's marginal. Put $t:=B_C/m$.

Order the unchecked coordinates increasingly. Alice's chain rule uses
successive prefix marginals of the question references; Bob's uses
successive suffix marginals. At coordinate $i$, these complementary
choices expose the same question history: Bob's questions before $i$
and Alice's questions after $i$, together with the checked questions.
The full history also records the checked answers. Both recorded states
retain this same history register, with the same postselection weights.
Within each history block, the target conditions on the checked answers
and the comparison averages over them. Both conditional blocks omit
the checked-answer registers and act on the same residual systems
(Definition~\ref{def:inherited-comparisons}). Only the target block
generally depends on the checked-answer values. The comparison blocks
are inherited marginals of $\psi$, fixed throughout the allocation.
Independence across the repeated question pairs ensures that the
restrictions preserve the original conditional question law.

Each coordinate charge, averaged over its history under $\mathsf Q$,
is the difference between two successive relative entropies. Summing
cancels the intermediate terms and leaves the parent divergence minus
a nonnegative marginal divergence. Thus each player's total charge is
at most $B_C$ (Lemma~\ref{lem:ordered-reference-loss}). Writing
$b_{A,i},b_{B,i}$ for the resulting charges,
Proposition~\ref{prop:game-information-interface} gives
\begin{align*}
 \frac1m\sum_{i\in J}(b_{A,i}+b_{B,i})
 &\le2t &&\text{(the two telescoping budgets)},\\
 \frac1m\sum_{i\in J}D(q_i\|\mu)
 &\le\frac{\log(1/p_C)}m\le t
 &&\text{(postselection and product chain rule).}
\end{align*}
The conditional charges control preparation under $q_i$. The separate
$D(q_i\|\mu)$ bound pays for the change to fresh questions from $\mu$
during extraction. It is needed because the conditional charges need
not control the joint question law for correlated $\mu$.
Section~\ref{sec:main-parent-allocation} gives the register assignments and
verifies the comparison form and the allocation identities.

\subsection{From the information budget to local preparation}
The local preparation error $d_{\rm prep}(q,\phi)$ is the infimum of the
mean infidelity with the family $\phi_{xy}^{AB}$ achievable from an
input-independent shared state by Alice's $x$-dependent and Bob's
$y$-dependent channels (Definition~\ref{def:local-preparation-error}).
The designated output includes the entire original game workspace;
additional implementation environments may be discarded.

Let
$\bar e$ denote preparation infidelity averaged uniformly over $i\in J$
and then under the question law $q_i$. The local preparation conclusion of
Lemma~\ref{lem:smooth-preparation}\ref{part:preparation-sampling} is
\begin{equation}
 d_{\rm prep}(q,\phi)
 \le\min\{1,C_{\rm prep}\sqrt{b_A+b_B}\},
 \label{eq:intro-loss-to-preparation}
\end{equation}
where $C_{\rm prep}$ is universal. The lemma specifies the constant and
gives a finite witness for every positive slack.
For every preparation slack $\epsilon_{\rm prep}>0$, Jensen's inequality
and the allocated budget provide
coordinate preparations with $\bar e\le C_{\rm prep}\sqrt{2t}+\epsilon_{\rm prep}$.
It remains to construct the local descriptions that give
\eqref{eq:intro-loss-to-preparation}.

\paragraph{The technical contribution: two local pure descriptions.}
Each player forms a pure-state description using only their own question.
Small charges make the two descriptions close on average, and local
corrections bring Alice's description close to the desired workspace.
Lemma~\ref{lem:smooth-preparation}\ref{part:preparation-descriptions}
makes these two requirements quantitative. For any finite copied-input
family with conditional comparisons and finite full divergences, it
supplies unit vectors $\ket{u_x},\ket{v_y}$ on a common space
$A_{\rm s}F$, Alice isometries $J_x:A_{\rm s}\to A$, and Bob unitaries
$W_y:B\to F$, after finite pure padding, such that
\[
 \boxed{\begin{aligned}
 \mathbb E_q\|\ket{u_x}-\ket{v_y}\|_2^2
   &\le4(\ln2)(b_A+b_B),\\
 \mathbb E_qP^2\!\left((J_x\otimes W_y^\dagger)u_x(J_x^\dagger\otimes W_y),\phi_{xy}\right)
   &\le(\ln2)b_A.
 \end{aligned}}
\]
All descriptions and corrections are fixed from the finite family
before the questions arrive.

The first estimate permits Song's sampler
(Fact~\ref{fact:song-smooth-sampling}) to prepare $u_x$ approximately.
Alice and Bob then apply $J_x$ and $W_y^\dagger$; the second estimate
connects that corrected state to the original workspace target.
The purified-distance triangle inequality then gives
\eqref{eq:intro-loss-to-preparation}.

\paragraph{Why the pure descriptions exist.}
The proof of Lemma~\ref{lem:smooth-preparation}\ref{part:preparation-descriptions}
first uses constructive marginal replacement
(Lemma~\ref{lem:retained-marginal-lift}\ref{part:replacement-conditional}).
It matches Alice's retained marginal while preserving the conditional
comparison form and controlling the error by $b_A$. Uhlmann's theorem,
applied to the whole coherent $x$ ensemble at fixed $y$, chooses one Bob
correction for all its components. Cauchy--Schwarz turns the coherent
overlap into a conditional-infidelity bound under $q$. Bob's information
charge then bounds the discrepancy between the canonical pure
descriptions.

\subsection{Completing extraction and repetition}
\label{sec:intro-complete-extraction}
We can now complete the extraction described at the start of the overview.
The targets of Definition~\ref{def:coordinate-losses} have designated
answer registers that reproduce the corresponding coordinates of the
conditioned strategy. In Procedure~\ref{proc:cubic-single-copy},
the players share a uniform unchecked coordinate and a preparation
resource for that coordinate, independent of the fresh questions.
On fresh questions they apply their respective local maps and read the
designated answers. The history is part of the prepared workspace, not
additional advice. Every such legal strategy
rejects with probability at least $\varepsilon$.

Let $A_C$ be the mean unchecked rejection probability under $\mathsf Q$.
The square-root rejection comparison
(Lemma~\ref{lem:square-root-rejection}) gives
\[
 \sqrt\varepsilon\le
 \sqrt{A_C}+\sqrt{\bar e}
 +\sqrt{\frac{\ln2}{m}\sum_{i\in J}D(q_i\|\mu)}.
\]
The last term accounts for the change from $q_i$ to $\mu$ by
purified-distance data processing under the fixed rejection channel.
Substituting the preparation and question-law bounds and letting
$\epsilon_{\rm prep}\downarrow0$ forces $A_C=\Omega(\varepsilon)$ whenever $t$ is at most a sufficiently
small universal multiple of $\varepsilon^2$.
Greedy checking selects a coordinate with rejection probability
$\Omega(\varepsilon)$ while this budget condition holds. Each check
reduces $p_C$ by a factor $1-\Omega(\varepsilon)$.
The proof of Theorem~\ref{thm:smooth-repetition} balances these decreases
against the growth of $B_C$ to obtain
$\exp[-\Omega(\varepsilon^3n/L)]$. The quadratic information threshold
and the linear rejection threshold account for the three powers of the
gap. Section~\ref{sec:game} makes this reduction quantitative.

The same extraction estimate gives threshold repetition. Keeping the
square-root rejection comparison forces $A_C\ge\varepsilon-\delta/4$
when $t$ is at most a sufficiently small universal multiple of
$\delta^4/\varepsilon^2$. Rao's concentration argument
\cite{RaoConcentration} turns this conditional-success bound into the
stated threshold rate. Corollary~\ref{cor:threshold-repetition} gives
the parameter choices.

\subsection{Comparison with the previous proofs}
\label{sec:intro-openai}
Table~\ref{tab:proof-comparison} compares our reconstruction with
OpenAI's alignment framework
\cite[Chapter~6, Sections~3.2--3.4 and~4.1--4.3]{OpenAI},
as refined by Song~\cite{Song}. In that approach, a reveal process
controls changes in local descriptions through operator-entropy
increments and a one-step variation bound for resolvent purifications.
Classical histories are coupled before a state depending on the history
is sampled. Song improves the sampling step to obtain the cubic rate.
Our proof uses the same sampler, but obtains its pure descriptions from
conditional comparisons: their relative entropies telescope to the
conditioning budget, and marginal replacement followed by Uhlmann's
theorem supplies the local corrections. We include the history registers
in the state to be prepared, so the local preparation step also prepares
their correlations with the remaining workspace.

\begin{proofcomparisontable}
\begin{tabularx}{\textwidth}{@{}>{\raggedright\arraybackslash}p{.16\textwidth}
 >{\raggedright\arraybackslash}X >{\raggedright\arraybackslash}X@{}}
\toprule
Task & Alignment route retained by Song & Present proof \\
\midrule
Information allocation & Reveal martingales and operator-entropy increments.
 & Forward and reverse chain rules for conditional relative entropies to fixed comparisons. \\
Local descriptions & Resolvent purifications and a one-step variation bound.
 & Constructive marginal replacement and a coherent local transition. \\
History and output & Couple classical histories, then sample a history-dependent state.
 & Include the histories in the target and prepare its full conditional workspace. \\
\bottomrule
\end{tabularx}
\caption{Construction of the local descriptions and treatment of history.
Both arguments use Song's sampling bound
(Fact~\ref{fact:song-smooth-sampling}).}
\label{tab:proof-comparison}
\end{proofcomparisontable}

\paragraph{Earlier uses of conditional relative entropy.}
Classically, the charge is an expected conditional relative entropy,
used by Jain in strong direct-product results \cite{JainNewSDP}; the
robust conditional relative min-entropy in that work is its distinct
tail analogue. Replacing the partial trace by a general channel gives
the data-processing remainder studied by Sutter, Tomamichel, and Harrow;
Junge, Renner, Sutter, Wilde, and Winter; and Sutter, Berta, and Tomamichel
\cite{STH,JRSWW,SBT}. Anshu, Jain, Mukhopadhyay, Shayeghi, and Yao use
the same channel loss in compression with classically described side information
\cite[Theorem~3.12]{AJMSY}.

\paragraph{Further related work.}
\label{sec:intro-classical-jpy}
Song's sampler refines the robust quantum correlated sampling developed
by Dinur, Steurer, and Vidick for entangled parallel repetition
\cite{DSV,Song}. The quantum
correlated-sampling problem of Anshu, Jain, Mukhopadhyay, Shayeghi, and
Yao has a different output requirement \cite[Theorem~5.1]{AJMSY}. Bavarian, Vidick,
and Yuen prove repetition via
anchoring and fortification, which modify the game
\cite{BVYAnchor,BVYFort}; the games repeated here are unchanged.
Kundu and Lalonde proved repetition bounds for shared EPR pairs subject
to depolarizing, unital, or erasure noise \cite[Theorem~4]{KLNoisy}.
Schmidt, Storgaard, Walter, and Zhao determined the exact entangled
values of all even parallel repetitions of Feige's game
\cite[Section~6]{SSWZFeige}.

\Needspace{8\baselineskip}
\section{Preliminaries and information-theoretic tools}
\label{sec:preliminaries}
All systems are finite dimensional. We fix the notation and collect the
general estimates used for conditioning, local preparation, and extraction.
Application-specific states and comparisons are constructed in
Section~\ref{sec:main-parent-allocation}.

\subsection{Games and information conventions}
\begin{definition}[Games, values, and gap parameters]
\label{def:game}
\normalfont
A finite two-player game is $G=(\cX,\cY,\cA,\cB,\mu,v)$, with
question law $\mu$ on $\cX\times\cY$, answer alphabets $\cA,\cB$,
and predicate $v(x,y,a,b)\in\{0,1\}$. Its entangled value
$\omega^*(G)$ is the supremum success probability over finite-dimensional
shared states and local measurements. Its classical value $\omega(G)$
allows shared and local randomness. The repeated game $G^{\otimes n}$
draws $n$ independent question pairs and accepts if every coordinate is won.
Write
\[
 \varepsilon:=1-\omega^*(G),\qquad
 L:=\max\{1,\log(|\cA||\cB|)\}.
\]
\end{definition}

\begin{definition}[Entropy and distance conventions]
\label{def:information-quantities}
\normalfont
For a density operator $\rho$ and a nonzero positive operator $\nu$ on the same space, define
\[
 D(\rho\|\nu):=\Tr\rho(\log\rho-\log\nu),
\]
where $D(\rho\|\nu)=+\infty$ unless $\supp\rho\subseteq\supp\nu$.
Whenever we subtract a marginal divergence from a full-state divergence,
this support inclusion is assumed, so both terms are finite.
The measured relative entropy is
\[
 D_{\mathbb M}(\rho\|\nu)
 :=\sup_{\{M_j\}}\sum_j(\Tr\rho M_j)
       \log\frac{\Tr\rho M_j}{\Tr\nu M_j},
\]
where the supremum is over finite POVMs: $M_j\ge0$ and $\sum_jM_j=I$.
A summand is zero when its numerator is zero and is $+\infty$ when
only its denominator is zero. We also use the max-relative entropy
\cite[Definition~1]{Datta},
\[
 D_{\max}(\rho\|\nu)
 :=\inf\{\lambda\in\mathbb R:\rho\le2^\lambda\nu\},
\]
with value $+\infty$ if no such $\lambda$ exists.
For normalized states, root fidelity and purified distance are
\[
 F(\rho,\nu):=\|\sqrt\rho\sqrt\nu\|_1,\qquad
 P(\rho,\nu):=\sqrt{1-F(\rho,\nu)^2}.
\]
For a unit vector $\ket{\psi}$, the same lowercase symbol denotes its
density operator $\psi:=\proj\psi$. Marginals and conditioned marginals
keep that name, whether pure or mixed; ket notation is used only for
pure states. Both arguments of $F$ and $P$
are density operators; in particular, an isometry acts there as
$\rho\mapsto U\rho U^\dagger$.
Infidelity means $P^2$.
We use $D$ for information budgets and $P^2$ for preparation error.
The square-root rejection comparison also uses $P$ to change the
classical question law.
For a state $\zeta$, define
\[
 \begin{aligned}
 H(S)_\zeta&:=-\Tr\zeta^S\log\zeta^S,\\
 H(S\mid T)_\zeta&:=H(ST)_\zeta-H(T)_\zeta,\\
 I(S:T)_\zeta&:=H(S)_\zeta+H(T)_\zeta-H(ST)_\zeta,\\
 I(S:T\mid K)_\zeta&:=H(SK)_\zeta+H(TK)_\zeta-H(K)_\zeta-H(STK)_\zeta.
 \end{aligned}
\]
Superscripts denote marginals. We write $z\leftarrow p$ for sampling $z$
from the distribution $p$.
\end{definition}

\begin{definition}[Matching-copy support]
\label{def:matching-support}
\normalfont
A matching pair $\hX X$ is supported on
$\operatorname{span}\{\ket{xx}:x\in\cX\}$. For a state on $\hX XA$,
this means
\[
 \sigma=\Pi_X\sigma\Pi_X
 =\sum_{x,x'}\ket{xx}\bra{x'x'}\otimes\sigma_{xx'}^A,
 \qquad
 \Pi_X:=\sum_x\proj{x,x}^{\hX X}\otimes I_A.
\]
\end{definition}
Off-diagonal $x,x'$ blocks are allowed; matching support does not mean
that the state is classical. Tracing $\hX$ makes $X$ classical, and
$X$-controlled operators preserve matching support.
A hat denotes a basis-copy register; protected references are specified separately.

\begin{definition}[Conditional states]
\label{def:conditional-states}
\normalfont
For a state $\sigma^{ZQ}$ and a distinguished basis $\{\ket{z}\}$ of $Z$, put
\[
 p_\sigma(z):=\Tr\bra{z}\sigma\ket{z},
 \qquad
 \sigma_{Z=z}^{Q}:=\frac{\bra{z}\sigma\ket{z}}{p_\sigma(z)}
 \quad\text{when }p_\sigma(z)>0.
\]
Subscripts specify the records conditioned on; superscripts specify the
systems retained. By default, the conditioning register $Z$ is omitted.
If it is explicitly retained, then
$\sigma_{Z=z}^{ZQ}=\proj z^Z\otimes\sigma_{Z=z}^Q$.
Other registers containing copies of $z$ are retained or omitted
according to the indicated systems.
For a pure state, the corresponding conditional vector is
\[
 \ket{\sigma_{Z=z}}^Q
 :=(\bra{z}\otimes I_Q)\ket{\sigma}/\sqrt{p_\sigma(z)}.
\]
Once the conditioning registers are specified, write $\sigma_z^Q$ for
$\sigma_{Z=z}^Q$, and $\sigma_{xy}^Q$ for
$\sigma_{X=x,Y=y}^Q$. Subscripts explicitly introduced as family or
construction indices retain that meaning. In an expectation over $Z$,
$\sigma_Z^Q$ uses the sampled value but is normalized with $p_\sigma$,
even when the expectation uses another law. Fixed copies of conditioned
values may be suppressed. Omitting $Z$ means tracing it out, not
conditioning: $\sigma^Q=\sum_zp_\sigma(z)\sigma_{Z=z}^Q$.
\end{definition}

Unless stated otherwise, all outputs are retained, local partial
isometries are completed after finite padding, and zero-probability
branches receive fixed local completions. In limiting constructions,
approximation slack is chosen uniformly over the finite input family.

\Needspace{12\baselineskip}
\subsection{Copied-input families and conditional comparisons}
We isolate the preparation problem for one pair of questions. The
comparison states retain the prescribed conditional question law while
allowing coherent dependence on the local question.
\begin{definition}[Copied-input state family]
\label{def:copied-input-family}
\normalfont
Let $\cX,\cY$ be finite label sets and $\mu$ a prescribed distribution
on their product. Let $q$ be supported on $\supp\mu$. Alice holds the
control $X$ and workspace $A$; Bob holds $Y$ and $B$. All local records
belong to these workspaces. For bipartite unit vectors $\ket{\psi_{xy}}^{AB}$, define
\begin{equation}
 \ket{\psi}:=\sum_{xy}\sqrt{q(x,y)}
 \ket{xx}^{\hX X}\ket{yy}^{\hY Y}\ket{\psi_{xy}}^{AB},
 \qquad R:=\hX\hY.
 \label{eq:intro-native-state}
\end{equation}
Thus $\psi_{xy}^{AB}=\psi_{X=x,Y=y}^{AB}$ in the conditioning convention
of Definition~\ref{def:conditional-states}; the fixed question copies are
suppressed, and zero-probability vectors are arbitrary. The copied
questions $\hX,\hY$ form a protected reference on which the players never
act.
\end{definition}
When referring to a symmetric statement for Bob, we interchange
$(X,A,\hX,x)$ with $(Y,B,\hY,y)$ and use the corresponding conditional
question law. For ordered coordinate restrictions we also reverse the
coordinate order. Conclusions involving both players remain explicit.

\begin{definition}[$\mu$-conditional comparisons]
\label{def:conditional-comparisons}
\normalfont
For the alphabets and prior of Definition~\ref{def:copied-input-family}, put
\[
 K_y^\mu:=\sum_{x\in\supp\mu_X}\sqrt{\mu(y\mid x)}\proj x^X,
\]
with identities on the other registers. An \emph{Alice
$\mu$-conditional comparison} is a state of the form
\begin{equation}
 \nu_A^{RXA}=\mathcal C_\mu^A(\tau_A)
 :=\sum_y\proj y^{\hY}\otimes K_y^\mu\tau_AK_y^\mu,
 \qquad R=\hX\hY,
 \label{eq:conditional-comparison-density-form}
\end{equation}
where $\tau_A^{\hX XA}$ is a density operator with matching-copy
support and $X$-law supported on $\supp\mu_X$. Bob's comparisons
interchange the players and use $\mu(x\mid y)$.
We omit $\mu$ when the prior is fixed. Conditional laws are used only
at positive-mass marginal symbols; cells of $\mu$ and eigenvalues of
the comparison may vanish.
\end{definition}
The map $\mathcal C_\mu^A$ attaches the prescribed conditional law while
retaining the coherent $x,x'$ blocks. This defines a comparison state;
it is not a player operation on the protected reference. The following
equivalent purification form is used for local reconstruction.

\begin{lemma}[Equivalent forms of $\mu$-conditional comparisons]
\label{lem:conditional-comparison-forms}
A state $\nu_A^{RXA}$ is an Alice conditional comparison if and only if
there are a distribution $r_X$ supported on $\supp\mu_X$, a finite
system $F$, and unit vectors $\ket{\xi_{A,x}}^{AF}$ such that
\begin{equation}
 \begin{aligned}
 \ket{\xi_A}
 &:=\sum_{x,y}\sqrt{r_X(x)\mu(y\mid x)}
 \ket{xx}^{\hX X}\ket{yy}^{\hY Y}\ket{\xi_{A,x}}^{AF},\\
 \nu_A^{RXA}&=\Tr_{YF}\proj{\xi_A}.
 \end{aligned}
 \label{eq:phase-native-comparison}
\end{equation}
For every density-form representation,
\[
 \nu_A^{XA}=\tau_A^{XA}
 =\sum_xr_X(x)\proj x^X\otimes\xi_{A,x}^A.
\]
The symmetric characterization holds for Bob.
\end{lemma}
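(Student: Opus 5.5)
The plan is to compute $\Tr_{YF}\proj{\xi_A}$ directly and recognize it as $\mathcal C_\mu^A(\tau_A)$ for the reduced state $\tau_A$ of a natural ``seed'' vector. For the converse, we purify a given $\tau_A$ and use matching-copy support to put the purification in the required form. The marginal identity then follows by a direct trace computation.

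\emph{Purification form implies comparison form.} Given $r_X$, $F$ and $\ket{\xi_{A,x}}$, define the seed vector
\[
 \ket{t}:=\sum_x\sqrt{r_X(x)}\ket{xx}^{\hX X}\ket{\xi_{A,x}}^{AF},
 \qquad \tau_A:=\Tr_F\proj t .
\]
Since $\ket t$ lies in the range of $\Pi_X\otimes I_F$, $\tau_A$ has matching-copy support, and its $X$-law is $r_X$, supported on $\supp\mu_X$. The vector $\ket{\xi_A}$ equals $\sum_y\ket{yy}^{\hY Y}\otimes K_y^\mu\ket t$, where $K_y^\mu$ acts on $X$. This uses that $r_X$ is supported on $\supp\mu_X$, so the restriction in $K_y^\mu$ loses nothing. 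Tracing out $Y$ removes the coherences between different $y$ on $\hY$, because the $\ket{y}^Y$ are orthonormal. This gives
\[
 \Tr_Y\proj{\xi_A}=\sum_y\proj y^{\hY}\otimes K_y^\mu\proj tK_y^\mu .
\]
$K_y^\mu$ commutes with $\Tr_F$, since it does not act on $F$. Tracing out $F$ therefore yields $\mathcal C_\mu^A(\tau_A)$, as in \eqref{eq:conditional-comparison-density-form}.

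\emph{Comparison form implies purification form.} Let $\tau_A$ have matching-copy support and $X$-law supported on $\supp\mu_X$. Take any purification $\ket t$ of $\tau_A$ with a finite purifying system $F$. We expect this to be the only point needing care. Since $\tau_A=\Pi_X\tau_A\Pi_X$, the support of $\tau_A$ lies in the range of $\Pi_X$. A purification lies in $\supp\tau_A\otimes F$ (Schmidt decomposition), so $(\Pi_X\otimes I_F)\ket t=\ket t$. Hence
\[
 \ket t=\sum_x\ket{xx}^{\hX X}\ket{w_x}^{AF}
\]
for some vectors $\ket{w_x}$. Set $r_X(x):=\|w_x\|^2$, which is the $X$-law of $\tau_A$ and so is supported on $\supp\mu_X$. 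Put $\ket{\xi_{A,x}}:=\ket{w_x}/\|w_x\|$ when $r_X(x)>0$, and take an arbitrary unit vector otherwise. Then $\ket t$ has the seed form above, and the computation of the previous paragraph shows that the resulting $\ket{\xi_A}$ satisfies \eqref{eq:phase-native-comparison}.

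\emph{The marginal identity.} Start from any density-form representation $\nu_A=\mathcal C_\mu^A(\tau_A)$. Tracing $\hY$ gives $\sum_yK_y^\mu\tau_AK_y^\mu$. Tracing $\hX$ next makes $X$ diagonal, by matching-copy support, so only the diagonal blocks $\proj x^X\otimes\tau_{A,xx}^A$ survive. Each such block is multiplied by $\sum_y\mu(y\mid x)=1$ for $x\in\supp\mu_X$, and this covers every $x$ in the support of the $X$-law. Hence $\nu_A^{XA}=\tau_A^{XA}$; in particular $\tau_A^{XA}$ does not depend on the chosen representation. In purification form, the diagonal block of $\tau_A^{XA}$ at $x$ is $r_X(x)\Tr_F\proj{\xi_{A,x}}=r_X(x)\xi_{A,x}^A$, which gives the displayed formula.

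Bob's characterization follows by the symmetric argument, with the players interchanged and $\mu(x\mid y)$ in place of $\mu(y\mid x)$.
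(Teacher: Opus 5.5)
Your proof is correct and follows the paper's route: you purify $\tau_A$, use matching-copy support to write the purification as $\sum_x\ket{xx}\ket{w_x}$, adjoin the $Y$ registers through $K_y^\mu$, and get the marginal identity from $\sum_y\mu(y\mid x)=1$. The only difference is that you justify $(\Pi_X\otimes I_F)\ket t=\ket t$ by a Schmidt-decomposition argument, which the paper simply asserts, and you present the two directions in the opposite order.
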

\begin{proof}
\emph{From the density form to the purification form.}
Choose a purification $\ket{\eta_A}^{\hX XAF}$ of $\tau_A$.
Matching-copy support lets us write
$\ket{\eta_A}=\sum_x\ket{xx}^{\hX X}\ket{v_x}^{AF}$.
Set $r_X(x):=\|\ket{v_x}\|^2$ and, when $r_X(x)>0$, set
$\ket{\xi_{A,x}}:=\ket{v_x}/\sqrt{r_X(x)}$; choose any unit vector on
zero-weight blocks. Then $r_X$ is the $X$-law of $\tau_A$, and
\[
 \ket{\eta_A}
 =\sum_x\sqrt{r_X(x)}\ket{xx}^{\hX X}\ket{\xi_{A,x}}^{AF}.
\]
Adjoin $\hY,Y$ coherently using the conditional amplitudes
$\sqrt{\mu(y\mid x)}$. This gives the vector in
\eqref{eq:phase-native-comparison}, equivalently
\[
 \ket{\xi_A}
 =\sum_y\ket{yy}^{\hY Y}\otimes(K_y^\mu\otimes I_{\hX AF})\ket{\eta_A}.
\]
The conditional probabilities sum to one, so this vector is normalized.
Tracing $Y$ removes terms with different $y$, while tracing $F$ gives
$\tau_A$. Therefore
\[
 \Tr_{YF}\proj{\xi_A}
 =\sum_y\proj y^{\hY}\otimes K_y^\mu\tau_AK_y^\mu
 =\nu_A.
\]

\emph{From the purification form to the density form.}
Given $r_X$ and the unit vectors in \eqref{eq:phase-native-comparison},
form $\ket{\eta_A}$ by the first display and put
$\tau_A:=\Tr_F\proj{\eta_A}$. Its normalization, matching-copy support,
and $X$-law $r_X$ meet Definition~\ref{def:conditional-comparisons}.
The same partial-trace calculation gives
$\nu_A=\mathcal C_\mu^A(\tau_A)$.

Finally, tracing $\hX$ removes the off-diagonal $x,x'$ blocks, so
\begin{align*}
 \nu_A^{XA}
 &=\sum_{x,y}r_X(x)\mu(y\mid x)\proj x^X\otimes\xi_{A,x}^A
 &&\text{(tracing the question copies)}\\
 &=\sum_xr_X(x)\proj x^X\otimes\xi_{A,x}^A
 =\tau_A^{XA}
 &&\text{($\sum_y\mu(y\mid x)=1$).}
\end{align*}
\end{proof}
In particular, prescribing $\nu_A^{XA}=\psi^{XA}$ fixes $r_X=q_X$.
The joint choice of the vectors $\ket{\xi_{A,x}}$ determines coherent
$x,x'$ blocks. Replacing them by separately chosen canonical
purifications of their $A$ marginals can change $\nu_A$.

\begin{definition}[Local conditional relative entropies]
\label{def:local-losses}
\label{def:main-comparison-loss}
\normalfont
For $\psi$ from Definition~\ref{def:copied-input-family}
and comparisons from Definition~\ref{def:conditional-comparisons},
assume both full divergences below are finite and define
\begin{equation}
 b_A:=D(\psi^{RXA}\|\nu_A)-D(\psi^{XA}\|\nu_A^{XA}),
 \qquad s:=b_A+b_B.
 \label{eq:intro-native-losses}
\end{equation}
Here $b_B$ is defined by replacing $(XA,\nu_A)$ with $(YB,\nu_B)$.
Each quantity is the quantum conditional relative entropy in $R$,
conditioned on the local question and workspace~\cite[Theorem~2]{CLPConditional}.
Equivalently, it is the relative entropy lost when $R$ is traced out.
We use the displayed difference whenever the full divergence is finite,
including when the comparison is singular.
\end{definition}

\begin{definition}[Local preparation error]
\label{def:local-preparation-error}
\normalfont
For the copied-input family of Definition~\ref{def:copied-input-family},
a shared state $\eta$ and local channels $\mathcal A_x,\mathcal B_y$
produce
\[
 \sigma_{xy}^{AB}:=(\mathcal A_x\otimes\mathcal B_y)(\eta).
\]
Define
\begin{equation}
 d_{\rm prep}(q,\psi)
 :=\inf_{\eta,\{\mathcal A_x\},\{\mathcal B_y\}}
       \sum_{xy}q(x,y)
       P^2(\sigma_{xy}^{AB},\psi_{xy}^{AB}).
 \label{eq:intro-operational-preparation}
\end{equation}
The infimum is over finite input-independent shared states $\eta$ and
local channels with designated outputs $A,B$. The resource and all maps
are fixed from the target family before the questions arrive; Alice's
channel uses only $x$, and Bob's only $y$. Their outputs contain the
entire original game workspace. Additional implementation environments
may be traced out. No resource-dimension or circuit-size bound is imposed.
\end{definition}

\subsection{Distances and rejection probabilities}
The following estimates transfer state-preparation error to rejection
probability, including when the question law changes.

\begin{fact}[Data processing and distance inequalities]
\label{fact:master-standard}
Relative entropy $D$ and purified distance $P$ contract under channels
and are invariant under a common isometry
\cite{Lindblad,TCR}, \cite[Chapter~3]{Watrous}. For normalized states,
the same properties hold for the measured relative entropy $\DM$
(Definition~\ref{def:information-quantities}), by pulling measurements back
through the channel. Also $D,\DM\ge0$ and $0\le F,P\le1$. The bound
$D(\rho\|\nu)\le D_{\max}(\rho\|\nu)$ holds for normalized states
\cite[Lemma~10]{Datta}.
The purified distance satisfies the triangle inequality,
\[
 P(\rho,\tau)\le P(\rho,\sigma)+P(\sigma,\tau).
\]
For bipartite unit vectors $\ket u,\ket v$, one also has
$P^2(u,v)\le\|\ket u-\ket v\|_2^2$.
\end{fact}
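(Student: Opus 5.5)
The statement is a collection of standard facts, so I will assemble each from its cited source or from a short derivation rather than prove anything new.

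\emph{Data processing and isometric invariance.} For $D$ I will cite Lindblad's monotonicity under channels; invariance under a common isometry follows because $V\rho V^\dagger$ and $V\nu V^\dagger$ have the same nonzero spectra and eigenvectors transported by $V$. For $P$, I will use monotonicity of root fidelity under channels, via Uhlmann's theorem and the fact that a channel has a Stinespring isometry followed by a partial trace. Fidelity is invariant under isometries and nondecreasing under partial trace, because any purifications of the outputs are purifications of the inputs.

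\emph{Measured relative entropy.} I will argue directly. For a POVM $\{M_j\}$ on the output of a channel $\mathcal N$, the operators $\mathcal N^\dagger(M_j)$ form a POVM on the input, since $\mathcal N^\dagger$ is unital and positive, and
\[
\Tr\mathcal N(\rho)M_j=\Tr\rho\,\mathcal N^\dagger(M_j).
\]
Every value in the output supremum is therefore attained in the input supremum. Isometric invariance follows by applying this to $V$ and to the reversing channel $\sigma\mapsto V^\dagger\sigma V+\Tr[(I-VV^\dagger)\sigma]\,\omega$. Nonnegativity of $D_{\mathbb M}$ follows from classical Gibbs' inequality for normalized states. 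Nonnegativity of $D$ follows from Klein's inequality, or equivalently from data processing under the trace map. The bounds $0\le F\le1$ follow from Uhlmann's theorem and Cauchy--Schwarz, and $0\le P\le1$ is then immediate.

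\emph{The $D\le D_{\max}$ bound.} If $\rho\le2^\lambda\nu$, operator monotonicity of $\log$ gives $\log\rho\le\lambda+\log\nu$ on $\supp\rho$. Hence $D(\rho\|\nu)\le\lambda$, and taking the infimum over $\lambda$ gives the bound; this is Datta's Lemma~10. The only care needed is the support convention, since $\supp\rho\subseteq\supp\nu$ is implied by finiteness of $D_{\max}$.

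\emph{Triangle inequality and the vector bound.} The triangle inequality for $P$ is the main nontrivial item, and I will cite Tomamichel--Colbeck--Renner for it. Their argument shows $P(\rho,\sigma)=\sin A(\rho,\sigma)$, where $A(\rho,\sigma)=\arccos F(\rho,\sigma)$ is the angle metric, and then uses that angles on purifications obey the triangle inequality. For unit vectors, $F(u,v)=|\langle u|v\rangle|$, and $F\le1$ gives
\[
P^2(u,v)=1-|\langle u|v\rangle|^2\le2-2|\langle u|v\rangle|\le2-2\operatorname{Re}\langle u|v\rangle=\|\ket u-\ket v\|_2^2.
\]
This bound uses no bipartite structure. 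The main obstacle is the triangle inequality, which I will import rather than reprove.
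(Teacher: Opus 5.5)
Your proposal takes the same route as the paper. The paper gives no separate proof: it cites Lindblad, Tomamichel--Colbeck--Renner, Watrous and Datta, and remarks that $D_{\mathbb M}$ inherits data processing by pulling POVMs back through the channel. That remark is exactly your $\mathcal N^\dagger(M_j)$ argument. Your reversal channel for isometric invariance and the $D\le D_{\max}$ derivation are fine. So is the vector bound $1-x^2\le 2-2x\le 2-2\operatorname{Re}\langle u|v\rangle$.

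One justification is backwards. In the fidelity step you say fidelity is nondecreasing under partial trace ``because any purifications of the outputs are purifications of the inputs.'' The inclusion runs the other way. A purification of $\rho^{AB}$, with purifying system $BR$, is also a purification of $\rho^A$. A purification of $\rho^A$ need not purify $\rho^{AB}$. That inclusion is what makes Uhlmann's maximum for the marginals range over a larger set, which gives $F(\rho^A,\sigma^A)\ge F(\rho^{AB},\sigma^{AB})$. The inclusion as you stated it would give the opposite inequality. Once you reverse it, the Stinespring-plus-partial-trace argument for contraction of $P$ goes through.
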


\begin{fact}[Isometric dilation {\cite[Section~2.2]{Watrous}}]
\label{fact:isometric-dilation}
Every finite-dimensional channel has an isometric dilation with an
environment that may be retained.
\end{fact}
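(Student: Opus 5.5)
The statement at the end is Fact~\ref{fact:isometric-dilation}: every finite-dimensional channel has an isometric dilation, and its environment may be retained. This is the Stinespring construction, so I would prove it directly from a Kraus representation.

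Let $\mathcal N:L(S)\to L(T)$ be a channel between finite-dimensional systems. Complete positivity gives Kraus operators $K_1,\dots,K_r:S\to T$ with $\mathcal N(\rho)=\sum_jK_j\rho K_j^\dagger$. To obtain them, take the Choi operator $J=\sum_{k,l}\ket k\bra l\otimes\mathcal N(\ket k\bra l)\ge0$. Spectrally decompose it as $J=\sum_j\ket{w_j}\bra{w_j}$ with unnormalized eigenvectors. Then read off each $K_j$ by reshaping $\ket{w_j}$ via $K_j\ket k:=(\bra k\otimes I_T)\ket{w_j}$. Here $r\le\dim S\dim T$, so the environment is finite.

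Trace preservation gives $\sum_jK_j^\dagger K_j=I_S$. Introduce an environment $E$ with orthonormal basis $\{\ket j\}_{j=1}^r$ and define
\[
 V:=\sum_jK_j\otimes\ket j^E:S\to TE.
\]
Then $V^\dagger V=\sum_{j,j'}\langle j'|j\rangle K_{j'}^\dagger K_j=I_S$, so $V$ is an isometry. Moreover
\[
 \Tr_E V\rho V^\dagger=\sum_jK_j\rho K_j^\dagger=\mathcal N(\rho),
\]
because tracing $E$ kills the cross terms. The output $V\rho V^\dagger$ on $TE$ is an honest state, so the environment may be kept rather than discarded. Keeping it is legitimate since the channel's action is recovered by the partial trace. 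When local channels are applied to part of a larger state, $V$ extends as $V\otimes I$, which keeps the construction local to the party applying the channel.

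There is no real obstacle here. The only point needing care is finiteness of $E$, which the Choi-rank bound settles; this matters because the paper requires all systems to be finite dimensional.
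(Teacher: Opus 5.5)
Your Kraus-to-Stinespring construction is correct, and it is essentially the standard argument behind the citation. The paper gives no proof of its own and simply invokes Watrous, Section~2.2, where the isometry $V=\sum_j K_j\otimes\ket{j}^E$ is likewise built from Kraus operators read off a decomposition of the positive Choi operator, with your bound $r\le\dim S\dim T$ keeping the environment finite.
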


\begin{fact}[Norm inequalities and averaging
{\cite[Section~1.1.1 and Theorem~1.14]{Watrous}}]
\label{fact:averaging-inequalities}
Cauchy--Schwarz, Minkowski in $L^2$, and Jensen's inequality hold for
weighted finite sums and probability integrals, whenever the displayed
quantities are finite. In particular,
\[
 \|v+w\|^2\le2\|v\|^2+2\|w\|^2.
\]
The triangle inequality for $P$ similarly gives
$P^2(\rho,\tau)\le2P^2(\rho,\sigma)+2P^2(\sigma,\tau)$.
\end{fact}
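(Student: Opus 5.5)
This Fact is a collection of textbook inequalities, so the plan is to cite the general statements and derive only the two displayed consequences explicitly. For Cauchy--Schwarz, Minkowski in $L^2$, and Jensen's inequality over weighted finite sums, I would invoke \cite[Section~1.1.1 and Theorem~1.14]{Watrous}. A weighted finite sum $\sum_j w_j f_j$ with $w_j\ge0$ is an integral against a finite measure. After normalization, or after rescaling the weights, it becomes an expectation under a probability law, so the finite-sum and probability-integral statements are the same statement. The only care needed is the finiteness proviso. When all displayed quantities are finite, no $\infty-\infty$ or $0\cdot\infty$ conventions arise, and the usual proofs apply verbatim. For Jensen these are the supporting-line argument for a convex function; for Cauchy--Schwarz the nonnegativity of the quadratic $\|v+\lambda w\|^2$ in $\lambda$; and Minkowski follows from Cauchy--Schwarz by expanding $\|f+g\|_2^2$.

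For the first displayed inequality, I would apply the triangle inequality for the Euclidean norm, $\|v+w\|\le\|v\|+\|w\|$, and then the elementary bound $(a+b)^2\le2a^2+2b^2$ for real $a,b\ge0$. The latter follows from
\[
 2a^2+2b^2-(a+b)^2=(a-b)^2\ge0,
\]
or equivalently from Cauchy--Schwarz on the pair $(1,1)\cdot(a,b)$. Squaring the triangle inequality, which is legitimate since both sides are nonnegative, then gives $\|v+w\|^2\le2\|v\|^2+2\|w\|^2$.

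The purified-distance version follows the same template. By Fact~\ref{fact:master-standard}, $P(\rho,\tau)\le P(\rho,\sigma)+P(\sigma,\tau)$ for normalized states, and all three quantities are nonnegative. I would square this inequality and apply $(a+b)^2\le2a^2+2b^2$ with $a=P(\rho,\sigma)$ and $b=P(\sigma,\tau)$, which yields $P^2(\rho,\tau)\le2P^2(\rho,\sigma)+2P^2(\sigma,\tau)$.

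No step is a genuine obstacle. The only point to state carefully is that the purified-distance triangle inequality is imported from Fact~\ref{fact:master-standard} \cite{TCR} rather than reproved here.
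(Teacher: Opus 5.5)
Your proposal is correct and takes the same approach as the paper. The paper cites Watrous for the general inequalities and obtains both displayed bounds as you do: square the triangle inequality (for the norm, and for $P$ from Fact~\ref{fact:master-standard}), then apply $(a+b)^2\le 2a^2+2b^2$; your note that Jensen requires normalized weights, while Cauchy--Schwarz and Minkowski do not, covers the only point needing care.
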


\begin{fact}[Divergence and fidelity inequalities
{\cite[Eq.~(66) and preceding paragraph]{SBT};
\cite[Theorem~5.38]{Watrous}}]
\label{fact:divergence-fidelity}
For normalized states,
\begin{align*}
 F^2&\ge2^{-\DM}
 &&\text{(measured divergence--fidelity bound)}\\
 &\ge2^{-D}
 &&\text{(data processing for measurements).}
\end{align*}
In particular, $P^2\le1-2^{-D}\le(\ln2)D$.
\end{fact}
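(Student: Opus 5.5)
The plan is to reduce the quantum statement to a classical one through a fidelity-achieving measurement, apply Jensen's inequality to the classical Bhattacharyya coefficient, and then pass to relative entropy and purified distance by data processing and an elementary exponential bound. Throughout, $\rho,\nu$ are normalized. If $\DM(\rho\|\nu)=+\infty$ or $D(\rho\|\nu)=+\infty$, the corresponding lower bound is $0$ and holds trivially, so assume finiteness where needed.

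\emph{Step 1: fidelity as a minimum over measurements.} I would invoke the Fuchs--Caves characterization (as in \cite[Theorem~5.38]{Watrous}):
\[
 F(\rho,\nu)=\min_{\{M_j\}}\sum_j\sqrt{p_jq_j},\qquad p_j:=\Tr\rho M_j,\quad q_j:=\Tr\nu M_j,
\]
where the minimum is attained by a finite projective measurement. Fix this optimal finite POVM. Only the inequality ``$F\ge\sum_j\sqrt{p_jq_j}$ for this particular POVM'' is needed, and it holds with equality at the minimizer.

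\emph{Step 2: the classical Jensen bound.} Let $S:=\{j:p_j>0\}$. Dropping the terms with $p_j=0$, concavity of $\log$ under the law $p$ gives
\[
 \log\sum_{j\in S}p_j\sqrt{q_j/p_j}\ \ge\ \tfrac12\sum_{j\in S}p_j\log\frac{q_j}{p_j}\ =\ -\tfrac12D(p\|q).
\]
Hence $F\ge 2^{-D(p\|q)/2}$. The left side is finite because $q_j>0$ on $S$ whenever $D(p\|q)<\infty$; otherwise the bound is trivial. By the definition of $\DM$ as a supremum over finite POVMs, $D(p\|q)\le\DM(\rho\|\nu)$, and squaring gives $F^2\ge2^{-\DM}$.

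\emph{Step 3: from measured to ordinary relative entropy.} A finite POVM defines a channel $\rho\mapsto\sum_j(\Tr\rho M_j)\proj j$. The data processing of Fact~\ref{fact:master-standard} then gives $D(p\|q)\le D(\rho\|\nu)$ for every POVM. Taking the supremum yields $\DM\le D$, and therefore $2^{-\DM}\ge2^{-D}$.

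\emph{Step 4: purified distance.} Since $P^2=1-F^2$, the previous steps give $P^2\le1-2^{-D}$. Writing $1-2^{-D}=1-e^{-D\ln2}$ and using $1-e^{-u}\le u$ for $u\ge0$ gives $P^2\le(\ln2)D$.

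The only nontrivial input is the attainment statement in Step 1, namely that the minimizing measurement is finite and exact. That is exactly the cited theorem, so I expect no genuine obstacle; the rest is the handling of the zero-probability conventions in Step 2.
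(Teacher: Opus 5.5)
Your proof is correct and follows essentially the route behind the cited sources. The Fuchs--Caves characterization reduces fidelity to a classical Bhattacharyya coefficient, and your Jensen step is the classical comparison of the order-$\tfrac12$ and order-$1$ R\'enyi divergences that underlies $F^2\ge2^{-D_{\mathbb M}}$; data processing through the measurement channel and $1-e^{-u}\le u$ then finish it. Attainment of the minimum is not actually needed. Every finite POVM has Bhattacharyya coefficient at least $2^{-D_{\mathbb M}/2}$, and $F$ is the infimum of these coefficients.
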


\begin{lemma}[Square-root rejection and changes of question law,
adapted from {\cite[Lemma~4.5]{Song}}]
\label{lem:square-root-rejection}
\label{fact:square-root-rejection}
Let $\rho,\sigma$ be density operators and $0\le N\le I$ a rejecting
effect. Put $\alpha:=\Tr N\sigma$, $\beta:=\Tr N\rho$, and
$\delta:=P^2(\rho,\sigma)$. Then
\[
 |\sqrt\beta-\sqrt\alpha|\le\sqrt\delta.
\]
For a finite family indexed by $z$, let $p,q$ be laws on the labels and put
\[
 a_q:=\mathbb E_q\alpha_z,\qquad
 e_q:=\mathbb E_q\delta_z,\qquad
 r_q:=\mathbb E_q\beta_z,\qquad r_p:=\mathbb E_p\beta_z.
\]
The states, effects, and spaces may depend on $z$, but are fixed when
changing the law. Then
\begin{equation}
 \sqrt{r_q}\le\sqrt{a_q}+\sqrt{e_q},
 \label{eq:mean-square-root-rejection}
\end{equation}
and, regarding $p,q$ as diagonal states,
\begin{equation}
 \boxed{\sqrt{r_p}\le\sqrt{a_q}+\sqrt{e_q}+P(p,q).}
 \label{eq:rejection-change-law}
\end{equation}
If $D(q\|p)<\infty$, the last term is at most
$\sqrt{(\ln2)D(q\|p)}$.
\end{lemma}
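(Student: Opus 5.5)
The single-state estimate comes first. Let $F=F(\rho,\sigma)$. I would use the fact that $0\le N\le I$ defines a two-outcome measurement $\{N,I-N\}$. By data processing for fidelity (Fact~\ref{fact:master-standard}), the outcome laws $(\beta,1-\beta)$ and $(\alpha,1-\alpha)$ have classical fidelity at least $F$:
\[
 \sqrt{\alpha\beta}+\sqrt{(1-\alpha)(1-\beta)}\ge F=\sqrt{1-\delta}.
\]
This Bhattacharyya coefficient equals $\cos(\theta_\alpha-\theta_\beta)$, where $\sqrt\alpha=\sin\theta_\alpha$, $\sqrt\beta=\sin\theta_\beta$ and the angles lie in $[0,\pi/2]$. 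Hence $|\theta_\alpha-\theta_\beta|\le\arcsin\sqrt\delta$. Since $\sin$ is $1$-Lipschitz and concave, we get $|\sin\theta_\beta-\sin\theta_\alpha|\le\sin|\theta_\beta-\theta_\alpha|\le\sqrt\delta$. The middle inequality uses $|\sin u-\sin v|\le\sin|u-v|$ for $u,v\in[0,\pi/2]$, which follows from the sum-to-product formula $2\cos\frac{u+v}2\sin\frac{|u-v|}2\le 2\sin\frac{|u-v|}2\cos\frac{|u-v|}2$.

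For the averaged form \eqref{eq:mean-square-root-rejection}, apply the pointwise bound to get $\sqrt{\beta_z}\le\sqrt{\alpha_z}+\sqrt{\delta_z}$, then take the $L^2(q)$ norm. Minkowski (Fact~\ref{fact:averaging-inequalities}) gives
\[
 \sqrt{\mathbb E_q\beta_z}\le\sqrt{\mathbb E_q\alpha_z}+\sqrt{\mathbb E_q\delta_z}.
\]

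For the change of law \eqref{eq:rejection-change-law}, I would lift to a classical--quantum picture. Take the block-diagonal states $\hat\rho_p=\sum_zp(z)\proj z\otimes\rho_z$ and $\hat\rho_q=\sum_zq(z)\proj z\otimes\rho_z$, padded into a common space, and the effect $\hat N=\sum_z\proj z\otimes N_z$. Then $\Tr\hat N\hat\rho_p=r_p$ and $\Tr\hat N\hat\rho_q=r_q$. The single-state bound gives $\sqrt{r_p}\le\sqrt{r_q}+P(\hat\rho_p,\hat\rho_q)$. The fidelity of these block-diagonal states with identical blocks is $\sum_z\sqrt{p(z)q(z)}$, so $P(\hat\rho_p,\hat\rho_q)=P(p,q)$. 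Combining this with \eqref{eq:mean-square-root-rejection} yields the boxed inequality. The last claim follows from Fact~\ref{fact:divergence-fidelity}, which gives $P^2(p,q)\le(\ln2)D(q\|p)$; purified distance is symmetric, so the order of arguments does not matter.

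The only step needing care is the scalar angle inequality. Everything else is data processing and Minkowski. I would also check the degenerate cases $\alpha,\beta\in\{0,1\}$ and zero-probability labels, which the angle parametrization handles.
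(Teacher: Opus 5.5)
Your proof is correct and follows the paper's argument. The pointwise bound uses the same two-outcome measurement $\{N,I-N\}$, the same angle parametrization and the same inequality $|\sin a-\sin b|\le\sin|a-b|$; the average uses Minkowski; and the change of law is reduced to that pointwise bound. The only cosmetic difference is that you lift to block-diagonal classical--quantum states and compute $P(\hat\rho_p,\hat\rho_q)=P(p,q)$ exactly by block fidelity, whereas the paper pushes $p,q$ through the fixed channel $z\mapsto(\beta_z,1-\beta_z)$ and uses data processing. Also, your remark that $\sin$ is $1$-Lipschitz and concave is not what justifies the scalar step; the sum-to-product argument you then give is.
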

\begin{proof}
The pointwise inequality implies \eqref{eq:mean-square-root-rejection}
by Minkowski. Apply that same
pointwise inequality to the Bernoulli outputs of the fixed channel
$\mathcal K:z\mapsto(\beta_z,1-\beta_z)$ to obtain
\[
 |\sqrt{r_p}-\sqrt{r_q}|
 \le P(\mathcal K(p),\mathcal K(q))\le P(p,q).
\]
The second inequality is data processing (Fact~\ref{fact:master-standard}).
This proves \eqref{eq:rejection-change-law}; the divergence consequence
is Fact~\ref{fact:divergence-fidelity}.

It remains to prove the pointwise inequality. Apply $\{N,I-N\}$ and
write $\alpha=\sin^2a$, $\beta=\sin^2b$, with $a,b\in[0,\pi/2]$.
The output purified distance is
$\sin|a-b|\ge|\sin a-\sin b|$. Data processing bounds it by
$P(\rho,\sigma)=\sqrt\delta$.
\end{proof}

\begin{fact}[Block fidelity {\cite[Section~3.2]{Watrous}}]
\label{fact:block-fidelity}
\label{fact:block-fidelity-hellinger}
For probability distributions $p,q$ and density operators
$\rho_j,\sigma_j$,
\begin{equation}
 F\!\left(\bigoplus_j p_j\rho_j,\bigoplus_j q_j\sigma_j\right)
 =\sum_j\sqrt{p_jq_j}\,F(\rho_j,\sigma_j).
 \label{eq:classical-block-fidelity}
\end{equation}
Zero-probability blocks may be completed arbitrarily.
\end{fact}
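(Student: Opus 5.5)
The identity follows from computing the trace norm block by block: every operator in the definition $F(\rho,\nu)=\|\sqrt\rho\sqrt\nu\|_1$ (Definition~\ref{def:information-quantities}) stays block diagonal. Write $\rho:=\bigoplus_jp_j\rho_j$ and $\sigma:=\bigoplus_jq_j\sigma_j$, both acting on $\bigoplus_j\mathcal H_j$.

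First, the square root of a positive block-diagonal operator is the block-diagonal operator of the blockwise square roots. Indeed, $\bigoplus_j\sqrt{p_j\rho_j}$ is positive and squares to $\rho$, and the positive square root is unique. For scalars $p_j\ge0$ we have $\sqrt{p_j\rho_j}=\sqrt{p_j}\sqrt{\rho_j}$. Hence
\[
 \sqrt\rho\sqrt\sigma=\bigoplus_j\sqrt{p_jq_j}\,\sqrt{\rho_j}\sqrt{\sigma_j}.
\]

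Second, the trace norm is additive over orthogonal direct sums. The singular values of a block-diagonal operator are the union of the singular values of its blocks; equivalently, $|M|=\sqrt{M^\dagger M}$ is computed blockwise. Homogeneity of the norm then gives
\[
 \|\sqrt\rho\sqrt\sigma\|_1=\sum_j\sqrt{p_jq_j}\,\|\sqrt{\rho_j}\sqrt{\sigma_j}\|_1=\sum_j\sqrt{p_jq_j}\,F(\rho_j,\sigma_j).
\]
This is \eqref{eq:classical-block-fidelity}.

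For the remark on zero-probability blocks, suppose $p_j=0$ or $q_j=0$. Then the $j$th block of $\sqrt\rho\sqrt\sigma$ vanishes whatever $\rho_j,\sigma_j$ are. Its coefficient $\sqrt{p_jq_j}$ also vanishes, and $F(\rho_j,\sigma_j)\le1$ for any completion, so both sides are unaffected. No step is a real obstacle; the only point to state carefully is the uniqueness of the positive square root, which justifies the blockwise formula.
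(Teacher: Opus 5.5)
Your proof is correct: the positive square root is computed blockwise, the trace norm is additive over orthogonal direct sums, and the zero-weight blocks drop out as you say. The paper gives no proof of its own and only cites Watrous, Section~3.2; your blockwise computation of $\|\sqrt\rho\sqrt\sigma\|_1$ is the standard argument for this identity.
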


\subsection{Canonical purifications}
\label{sec:canonical-sampling-tools}
The sampler requires nearby vectors in one fixed coordinate system.
Canonical purifications turn local density operators into such vectors;
Uhlmann's theorem supplies the corrections back to the desired target.
\begin{definition}[Canonical purification]
\label{def:canonical-purification}
\normalfont
Fix a reference basis and use the vectorization convention
\[
 \vket{M}^{AB}:=\sum_j\ket j^A\otimes M\ket j^B.
\]
For a density operator $\gamma=\sum_i\lambda_i\ket{a_i}\bra{a_i}$
on $B$, its canonical purification is
\[
 \vket{\sqrt\gamma}^{AB}=\sum_i\sqrt{\lambda_i}\,
 \ket{\overline{a_i}}_A\ket{a_i}_B.
\]
Here $\ket{\overline{a_i}}$ is the entrywise complex conjugate of
$\ket{a_i}$ in the fixed reference basis.
This vector is independent of the eigenbasis in degenerate eigenspaces.
The factor order fixes the convention: the second marginal is $\gamma$
and the first is $\gamma^T$.
\end{definition}

\begin{fact}[Uhlmann's theorem and purification equivalence
{\cite{Uhlmann}; \cite[Theorem~2.12]{Watrous}}]
\label{fact:local-transitions}
\label{fact:known-purification-correction}
Let $\ket{r}^{AB}$ and $\ket{s}^{AC}$ purify
$\rho^A$ and $\sigma^A$. After finite padding of the purifying systems,
Uhlmann's theorem supplies an isometry $U:C\to B$ such that
\[
 \bra{r}(I_A\otimes U)\ket{s}=F(\rho^A,\sigma^A)\ge0.
\]
For classical reference blocks the isometries may be selected separately.
In particular, if $\ket{\psi}^{AB}$ and $\vket{\sqrt\rho}^{A_0B}$
purify the same marginal $\rho^B$, there is an Alice-side isometry
$J_\psi:A_0\to A$, after padding, with
\[
 (J_\psi\otimes I_B)\vket{\sqrt\rho}=\ket{\psi}.
\]
Only the correcting party must know the desired purification. For every
approximate preparation $\sigma^{A_0B}$, isometry invariance gives
\[
 P\bigl((J_\psi\otimes I_B)\sigma(J_\psi^\dagger\otimes I_B),\psi\bigr)
 =P\bigl(\sigma,\vket{\sqrt\rho}\vbra{\sqrt\rho}\bigr).
\]
The symmetric statement holds for a Bob-side correction, with the
canonical purification's factor order interchanged.
\end{fact}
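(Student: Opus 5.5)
The plan is to reduce everything to the vectorization identity of Definition~\ref{def:canonical-purification} and a polar decomposition. First I would pad $B$ and $C$ with pure ancillas so that $\dim C\le\dim B$ and both are at least $\dim A$. In the fixed basis of $A$, every vector on $AB$ can be written as $\ket{r}=\vket{M}$ for an operator $M:A\to B$, and then $\rho^A=(M^\dagger M)^T$; similarly $\ket{s}=\vket{N}$ with $\sigma^A=(N^\dagger N)^T$. The polar decompositions give $M=V_r\sqrt{\rho^{T}}$ and $N=V_s\sqrt{\sigma^{T}}$, where $V_r,V_s$ are partial isometries that can be extended to isometries $A\to B$ and $A\to C$ after padding.

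For an isometry $U:C\to B$ we have $(I_A\otimes U)\vket{N}=\vket{UN}$. Hence
\[
\bra{r}(I\otimes U)\ket{s}=\Tr M^\dagger UN=\Tr\sqrt{\rho^T}\,V_r^\dagger UV_s\sqrt{\sigma^T}.
\]
The operator $W:=V_r^\dagger UV_s$ ranges over all contractions as $U$ varies. So the maximum modulus is $\|\sqrt{\sigma^T}\sqrt{\rho^T}\|_1=\|\sqrt\rho\sqrt\sigma\|_1=F(\rho^A,\sigma^A)$, since transposition preserves singular values. The maximum is attained by choosing $W$ to be the adjoint unitary factor in the polar decomposition of $\sqrt{\sigma^T}\sqrt{\rho^T}$. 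That unitary lifts to an isometry $U:C\to B$ satisfying $V_r^\dagger UV_s=W$ on the relevant supports, after padding $B$ so that the complements fit. Its phase is fixed so that the overlap is real and nonnegative.

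For classical reference blocks $A=\bigoplus_k A_k$ with block-diagonal states, I would run this argument blockwise and take the direct sum of the resulting isometries. Fact~\ref{fact:block-fidelity} then confirms that this choice is still optimal.

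The purification-equivalence clause is the equality case $\rho=\sigma$. The bound above gives overlap $1$ between two unit vectors, so $(J_\psi\otimes I_B)\vket{\sqrt\rho}=\ket{\psi}$ exactly, with the roles of the two factors exchanged to match the factor order of the canonical purification. Concretely, $\ket\psi=\vket{M}$ with $M^\dagger M$ determined by $\rho^B$, and $J_\psi$ is the isometric extension of the polar factor. Only the correcting party's system is touched, so only that party needs to know $\psi$.

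The final displayed identity follows from the invariance of $P$ under a common isometry (Fact~\ref{fact:master-standard}), applied to $\sigma$ and $\vket{\sqrt\rho}\vbra{\sqrt\rho}$ with the isometry $J_\psi\otimes I_B$. The Bob-side statement is the same argument with the tensor factors swapped.

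I expect the main obstacle to be bookkeeping rather than anything conceptual. Zero eigenvalues and unequal dimensions require careful padding so that the partial isometries from the polar decompositions extend to genuine isometries with the claimed domain and codomain. The transpose conventions also need to be tracked consistently throughout.
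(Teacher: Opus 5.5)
Your argument is correct. It is the standard vectorization and polar-decomposition proof of Uhlmann's theorem behind the references the paper cites; the paper itself states this as a Fact without proof. The blockwise, purification-equivalence, and isometry-invariance clauses follow as you describe.

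One small overstatement: padding only to $\dim C\le\dim B$ does not make $W=V_r^\dagger UV_s$ range over all contractions, since when $\dim A=\dim B=\dim C$ it is always unitary. This is harmless, because you only need $\|W\|_\infty\le1$ for the upper bound together with your lifting of the optimal unitary. In the blockwise case, the direct sum of isometries is well defined because block-diagonal marginals force distinct blocks of each purification to have orthogonal ranges in the purifying system.
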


\begin{fact}[Canonical vector discrepancy
{\cite{Raggio1982}; \cite[Eq.~(4.2)]{LiangFidelity}}]
\label{fact:canonical-sampling-discrepancy}
For density operators $\gamma,\beta$ in the common vectorization
convention of Definition~\ref{def:canonical-purification},
\begin{align}
 \|\vket{\sqrt\gamma}-\vket{\sqrt\beta}\|_2^2
  &=2(1-\Tr\sqrt\gamma\sqrt\beta)
  &&\text{(fixed vectorization)}\notag\\
  &\le2P^2(\gamma,\beta)
  &&\text{(affinity--fidelity inequality).}
 \label{eq:canonical-sampling-discrepancy}
\end{align}
Thus a mean squared purified-distance bound $e$ for local density
descriptions supplies a mean squared vector discrepancy at most $2e$.
\end{fact}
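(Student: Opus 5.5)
The plan is to compute the squared norm directly from the vectorization convention, which gives the equality. The inequality then reduces to the bound $F^2\le\Tr\sqrt\gamma\sqrt\beta$ between root fidelity and affinity, which I would prove with Hölder's inequality for Schatten norms.

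\emph{Equality.} With $\vket{M}=\sum_j\ket j\otimes M\ket j$, the inner product is
\[
 \langle\!\langle M|N\rangle\!\rangle=\sum_j\bra jM^\dagger N\ket j=\Tr M^\dagger N.
\]
Take $M=\sqrt\gamma$ and $N=\sqrt\beta$, both Hermitian and positive semidefinite. Then $\|\vket{\sqrt\gamma}\|^2=\Tr\gamma=1$, and likewise $\|\vket{\sqrt\beta}\|^2=1$. The cross term is $\Tr\sqrt\gamma\sqrt\beta$. It is real and nonnegative, since it equals
\[
 \Tr\bigl(\gamma^{1/4}\beta^{1/4}\bigr)^\dagger\bigl(\gamma^{1/4}\beta^{1/4}\bigr)\ge0.
\]
Expanding $\|\vket{\sqrt\gamma}-\vket{\sqrt\beta}\|_2^2=1+1-2\,\mathrm{Re}\Tr\sqrt\gamma\sqrt\beta$ gives the displayed identity. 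The eigenbasis-independence remark is immaterial here, because $\vket{\sqrt\gamma}$ depends only on the operator $\sqrt\gamma$.

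\emph{Inequality.} Since $2P^2=2(1-F^2)$, it suffices to show $F(\gamma,\beta)^2\le\Tr\sqrt\gamma\sqrt\beta$. Factor
\[
 \sqrt\gamma\sqrt\beta=\gamma^{1/4}\,M\,\beta^{1/4},\qquad M:=\gamma^{1/4}\beta^{1/4}.
\]
Apply Hölder's inequality with exponents $\tfrac14+\tfrac12+\tfrac14=1$:
\[
 F=\|\sqrt\gamma\sqrt\beta\|_1\le\|\gamma^{1/4}\|_4\,\|M\|_2\,\|\beta^{1/4}\|_4 .
\]
Here $\|\gamma^{1/4}\|_4=(\Tr\gamma)^{1/4}=1$, and similarly $\|\beta^{1/4}\|_4=1$. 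Also
\[
 \|M\|_2^2=\Tr\beta^{1/4}\gamma^{1/2}\beta^{1/4}=\Tr\sqrt\gamma\sqrt\beta.
\]
Hence $F^2\le\Tr\sqrt\gamma\sqrt\beta$, and so $2(1-\Tr\sqrt\gamma\sqrt\beta)\le2(1-F^2)=2P^2(\gamma,\beta)$.

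\emph{Averaged form.} The final sentence of the statement follows by applying the pointwise bound to each pair of local descriptions and averaging with the same weights. Expectation is linear, so the mean squared vector discrepancy is at most twice the mean squared purified distance, namely $2e$.

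The only step that needs care is choosing the Hölder splitting so that the middle factor's Hilbert--Schmidt norm is exactly the affinity. Everything else is bookkeeping with the fixed vectorization.
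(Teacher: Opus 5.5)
Your proposal is correct and follows the paper's route: expanding $\langle\!\langle\sqrt\gamma|\sqrt\beta\rangle\!\rangle=\Tr\sqrt\gamma\sqrt\beta$ in the fixed vectorization gives the equality, and the affinity--fidelity inequality $F(\gamma,\beta)^2\le\Tr\sqrt\gamma\sqrt\beta$ gives the bound. The paper cites that inequality (Raggio; Liang et al.), whereas you prove it yourself by H\"older's inequality with exponents $(4,2,4)$ on the splitting $\gamma^{1/4}\cdot(\gamma^{1/4}\beta^{1/4})\cdot\beta^{1/4}$, which is a valid self-contained argument.
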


This follows from the affinity--fidelity inequality
$F(\gamma,\beta)^2\le\Tr\sqrt\gamma\sqrt\beta$ in the fixed
vectorization convention. No input-pair-dependent phase is selected.

\subsection{Classical and classical--quantum information}
The following identities will account separately for conditioning on
checked answers and for changing the question distribution.

\begin{fact}[Classical and classical--quantum information identities
{\cite[Section~2.5]{CoverThomas}; \cite[Lemma~4.2]{Holenstein}}]
\label{fact:classical-kl-identities}
For a finite classical register $Z$, one has
$0\le H(Z\mid H)\le\log|\mathcal Z|$.
For finite probability distributions with the displayed divergences finite,
\begin{align*}
 D(p^{UV}\|q^{UV})
 &=D(p^U\|q^U)+\E_{p_U}D(p^{V|U}\|q^{V|U}),\\
 D(p^{Z_1\cdots Z_m}\|\textstyle\prod_j q_j)
 &=D(p^{Z_1\cdots Z_m}\|\textstyle\prod_j p^{Z_j})
          +\sum_jD(p^{Z_j}\|q_j).
\end{align*}
In particular the second expression is at least the sum of its
marginal divergences.
For density operators $\rho_h,\sigma_h$, the direct-sum identity is
\cite[Section~5.2]{Watrous}
\[
 D\!\left(\bigoplus_h p_h\rho_h\,\middle\|\,
               \bigoplus_h q_h\sigma_h\right)
 =D(p\|q)+\sum_h p_hD(\rho_h\|\sigma_h).
\]
For a classical--quantum state and its conditional averages,
\[
 \zeta^{XYF}=\sum_{x,y}q(x,y)\proj{xy}\otimes\zeta_{xy}^F,
 \qquad \beta_y:=\sum_xq(x\mid y)\zeta_{xy},
\]
we have
\[
 I(X:F\mid Y)_\zeta
 =\sum_{x,y}q(x,y)D(\zeta_{xy}\|\beta_y).
\]
All these formulas remain valid after conditioning on a fixed
classical record and averaging with the actual target distribution.
\end{fact}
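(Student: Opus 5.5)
The plan is to verify each identity directly from the definitions in Definition~\ref{def:information-quantities}, using only that logarithms of block-diagonal (classical--quantum) operators split blockwise. I would start with the entropy bound. For classical $Z$ and arbitrary side system $H$, write the joint state as $\sum_z p(z)\proj z\otimes\rho_z^H$. Its entropy is $H(p)+\sum_zp(z)H(\rho_z)$, so
\[
 H(Z\mid H)=H(p)+\sum_zp(z)H(\rho_z)-H\Bigl(\sum_zp(z)\rho_z\Bigr).
\]
Concavity of von Neumann entropy gives $H(\sum_z p(z)\rho_z)\ge\sum_zp(z)H(\rho_z)$, hence $H(Z\mid H)\le H(p)\le\log|\mathcal Z|$. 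The lower bound $H(Z\mid H)\ge0$ is the statement that the Holevo quantity $\chi=H(\sum_z p(z)\rho_z)-\sum_z p(z)H(\rho_z)$ is at most $H(p)$. That is a standard Holevo-bound fact, which I would cite rather than reprove.

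Next I would prove the direct-sum identity, since the classical ones are special cases of it. Both operators are block diagonal, so $\log$ acts blockwise. Expanding $\Tr\rho(\log\rho-\log\nu)$ block by block gives
\[
 \sum_h p_h\bigl[\log p_h-\log q_h+D(\rho_h\|\sigma_h)\bigr]=D(p\|q)+\sum_hp_hD(\rho_h\|\sigma_h),
\]
with the support conventions (finiteness assumed) handling zero blocks. Taking one-dimensional blocks indexed by $u$ with contents $p^{V|U=u}$ and $q^{V|U=u}$ yields the KL chain rule. The product formula follows by expanding the logarithm:
\[
 \log\frac{p}{\prod_j q_j}=\log\frac{p}{\prod_jp^{Z_j}}+\sum_j\log\frac{p^{Z_j}}{q_j},
\]
and then taking the expectation under $p$, where each summand in the second part has expectation $D(p^{Z_j}\|q_j)$. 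Nonnegativity of the first term then gives the claimed lower bound by the sum of marginal divergences.

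For the conditional mutual information formula, I would first apply the direct-sum decomposition over the classical $Y$, which reduces the claim to a fixed $y$. At fixed $y$, the cq state $\sum_xq(x\mid y)\proj x\otimes\zeta_{xy}$ has entropy $H(X\mid Y=y)+\sum_xq(x\mid y)H(\zeta_{xy})$. Substituting into the definition of $I(X:F\mid Y)$ leaves the Holevo quantity
\[
 H(\beta_y)-\sum_xq(x\mid y)H(\zeta_{xy}).
\]
The standard identity $\sum_xq(x\mid y)D(\zeta_{xy}\|\beta_y)$ equals this quantity, because $\sum_xq(x\mid y)\zeta_{xy}=\beta_y$ turns the cross term $-\sum_xq(x\mid y)\Tr\zeta_{xy}\log\beta_y$ into $-\Tr\beta_y\log\beta_y=H(\beta_y)$. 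Averaging over $y$ gives the formula. The final sentence, about conditioning on a fixed classical record, follows by applying everything inside each record block and averaging.

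The only step needing care is the support and zero-probability bookkeeping in the block expansions. That is the mild obstacle, and it is handled by the standing finiteness assumption together with the convention $0\log0=0$.
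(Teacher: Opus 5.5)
Your proposal is correct. The paper gives no proof of this Fact and only cites Cover--Thomas, Holenstein, and Watrous; your blockwise expansion of $\Tr\rho(\log\rho-\log\nu)$ and of the classical--quantum entropies is the standard argument behind those citations.

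A few phrasings should be tightened. In the chain rule, the blocks indexed by $u$ are the diagonal matrices $p^{V\mid U=u}$ and $q^{V\mid U=u}$, not one-dimensional blocks. The reduction over $Y$ uses the classical--quantum entropy decomposition, in which the $H(q_Y)$ terms cancel, rather than the relative-entropy direct-sum identity. Finally, finiteness of $D(\zeta_{xy}\|\beta_y)$ is automatic from $\beta_y\ge q(x\mid y)\zeta_{xy}$.
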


\begin{fact}[Postselection and product marginals
{\cite[Section~4, especially Lemma~4.2]{Holenstein}}]
\label{fact:postselection}
If \(Q=P(\,\cdot\mid W)\) and \(P(W)=p>0\), then for every random variable
\(Z\),
\[
 D(Q_Z\|P_Z)\le \log(1/p).
\]
If \(P_{Z_1\cdots Z_t}=\prod_jp_j\), then
\begin{align}
 \sum_{j=1}^tD(Q_{Z_j}\|p_j)
 &\le D(Q_{Z_1\cdots Z_t}\|\textstyle\prod_jp_j)
 &&\text{(product chain rule)}\notag\\
 &\le\log(1/p)
 &&\text{(postselection).}
 \label{eq:product-postselection}
\end{align}
\end{fact}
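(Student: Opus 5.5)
The plan is to prove the two claims of Fact~\ref{fact:postselection} in turn: first the single-variable bound, then the product chain rule.

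\emph{Single-variable bound.} I will argue classically and pointwise. Since $Q=P(\,\cdot\mid W)$, for every outcome $z$ with $Q_Z(z)>0$ we have
\[
 Q_Z(z)=\frac{P(Z=z,W)}{p}\le\frac{P_Z(z)}{p}.
\]
Thus $Q_Z\le 2^{\log(1/p)}P_Z$ as diagonal operators, and so $D_{\max}(Q_Z\|P_Z)\le\log(1/p)$. The bound $D\le D_{\max}$ from Fact~\ref{fact:master-standard} then gives $D(Q_Z\|P_Z)\le\log(1/p)$. Alternatively, one can write $D(Q_Z\|P_Z)=\mathbb E_{Q_Z}\log\bigl(Q_Z(z)/P_Z(z)\bigr)$ and bound each logarithm by $\log(1/p)$. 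Either route also shows the divergence is finite, because $\supp Q_Z\subseteq\supp P_Z$.

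\emph{Product chain rule.} I will take $Z=(Z_1,\dots,Z_t)$ and use the hypothesis $P_Z=\prod_jp_j$. The second identity in Fact~\ref{fact:classical-kl-identities} gives
\[
 D\bigl(Q_Z\big\|\textstyle\prod_jp_j\bigr)
 =D\bigl(Q_Z\big\|\textstyle\prod_jQ_{Z_j}\bigr)+\sum_jD(Q_{Z_j}\|p_j).
\]
The first term on the right is nonnegative, which yields the first inequality of \eqref{eq:product-postselection}. The required finiteness holds: each $D(Q_{Z_j}\|p_j)\le\log(1/p)$ by the single-variable bound applied to $Z_j$, and the joint divergence is finite by the same bound applied to $Z$. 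Applying the single-variable bound to the joint variable $Z$ then gives the second inequality.

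No step is a real obstacle. The only points needing care are the support and finiteness conditions required for the decomposition identity, and both are settled by the pointwise domination $Q\le P/p$.
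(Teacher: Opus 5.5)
Your proposal is correct and is the standard argument behind the cited Holenstein result; the paper states this Fact without proving it. The pointwise domination $Q_Z\le P_Z/p$ gives the single-variable bound, and the chain rule follows from the product-reference decomposition of Fact~\ref{fact:classical-kl-identities}, using $P_{Z_j}=p_j$ and dropping the nonnegative term $D(Q_Z\|\prod_jQ_{Z_j})$ (which is always finite, because $Q_Z$ is supported in the product of its marginals' supports).
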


\begin{fact}[Rao's concentration bound, conditional-success form
{\cite[Section~8, Lemma~31 and the subsequent counting argument]{RaoConcentration}}]
\label{fact:rao-concentration}
Let $W_1,\ldots,W_n$ be events on one probability space, and put
\[
 W_C:=\bigcap_{i\in C}W_i,\qquad p_C:=\Pr[W_C],\qquad
 N:=\sum_{i=1}^n\mathbf1_{W_i}.
\]
Fix $L\ge1$, $0\le v<1$, $0<\delta\le1-v$, and $0< t_{\rm th}\le\delta/4$.
Suppose that every proper $C\subsetneq[n]$ with $p_C>0$ satisfies
\begin{equation}
 \frac{|C|L+\log(1/p_C)}{n-|C|}\le t_{\rm th}
 \quad\Longrightarrow\quad
 \frac1{n-|C|}\sum_{i\notin C}\Pr[W_i\mid W_C]
 \le v+\frac{\delta}{4}.
 \label{eq:rao-conditional-hypothesis}
\end{equation}
One may take $K_{\rm R}=2$ and $c_{\rm R}=1/12$ in
\begin{equation}
 \Pr\!\left[N\ge\left\lceil(v+\delta)n\right\rceil\right]
 \le K_{\rm R}\exp\!\left[-c_{\rm R}\frac{ t_{\rm th}\delta n}{L}\right].
 \label{eq:rao-concentration}
\end{equation}
No independence of the events is assumed.
\end{fact}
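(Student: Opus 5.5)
The plan follows Rao's random-subset counting argument. Put $k:=\lceil(v+\delta)n\rceil$, fix an integer $s\ge1$ of order $t_{\rm th}n/L$ (to be tuned), and choose coordinates $i_1,\dots,i_s$ sequentially, each uniformly among the coordinates not yet chosen. Write $C_j:=\{i_1,\dots,i_j\}$. The argument compares two estimates for $\mathbb E\,p_{C_s}$, where the expectation is over the random choice of coordinates.

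\emph{Lower bound.} On the event $N\ge k$, a uniformly random $s$-subset lies inside the set of won coordinates with probability at least
\[
 \binom ks\Big/\binom ns\ \ge\ \Bigl(\frac{k-s}{n-s}\Bigr)^s.
\]
Since $k\ge(v+\delta)n$, taking $s\le\delta n/4$ gives $(k-s)/(n-s)\ge v+3\delta/4$. Hence
\[
 \mathbb E\,p_{C_s}\ \ge\ \Pr[N\ge k]\,(v+3\delta/4)^s.
\]

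\emph{Upper bound.} We control $\mathbb E\,p_{C_s}$ one step at a time. Given $C_j$, the next coordinate is uniform outside $C_j$, so
\[
 \mathbb E\bigl[p_{C_{j+1}}\mid C_j\bigr]
 =p_{C_j}\cdot\frac1{n-j}\sum_{i\notin C_j}\Pr[W_i\mid W_{C_j}].
\]
Say that $C_j$ is \emph{heavy} if $\log(1/p_{C_j})>t_{\rm th}n/2$. If $C_j$ is not heavy, then, using $jL\le sL\le t_{\rm th}n/4$ and $s\le n/4$,
\[
 \frac{jL+\log(1/p_{C_j})}{n-j}\ \le\ \frac{3t_{\rm th}n/4}{3n/4}\ =\ t_{\rm th}.
\]
Hypothesis~\eqref{eq:rao-conditional-hypothesis} then gives the factor $v+\delta/4$, or $p_{C_j}=0$ trivially. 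If $C_j$ is heavy, we use only $p_{C_{j'}}\le p_{C_j}\le2^{-t_{\rm th}n/2}$ for all later $j'$, because the events only shrink. Splitting at the first heavy time, every path contributes at most $(v+\delta/4)^{j}\,2^{-t_{\rm th}n/2}$ for some $j$. Therefore
\[
 \mathbb E\,p_{C_s}\ \le\ (v+\delta/4)^s+\sum_{j<s}(v+\delta/4)^{j}\,2^{-t_{\rm th}n/2}.
\]

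\emph{Combining.} Dividing the upper bound by $(v+3\delta/4)^s$, the first term yields
\[
 \Bigl(\frac{v+\delta/4}{v+3\delta/4}\Bigr)^s\ \le\ (1-\delta/2)^s\ \le\ e^{-\delta s/2},
\]
using $v+3\delta/4\le1$. With $s\asymp t_{\rm th}n/L$ this is $\exp[-\Omega(t_{\rm th}\delta n/L)]$. The constraint $s\le\delta n/4$ is consistent because $t_{\rm th}\le\delta/4$ and $L\ge1$, and $s\le n/4$ holds similarly. If the prescribed $s$ would be $0$, the claimed bound exceeds $1$ once $K_{\rm R}=2$, which handles small $n$.

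\emph{Main obstacle.} The step I expect to be hardest is the heavy term. Dividing $2^{-t_{\rm th}n/2}$ by $(v+3\delta/4)^s$ could blow up when $v$ is small, since $(v+3\delta/4)^{-s}$ may be as large as $\delta^{-s}$. I would first try to avoid dividing the heavy term at all. At a heavy stopping time, bound the contribution of the event $\{N\ge k\}\cap\{C_j\text{ heavy}\}$ directly by $p_{C_j}\le2^{-t_{\rm th}n/2}$ per path. Then compare only the non-heavy paths through the ratio argument, since the lower bound restricted to those paths still carries the $\binom ks/\binom ns$ factor. The point is to keep $s\,\log(1/(v+3\delta/4))$ from competing with $t_{\rm th}n/2$. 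If this fails, I would instead shrink $s$ by a factor $\log(4/\delta)$. That loses only a logarithmic factor, which the constant $c_{\rm R}=1/12$ might not absorb. The honest fallback is to follow Rao's Lemma~31 bookkeeping, which conditions on $N\ge k$ inside the martingale rather than dividing at the end. Finally, I would check that $K_{\rm R}=2$ and $c_{\rm R}=1/12$ absorb the rounding in $s$ and $k$.
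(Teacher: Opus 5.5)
Your skeleton (sequential uniform subset, counting lower bound on the threshold event, hypothesis applied above a probability cutoff, monotonicity below it) is the same as the paper's. However, the fixed cutoff $2^{-t_{\rm th}n/2}$ makes the argument fail exactly where you feared, and your fallbacks do not close the gap.

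Write $a:=v+\delta/4$. For any cutoff $c$ the argument ends at $\Pr[N\ge k]\le(a^s+c)/(a+\delta/2)^s$, so the division cannot be avoided. With $c=2^{-t_{\rm th}n/2}$ the second term is $2^{-t_{\rm th}n/2+s\log(1/(a+\delta/2))}$. For $v=0$, $L=1$, $s\approx t_{\rm th}n/4$ and $\delta<1/3$ its exponent is positive, so the bound is vacuous.

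Your first fallback does not help. The event $\{N\ge k\}$ does not imply $W_{C_j}$, so $p_{C_j}$ cannot bound the contribution of $\{N\ge k\}\cap\{C_j\text{ heavy}\}$. Restricting the counting lower bound to non-heavy paths also loses the factor $(v+3\delta/4)^s$: for a low-probability outcome, every subset of its won coordinates may be heavy. Separately, summing $a^j c$ over first heavy times overcounts. Splitting $\mathbb E\,p_{C_s}$ at the cutoff at the final time charges the cutoff only once, and this is what gives $K_{\rm R}=2$.

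The missing idea is to tie the cutoff to the base of the main term, $c:=a^s$, as the paper does.

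\emph{Choice of $s$.} For $j<s$ and $p_{C_j}>a^s$, the antecedent of the hypothesis holds once $s(L+\log(1/a))\le t_{\rm th}(n-s)$. This is guaranteed by $s:=\lfloor t_{\rm th}n/D_0\rfloor$ with $D_0:=L+t_{\rm th}+\log(1/a)$.

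\emph{Recursion.} Put $F_j:=\mathbb E[p_{C_j}\mathbf 1\{p_{C_j}>a^s\}]$. The hypothesis gives $F_j\le aF_{j-1}$, so $\mathbb E\,p_{C_s}\le a^s+F_s\le2a^s$ and hence $\Pr[N\ge k]\le2(a/(a+\delta/2))^s$.

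\emph{No logarithmic loss.} This is your ``shrink $s$ by $\log(1/a)$'' fallback, but it costs nothing once you drop the crude estimate $(1-\delta/2)^s$. The sharp ratio gives $\exp[-\delta s/(2(a+\delta/2))]$. Since $a\ge\delta/4$ implies $a+\delta/2\le\min\{1,3a\}$, and $a\log(1/a)\le1/(e\ln2)$, one gets $(a+\delta/2)D_0\le L+\tfrac14+3a\log(1/a)<3L$. Combined with $s\ge t_{\rm th}n/(2D_0)$ for $s\ge1$, this yields $c_{\rm R}=1/12$.

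\emph{The case $s=0$.} Here $\delta\log(1/a)\le\delta\log(4/\delta)\le2$ gives $t_{\rm th}\delta n/L<13/4$, so the right-hand side exceeds one.

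The factor $1/(a+\delta/2)$ you discarded is exactly what pays for the $\log(1/a)$ term in $D_0$.
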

The parameter correspondence and rounding checks are given in
Appendix~\ref{sec:rao-parameters}.

\subsection{Correlated sampling}
\label{sec:dsv-song-sampling}
The sampler turns two nearby local descriptions into a shared
approximation of Alice's description.

\begin{fact}[Song's smooth sampling, optimized form
{\cite[Lemma~4.1 and Remark~4.2]{Song}}]
\label{fact:song-smooth-sampling}
Let $\ket{u_x},\ket{v_y}$ be finite families of bipartite unit vectors on common
spaces and let $p$ be a distribution on their labels. Put
\[
 \mathcal D:=\mathbb E_p\|\ket{u_x}-\ket{v_y}\|_2^2.
\]
For every $\epsilon_{\rm sm}>0$, finite input-independent shared resources and local
channels depending only on $\ket{u_x}$ and $\ket{v_y}$, respectively, produce a
designated state $\sigma_{xy}$ such that
\begin{equation}
 \mathbb E_pP^2(\sigma_{xy},u_x)
 \le\min\{1,4C_{\rm sm}\sqrt{\mathcal D}\}+\epsilon_{\rm sm},
 \qquad C_{\rm sm}\ge1,
 \label{eq:song-optimized-powers}
\end{equation}
where $C_{\rm sm}$ is universal. The resources, output spaces, and local
maps are fixed for the whole family before the labels arrive.
By isometric dilation (Fact~\ref{fact:isometric-dilation}), these channels
may equivalently be implemented by local isometries with environments.
The displayed guarantee concerns the designated output after tracing
those environments, not the complete dilated state.
\end{fact}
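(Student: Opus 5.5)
The plan is to derive the displayed bound from Song's Lemma~4.1 and Remark~4.2 rather than rebuild the sampler. The work is in translating Song's hypotheses and guarantee into the conventions of Definition~\ref{def:information-quantities}. I will also sketch the mechanism, to show why the dependence is $\sqrt{\mathcal D}$ and why only finite resources plus slack are needed.

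\emph{Step 1: trivial regime and conventions.} Since $P^2\le1$ for every pair of states, the bound holds with any protocol whenever $4C_{\rm sm}\sqrt{\mathcal D}\ge1$. So it suffices to treat small $\mathcal D$. Next, embed all $\ket{u_x},\ket{v_y}$ into one fixed finite bipartite space $A_{\rm s}F$, padding with fixed pure ancillas if necessary. This preserves $\mathcal D$ and, by isometry invariance (Fact~\ref{fact:master-standard}), preserves the target infidelities. The label law $p$ may be correlated; Song's lemma is stated for an arbitrary joint law on pairs, so no product structure is needed.

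\emph{Step 2: the sampler.} Following Dinur--Steurer--Vidick and Song, the players share many copies of a fixed entangled resource, for example maximally entangled states on a suitably large space. Alice uses only $\ket{u_x}$, and Bob only $\ket{v_y}$, to apply a local rejection-sampling or \emph{smoothed} measurement that coherently selects a block on which the shared resource is locally converted into their description. When $u_x=v_y$ the two selections agree exactly and the output is $u_x$. In general, the probability of disagreement, and the disturbance on agreement, are controlled by $\|\ket{u_x}-\ket{v_y}\|_2$. The smoothing of Song's Lemma~4.1 makes this dependence linear in the vector distance, rather than the weaker power in the original DSV argument. Averaging then gives $\mathbb E_pP^2(\sigma_{xy},u_x)\le C\,\mathbb E_p\|\ket{u_x}-\ket{v_y}\|_2\le C\sqrt{\mathcal D}$, using Cauchy--Schwarz (Fact~\ref{fact:averaging-inequalities}). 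The optimized constant $4C_{\rm sm}$ comes from the parameter choice in Remark~4.2.

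\emph{Step 3: finiteness and dilation.} Song's construction may use a resource whose quality improves with a discretization or repetition parameter. Truncating at a finite level costs an additive error that is uniform over the finite label family, which produces $\epsilon_{\rm sm}$. The resulting local operations are channels fixed before the labels arrive. Fact~\ref{fact:isometric-dilation} turns them into local isometries with environments that are traced out, which is the "designated output" statement.

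\textbf{Expected obstacle.} The delicate point is checking that Song's error measure matches ours. If his guarantee is stated as a trace distance or a fidelity $F$, it must be converted to $P^2=1-F^2$ with the stated constants. It must also be verified that the output is compared with $u_x$ itself, the full pure bipartite description, and not merely with one marginal. I would first read off Song's bound in fidelity form and convert it using $1-F^2\le2(1-F)$.
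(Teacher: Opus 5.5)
Your outer frame matches the paper: quote Song's lemma, use $P^2\le1$ for the cap, and absorb finite-resource error into the slack. The step that produces $\sqrt{\mathcal D}$ is misidentified, however, and as written it would not go through.

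Song's sampler is one protocol whose smoothing parameter $h\in(0,1/2)$ is fixed before the labels arrive, together with the resource and the local maps. The paper quotes its guarantee as the averaged estimate
\[
 \mathbb E_pP^2(\sigma_{xy},u_x)\le C_{\rm sm}\bigl(h^2+\mathcal D/h^2+\mathcal D\bigr)+\zeta,
 \qquad \zeta>0\ \text{arbitrary}.
\]
Your Cauchy--Schwarz step needs an error for each pair that is linear in $\|\ket{u_x}-\ket{v_y}\|_2$. Extracting such a rate from this bound would mean choosing $h$ separately for each pair, which a label-independent protocol cannot do. Already at $\mathcal D=0$ the guarantee is only $C_{\rm sm}h^2+\zeta$, so your claim that identical descriptions are reproduced exactly is not something this bound provides.

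The missing idea is to balance the two $h$-terms globally. The quantity $\mathcal D$ is determined by the fixed family and $p$, so one may choose $h=\mathcal D^{1/4}$ in advance. This gives $C_{\rm sm}(2\sqrt{\mathcal D}+\mathcal D)\le3C_{\rm sm}\sqrt{\mathcal D}\le4C_{\rm sm}\sqrt{\mathcal D}$.

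This choice also dictates the case structure:
\begin{itemize}
\item The constraint $h<1/2$ permits $h=\mathcal D^{1/4}$ only for $0<\mathcal D<1/16$.
\item For $\mathcal D\ge1/16$, one has $4C_{\rm sm}\sqrt{\mathcal D}\ge C_{\rm sm}\ge1$, so $P^2\le1$ suffices. Your threshold $4C_{\rm sm}\sqrt{\mathcal D}\ge1$ also works, since it lies at or below $1/16$.
\item At $\mathcal D=0$ the choice $h=\mathcal D^{1/4}$ is unavailable. Instead take $h$ and $\zeta$ small enough that $C_{\rm sm}h^2+\zeta\le\epsilon_{\rm sm}$. So $\epsilon_{\rm sm}$ absorbs the residual smoothing term as well as Song's arbitrary slack $\zeta$, not only resource truncation.
\end{itemize}

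With Step~2 replaced by this optimization, your remaining steps (padding, dilation, designated output) are fine. The fidelity conversion you anticipate does not come up, because the paper reads Song's estimate directly as a bound on $\mathbb E_pP^2(\sigma_{xy},u_x)$.
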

For completeness, Song's parameterized estimate is
$C_{\rm sm}(h^2+\mathcal D/h^2+\mathcal D)+\zeta$ for
$0<h<1/2$ and arbitrary $\zeta>0$.
If $0<\mathcal D<1/16$, choose $h=\mathcal D^{1/4}$; then
$2\sqrt{\mathcal D}+\mathcal D\le3\sqrt{\mathcal D}$.
For $\mathcal D\ge1/16$ the cap follows from $P^2\le1$.
At $\mathcal D=0$, choose $h$ and $\zeta$ small enough for the desired
positive slack. Every choice uses finite resources.

\subsection{Conditioning and the parent divergence budget}
The JPY conditioning estimate applies to an arbitrary classical--quantum
state, without product questions or pure conditional states.

\begin{lemma}[Conditional-marginal budget, adapted from {\cite[Lemma~III.1]{JPY}}]
\label{lem:conditional-marginal-budget}
Let
\[
 \chi^{HZQ}:=\sum_{h,z}p(h,z)\proj{h,z}\otimes\chi_{h,z}^{Q}
\]
be a density operator with classical $H,Z$. Let $W$ be an event
determined by $(H,Z)$, with $p_W:=\Pr_p(W)>0$, and put
$q:=p(\,\cdot\mid W)$. For $p_H(h)>0$, its marginal conditioned only on $H=h$ is
\[
 \chi_h^Q:=\sum_zp(z\mid h)\chi_{h,z}^Q.
\]
Then the supported conditional states have finite divergence and
\begin{equation}
 \mathbb E_{(h,z)\leftarrow q}D(\chi_{h,z}^Q\|\chi_h^Q)
 \le\log(1/p_W)+\log|\mathcal Z|.
 \label{eq:conditional-marginal-budget}
\end{equation}
The same bound holds after taking any corresponding reduced systems
in the two arguments.
\end{lemma}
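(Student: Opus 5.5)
The plan is to reduce \eqref{eq:conditional-marginal-budget} to one operator inequality per history and then apply the operator bound $D\le D_{\max}$ (Fact~\ref{fact:master-standard}). Fix $h$ with $p_H(h)>0$. Since the states $\chi_{h,z}^Q$ are positive,
\[
 \chi_h^Q=\sum_{z'}p(z'\mid h)\chi_{h,z'}^Q\ \ge\ p(z\mid h)\,\chi_{h,z}^Q
\]
for every $z$ with $p(z\mid h)>0$. Hence $\supp\chi_{h,z}^Q\subseteq\supp\chi_h^Q$, the divergences are finite, and
\[
 D(\chi_{h,z}^Q\|\chi_h^Q)\le D_{\max}(\chi_{h,z}^Q\|\chi_h^Q)\le\log\frac1{p(z\mid h)}.
\]
Every pair $(h,z)$ in the support of $q$ has $p(h,z)>0$, so this bound applies on the whole support of $q$.

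It then remains to bound the classical average $\mathbb E_{q}\log\bigl(1/p(z\mid h)\bigr)$. First, for all $(h,z)$ with $q(h,z)>0$,
\[
 p(z\mid h)=\frac{p(h,z)}{p_H(h)}=\frac{p_W\,q(h,z)}{p_H(h)}.
\]
Next, multiply and divide by $q_H(h)$:
\[
 \log\frac{1}{p(z\mid h)}=\log\frac1{p_W}+\log\frac{p_H(h)}{q_H(h)}+\log\frac{1}{q(z\mid h)}.
\]
Now average under $q$. The middle term averages to $-D(q_H\|p_H)\le0$. The last term averages to $H(Z\mid H)_q\le\log|\mathcal Z|$ by Fact~\ref{fact:classical-kl-identities}. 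Together these give $\log(1/p_W)+\log|\mathcal Z|$.

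For the reduced-system clause, apply a partial trace to both arguments. The partial trace is linear and positivity preserving, so it maps $\chi_h^Q$ to the $z$-average of the reduced states $\chi_{h,z}^{Q'}$. The same operator inequality therefore holds on $Q'$. Alternatively, data processing (Fact~\ref{fact:master-standard}) gives the reduced bound pointwise in $(h,z)$.

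No step is a real obstacle. The only points needing care are the support and finiteness claims, and the cancellation in the classical identity. That cancellation works because $q$ is absolutely continuous with respect to $p$, since $q=p(\,\cdot\mid W)$.
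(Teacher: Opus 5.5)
Your proof is correct and follows essentially the same route as the paper's. Both use the operator inequality $\chi_h\ge p(z\mid h)\chi_{h,z}$ together with $D\le D_{\max}$, the same classical decomposition of $\mathbb E_q\log(1/p(Z\mid H))$, and data processing (or the preserved operator inequality) for reduced systems. In that decomposition you write out pointwise what the paper phrases as conditional cross entropy plus the chain rule and the postselection identity $D(q\|p)=\log(1/p_W)$.
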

\begin{proof}
For every $q$-supported pair $(h,z)$, we have $p(z\mid h)>0$ and
\[
 \chi_h=\sum_{z'}p(z'\mid h)\chi_{h,z'}\ge p(z\mid h)\chi_{h,z}.
\]
Consequently,
\begin{align*}
 D(\chi_{h,z}\|\chi_h)
 &\le D_{\max}(\chi_{h,z}\|\chi_h)
 &&\text{(Fact~\ref{fact:master-standard})}\\
 &\le\log\frac1{p(z\mid h)}<\infty
 &&\text{(the operator inequality above).}
\end{align*}
By the classical chain rule
and postselection identity (Facts~\ref{fact:classical-kl-identities}
and~\ref{fact:postselection}),
\begin{align*}
 \mathbb E_q\log\frac1{p(Z\mid H)}
 &=H_q(Z\mid H)+\mathbb E_{q_H}D(q_{Z\mid H}\|p_{Z\mid H})
 &&\text{(conditional cross entropy)}\\
 &=H_q(Z\mid H)+\log(1/p_W)-D(q_H\|p_H)
 &&\text{(chain rule and postselection)}\\
 &\le\log|\mathcal Z|+\log(1/p_W)
 &&\text{(entropy bound and $D\ge0$).}
\end{align*}
Data processing (Fact~\ref{fact:master-standard}) gives the reduced-state assertion.
\end{proof}
The target conditions on $H,Z$, but the comparison only conditions on
$H$. The estimate charges the finite record $Z$ and the event $W$;
it does not charge the alphabet of $H$.

\subsection{Chain rule and ordered conditional comparisons}
\label{sec:preliminary-comparison-tools}
The following lemma combines the relative-entropy chain rule with the
restrictions used to construct a single coordinate. All conditional
states use Definition~\ref{def:conditional-states}; target and comparison
blocks are normalized separately.

\begin{lemma}[Retained records and ordered allocation]
\label{lem:loss-chain-rule}
\label{lem:ordered-reference-loss}
All displayed full divergences are assumed finite. Write
$R_{\le i}:=R_1\cdots R_i$ and $R_{<i}:=R_1\cdots R_{i-1}$, with
empty strings denoting a trivial system; use $R_{\ge i},R_{>i}$ similarly.
\begin{enumerate}[label=(\roman*),ref=(\roman*),leftmargin=*,itemsep=5pt]
\item\label{part:loss-classical-blocks} \emph{Retained classical records.}
If $\rho^{SRK},\nu^{SRK}$ are classical on $S$, with laws $p_s,q_s$
and normalized blocks $\rho_s^{RK},\nu_s^{RK}$, then
\begin{equation}
 \begin{aligned}
 D(\rho^{SRK}\|\nu^{SRK})-D(\rho^{SK}\|\nu^{SK})
 =\sum_s p_s\!\left[D(\rho_s^{RK}\|\nu_s^{RK})
                        -D(\rho_s^K\|\nu_s^K)\right].
 \end{aligned}
 \label{eq:loss-classical-conditioning}
\end{equation}
The average uses the target law $p_s$, not the comparison law $q_s$.
\item\label{part:ordered-telescope}\label{part:ordered-comparisons}\label{part:loss-telescope}
\emph{Ordered allocation and conditional comparisons.}
Let $R_j:=\hX_j\hY_j$. Suppose $\rho,\nu$ are states on $R_1\cdots R_mK$,
classical on $\hY^m$, with $X^m\subset K$ and matching $\hX_jX_j$ support.
A register $T$ in the remaining factors of $K$ may also be classical in both states.
Trace $R_{>i}$ and condition on $H_i=(T,\hY_{<i},X_{>i})$,
retaining fixed record copies. Each conditional state uses its own
normalization, and $\mathbb E_{H_i}$ uses the record law of $\rho$. Then
\begin{align}
 &\sum_{i=1}^m\mathbb E_{H_i}\!\left[
 D(\rho_{H_i}^{R_{\le i}K}\|\nu_{H_i}^{R_{\le i}K})
 -D(\rho_{H_i}^{R_{<i}K}\|\nu_{H_i}^{R_{<i}K})\right]\notag\\
 &\qquad=D(\rho\|\nu)-D(\rho^K\|\nu^K).
 \label{eq:ordered-reference-telescope}
\end{align}
If, moreover, $K=X^mA$ and
$\nu=\mathcal C_{\mu^{\otimes m}}^A(\tau)$ is an Alice conditional
comparison, every supported restriction is an Alice comparison for $\mu$.
More precisely, put $A_i^\circ:=R_{<i}X_{[m]\setminus\{i\}}A$,
so that $R_{\le i}K=R_iX_iA_i^\circ$. There is a density operator
$\tau_{i,h}^{\hX_iX_iA_i^\circ}$ with matching-copy support and
$X_i$-law supported on $\supp\mu_X$ such that
\begin{equation}
 \nu_{H_i=h}^{R_{\le i}K}=\mathcal C_\mu^A(\tau_{i,h}).
 \label{eq:ordered-restricted-seed}
\end{equation}
For Bob, interchange the players and reverse the coordinate order:
trace $R_{<i}$ and condition on $H_i=(T,\hX_{>i},Y_{<i})$.
Replace $R_{\le i},R_{<i}$ by $R_{\ge i},R_{>i}$ in the allocation
identity, and use $B_i^\circ:=R_{>i}Y_{[m]\setminus\{i\}}B$ for the
comparison form. The total is again $D(\rho\|\nu)-D(\rho^K\|\nu^K)$.
\end{enumerate}
\end{lemma}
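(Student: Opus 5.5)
For part~\ref{part:loss-classical-blocks} I would use the direct-sum identity of Fact~\ref{fact:classical-kl-identities} twice. Since $\rho^{SRK}=\bigoplus_s p_s\rho_s^{RK}$ and $\nu^{SRK}=\bigoplus_s q_s\nu_s^{RK}$, that identity gives $D(\rho^{SRK}\|\nu^{SRK})=D(p\|q)+\sum_sp_sD(\rho_s^{RK}\|\nu_s^{RK})$. Tracing $R$ preserves the block structure on $S$, so the same identity gives $D(\rho^{SK}\|\nu^{SK})=D(p\|q)+\sum_sp_sD(\rho_s^K\|\nu_s^K)$. The finiteness hypothesis makes $D(p\|q)$ finite, so subtracting cancels it and yields \eqref{eq:loss-classical-conditioning}. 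The average uses $p_s$ because it comes from the first argument of the divergence.

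For the allocation identity in part~\ref{part:ordered-telescope}, I would first rewrite each summand as an unconditioned difference using part~\ref{part:loss-classical-blocks}. The history $H_i=(T,\hY_{<i},X_{>i})$ is classical in $\rho^{R_{\le i}K}$ and in $\nu^{R_{\le i}K}$. The registers $\hY_{<i}$ are classical by hypothesis. The registers $X_{>i}$ become classical once $\hX_{>i}$ is traced, by matching support. $T$ is classical by assumption. Conditioning on a classical register while keeping a fixed copy is a block decomposition, so part~\ref{part:loss-classical-blocks} with $S=H_i$ turns the $i$th summand into
\[
 \Delta_i:=D(\rho^{R_{\le i}K}\|\nu^{R_{\le i}K})-D(\rho^{R_{<i}K}\|\nu^{R_{<i}K}).
\]
One point needs checking here. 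The register $\hY_i$ is absent from $H_i$, yet $R_i$ contains it. The decomposition must use $S=H_i$ with $R=R_i$ and $K$ containing the rest, and the copies $\hY_{<i}$ inside $R_{<i}$ have to be treated as fixed record copies. The terms $\Delta_i$ then telescope from $D(\rho\|\nu)$ at $i=m$ down to $D(\rho^K\|\nu^K)$ at $i=0$. Every intermediate divergence is finite by data processing from the finite full one.

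For the comparison form \eqref{eq:ordered-restricted-seed}, I would write $\nu=\sum_{\mathbf y}\proj{\mathbf y}^{\hY^m}\otimes K_{\mathbf y}\tau K_{\mathbf y}$, where $K_{\mathbf y}=\bigotimes_jK^\mu_{y_j}$ acts diagonally on $X^m$. Tracing $\hY_{>i}$ sums over $y_{>i}$. Since $\sum_{y_j}K_{y_j}^{\mu\,2}=I$ on $\supp\mu_X$, the factors for $j>i$ disappear. Next I would trace $\hX_{>i}$, condition on $X_{>i}=x_{>i}$, and condition on $\hY_{<i}=y_{<i}$. These operations commute with $K_{y_i}$ because they act on other tensor factors. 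Conditioning on $\hY_{<i}$ simply picks out the block that carries the operators $K_{y_j}$ for $j<i$. The result is $\sum_{y_i}\proj{y_i}\otimes K_{y_i}\tau_{i,h}K_{y_i}$, where
\[
 \tau_{i,h}\propto\bigl(\bra{x_{>i}}\otimes\bra{t}\bigr)\Bigl(K_{y_{<i}}\,\Tr_{\hX_{>i}}\tau\,K_{y_{<i}}\Bigr)\bigl(\ket{x_{>i}}\otimes\ket{t}\bigr),
\]
normalized. Here $K_{y_{<i}}$ acts on the $X_{<i}$ blocks and leaves the $\hX_{<i}X_{<i}$ matching support intact. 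This $\tau_{i,h}$ has matching support on $\hX_iX_i$, and its $X_i$-law lies in $\supp\mu_X$. Bob's case is the mirror image, with the coordinate order reversed.

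I expect the main obstacle to be the bookkeeping in this last step. The normalization of $\tau_{i,h}$ has to be carried out within the $\nu$-blocks. Blocks that have positive $\rho$-weight but zero $\nu$-weight must be excluded, and finiteness does exactly that. I also need to confirm that the $y_{<i}$-dependent operators $K_{y_j}$ can be absorbed into the seed without affecting the $y_i$ kernel. This works because conditioning on $\hY_{<i}$ leaves no coherent sum over those labels.
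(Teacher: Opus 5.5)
Your proof is correct. For part~(i) and for the telescoping identity in part~(ii) it is the paper's argument. You apply the direct-sum identity to both divergences so that $D(p\|q)$ cancels. You then use part~(i) with $S=H_i$ to turn each averaged summand into a difference of unconditioned divergences, which telescopes. This is valid because $H_i$ is classical in both marginals: $T$ and $\hY_{<i}$ are classical, and tracing $\hX_{>i}$ dephases $X_{>i}$.

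For the comparison form you take a more explicit route. The paper never expands the other coordinates. It factors $\mathcal C_{\mu^{\otimes m}}^A=\mathcal C_i\mathcal C_{-i}$ and puts $\zeta:=\mathcal C_{-i}(\tau)$. The unnormalized restriction $\mathcal E_{i,h}$ (trace $R_{>i}$, select $H_i=h$) acts outside $\hX_iX_i\hY_i$, so it commutes with $\mathcal C_i$, and the paper takes $\tau_{i,h}:=\mathcal E_{i,h}(\zeta)/p_\nu(h)$. Trace preservation of $\mathcal C_i$ on the supported $X_i$ space gives the normalization. Whatever survives of the other coordinates' kernels is simply absorbed into the seed, so nothing about them has to cancel. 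Your computation buys a concrete seed, depending on $\tau$ only through $\Tr_{\hX_{>i}}\tau$ and the $y_{<i}$ kernels.

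One step in your computation is misjustified, though. Tracing $\hY_{>i}$ alone does not make the $j>i$ factors disappear. The map $\sum_{y}K^\mu_y(\cdot)K^\mu_y$ multiplies the coherent $(x,x')$ block of $\tau$ by $\sum_y\sqrt{\mu(y\mid x)\mu(y\mid x')}$, which is generally less than one for $x\neq x'$. The identity $\sum_y(K_y^\mu)^2=I$ does not apply because $\tau$ is not block diagonal in $X_j$. The factors drop out only after $\hX_{>i}$ is also traced, since matching support then makes $X_{>i}$ block diagonal. Equivalently, they drop out on the diagonal block selected by conditioning on $X_{>i}=x_{>i}$, where the factor is $\prod_{j>i}\sum_{y_j}\mu(y_j\mid x_j)=1$. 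The lemma traces all of $R_{>i}$, and your final formula uses exactly $\Tr_{\hX_{>i}}\tau$ and a fixed $x_{>i}$. So the conclusion stands once you reorder that justification.
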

\begin{proof}
For \ref{part:loss-classical-blocks}, apply the direct-sum identity
(Fact~\ref{fact:classical-kl-identities}) to both divergences.
Subtracting cancels the record divergence $D(p\|q)$ and gives the formula.

For \ref{part:ordered-telescope}, put
$\Delta_i:=D(\rho^{R_{\le i}K}\|\nu^{R_{\le i}K})$, with
$\Delta_0:=D(\rho^K\|\nu^K)$.
The registers $T,\hY_{<i}$ are classical in both Alice marginals,
and tracing $\hX_{>i}$ makes $X_{>i}$ classical.
Part~\ref{part:loss-classical-blocks} therefore gives
\begin{equation}
 \mathbb E_{H_i}\!\left[
 D(\rho_{H_i}^{R_{\le i}K}\|\nu_{H_i}^{R_{\le i}K})
 -D(\rho_{H_i}^{R_{<i}K}\|\nu_{H_i}^{R_{<i}K})\right]
 =\Delta_i-\Delta_{i-1}.
 \label{eq:main-exact-reference-increment}
\end{equation}
Summing gives $\Delta_m-\Delta_0$, proving
\eqref{eq:ordered-reference-telescope}. Bob uses the reverse order.

For preservation of the comparison form, the prior $\mu^{\otimes m}$ factors
across coordinates. Its comparison map therefore factors
as $\mathcal C_i\mathcal C_{-i}$, where $\mathcal C_i$ attaches
$\hY_i$ using $K_{y_i}^\mu$ on $X_i$, and $\mathcal C_{-i}$ attaches
the other references. Put $\zeta:=\mathcal C_{-i}(\tau)$, so
$\nu=\mathcal C_i(\zeta)$.

Let $\mathcal E_{i,h}$ trace $R_{>i}$ and select the block
$H_i=h$, retaining fixed record copies but leaving the block unnormalized.
This operation acts outside $\hX_iX_i\hY_i$, so it commutes with
$\mathcal C_i$. Consequently,
\[
 \mathcal E_{i,h}(\nu)
 =\mathcal C_i\!\left(\mathcal E_{i,h}(\zeta)\right).
\]
The map $\mathcal C_i$ preserves trace on the supported $X_i$ space.
Hence, for $p_\nu(h)>0$,
$\tau_{i,h}:=\mathcal E_{i,h}(\zeta)/p_\nu(h)$ is a density operator
and gives \eqref{eq:ordered-restricted-seed}. The restriction preserves
matching $\hX_iX_i$ support and the allowed $X_i$ symbols.
Bob's proof interchanges the players and reverses the order.
Finite parent divergence ensures positive comparison probability on
every target-supported record.
\end{proof}

\subsection{Information controlled by conditional relative entropy}
The conditional charge controls the variation of one player's marginal
with the other player's question. This will supply one of the two errors
in the local preparation proof.
\begin{lemma}[Information controlled by conditional relative entropy]
\label{lem:loss-cq-information}
\label{part:loss-cq}
Let $\rho^{\hY\hX XA}$ have matching-copy support and be classical on
$\hY$. Let $q^{XY}$ be its question law after tracing $\hX$, and let
$\nu_A$ be an Alice conditional comparison with finite full divergence.
Then, with $R=\hX\hY$,
\begin{equation}
 D(\rho^{RXA}\|\nu_A)-D(\rho^{XA}\|\nu_A^{XA})
 \ge I(\hY:A\mid X)_\rho+D(q^{XY}\|q_X\mu_{Y|X}).
 \label{eq:native-full-loss-cq-part}
\end{equation}
The inequality is an equality when both states are diagonal.
For a copied-input family, the left-hand side is $b_A$; interchanging
the players gives the corresponding bound for $b_B$.
\end{lemma}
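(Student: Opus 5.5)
The plan is to discard the copy $\hX$ with a channel that acts only on the reference, and then compute what remains using classical--quantum identities. The key observation is that the left-hand side of \eqref{eq:native-full-loss-cq-part} has the form $D(\rho^{RK}\|\nu^{RK})-D(\rho^K\|\nu^K)$ with $K=XA$. A channel $\mathcal N$ acting only on $R$ leaves both $K$-marginals unchanged. It can only decrease the full divergence (Fact~\ref{fact:master-standard}). Hence the difference cannot increase under $\mathcal N$.

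\emph{Step 1: reduce to a classical--quantum state.} I take $\mathcal N=\Tr_{\hX}$, so that $R=\hX\hY$ becomes $\hY$. The left-hand side is then at least
\[
D(\rho^{\hY XA}\|\nu_A^{\hY XA})-D(\rho^{XA}\|\nu_A^{XA}),
\]
and all divergences stay finite. After $\hX$ is traced, matching-copy support makes $X$ classical in both states. The state $\rho$ is classical on $\hY$ by hypothesis. Lemma~\ref{lem:conditional-comparison-forms} gives the comparison explicitly:
\[
\nu_A^{\hY XA}=\sum_{x,y}r_X(x)\mu(y\mid x)\proj{xy}\otimes\xi_{A,x}^A.
\]
Its $A$-block depends only on $x$.

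\emph{Step 2: condition on $X$ and apply the product identity.} Lemma~\ref{lem:ordered-reference-loss}\ref{part:loss-classical-blocks}, with $S=X$, rewrites the difference as
\[
\sum_x q_X(x)\bigl[D(\rho_x^{\hY A}\|\mu_{Y|x}\otimes\xi_{A,x})-D(\rho_x^A\|\xi_{A,x})\bigr].
\]
For each $x$, I use the standard identity
\[
D(\rho^{YA}\|\pi\otimes\sigma)=I(Y:A)_\rho+D(\rho^Y\|\pi)+D(\rho^A\|\sigma).
\]
It follows by expanding $\log(\pi\otimes\sigma)=\log\pi\otimes I+I\otimes\log\sigma$. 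Substituting, the $D(\rho_x^A\|\xi_{A,x})$ terms cancel. The bracket becomes $I(\hY:A)_{\rho_x}+D(q_{Y|x}\|\mu_{Y|x})$. Averaging over $q_X$ gives $I(\hY:A\mid X)_\rho$ plus $\E_{q_X}D(q_{Y|x}\|\mu_{Y|x})$. By the chain rule of Fact~\ref{fact:classical-kl-identities}, the latter equals $D(q^{XY}\|q_X\mu_{Y|X})$. This proves the inequality.

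\emph{Step 3: equality and the copied-input case.} If both states are diagonal, $\hX$ is an exact classical copy of $X$ in each of them. The channel that re-copies $X$ into $\hX$ therefore inverts $\Tr_{\hX}$ on both states, so Step 1 loses nothing and equality holds. For a copied-input family, the left-hand side is $b_A$ by Definition~\ref{def:local-losses}. Bob's bound follows by interchanging the players.

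The step I expect to need the most care is the monotonicity under channels acting only on $R$. Specifically, I must check that the $K$-marginal divergence is literally unchanged and that support inclusion survives $\Tr_{\hX}$. Both checks are immediate, because $\Tr_{\hX}$ commutes with tracing out $R$.
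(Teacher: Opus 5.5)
Your proof is correct and follows the paper's route. You discard $\hX$ by data processing, which leaves the $XA$ marginals untouched, evaluate the remaining classical--quantum difference by the chain rule, and obtain equality for diagonal states because re-copying $X$ into $\hX$ inverts the partial trace. The paper does the chain-rule step in one line; you unpack it into the retained-record identity followed by the per-$x$ product expansion $D(\rho^{\hY A}\|\pi\otimes\sigma)=I(\hY:A)+D(\rho^{\hY}\|\pi)+D(\rho^A\|\sigma)$, which is a faithful expansion of that step.
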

\begin{proof}
Tracing $\hX$ makes $X$ classical. The defining purification gives
$\nu_A^{\hY XA}=\mu_{Y|X}\nu_A^{XA}$, with the conditional law attached
blockwise. Therefore
\begin{align*}
 &D(\rho^{RXA}\|\nu_A)-D(\rho^{XA}\|\nu_A^{XA})\\
 &\quad\ge D(\rho^{\hY XA}\|\mu_{Y|X}\nu_A^{XA})
             -D(\rho^{XA}\|\nu_A^{XA})
 &&\text{(data processing, Fact~\ref{fact:master-standard})}\\
 &\quad=I(\hY:A\mid X)_\rho+D(q^{XY}\|q_X\mu_{Y|X})
 &&\text{(chain rule, Fact~\ref{fact:classical-kl-identities}).}
\end{align*}
If both states are diagonal, tracing $\hX$ is reversible by copying
$X$ into $\hX$. Thus the data-processing step is an equality.
Bob's calculation interchanges $X,A$ with $Y,B$.
\end{proof}

\subsection{Matrix tools for marginal replacement}
To turn a conditional charge into a fidelity estimate, we will replace
the comparison's local marginal. The construction uses the following
trace inequality and variational formulas.

\begin{fact}[Four-matrix inequality and variational formulas {\cite{SBT}}]
\label{fact:four-matrix-trace}
Put $\beta_0(s):=\pi/[2(\cosh(\pi s)+1)]$, a probability density
on $\mathbb R$. For Hermitian matrices $H_1,\ldots,H_4$,
\begin{align*}
 \Tr e^{H_1+H_2+H_3+H_4}
 \le\int_{\mathbb R}\!\beta_0(s)\,
 \Tr\!\left[e^{H_1}e^{(1+is)H_2/2}e^{(1+is)H_3/2}
             e^{H_4}e^{(1-is)H_3/2}e^{(1-is)H_2/2}\right]ds.
\end{align*}
Write $D_e:=(\ln2)D$ and $D_{\mathbb M,e}:=(\ln2)D_{\mathbb M}$ for the
natural-log versions of the divergences in
Definition~\ref{def:information-quantities}. For a density operator $\alpha$
and positive definite $\zeta$, their variational characterizations are
\begin{align*}
 D_e(\alpha\|\zeta)
  &=\sup_{W>0}\{\Tr\alpha\ln W-\ln\Tr e^{\ln\zeta+\ln W}\},\\
 D_{\mathbb M,e}(\alpha\|\zeta)
  &=\sup_{W>0}\{\Tr\alpha\ln W-\ln\Tr\zeta W\}.
\end{align*}
The first formula also permits an unnormalized positive $\zeta$.
The measured-divergence formula extends to singular density operators
$\zeta$ and expresses $\DM$ as a supremum of continuous functions
of normalized states; hence $\DM$ is lower semicontinuous.
The trace inequality is Corollary~3.3, Eq.~(54), of Sutter, Berta,
and Tomamichel~\cite{SBT}. Their variational formulas
(48)--(49) have $1-z$ in place of $-\ln z$; optimizing the positive
scalar multiple of $W$ gives the equivalent logarithmic forms above.
\end{fact}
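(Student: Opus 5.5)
The statement collects three ingredients of different character, so I would treat them separately. For the trace inequality I would not attempt a new proof but verify that it is the $n=4$ case of the multivariate Golden--Thompson inequality of Sutter, Berta, and Tomamichel. The route there is complex interpolation: consider the holomorphic family
\[
 z\mapsto e^{zH_1/2}\,e^{zH_2/2}\,e^{zH_3/2}\,e^{zH_4/2}
\]
on the strip $0\le\operatorname{Re}z\le1$. Apply Hirschman's refinement of Stein interpolation to the Schatten $2/\theta$-norms. Then let $\theta\downarrow0$ and use the Lie--Trotter product formula.

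The density $\beta_0$ is the limiting Poisson-kernel weight on the boundary line $\operatorname{Re}z=1$. The boundary line $\operatorname{Re}z=0$ contributes only unitaries, which have norm one. Writing the resulting $\|\cdot\|_2^2$ of the four-factor product as a trace and using cyclicity gives exactly the displayed symmetric form $e^{H_1}e^{(1+is)H_2/2}e^{(1+is)H_3/2}e^{H_4}e^{(1-is)H_3/2}e^{(1-is)H_2/2}$. Since all of this is in \cite{SBT}, the only work here is to match their Corollary~3.3, Eq.~(54), with the indexing used above.

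For the variational formula for $D_e$, I would start from the Gibbs variational principle: for Hermitian $K$,
\[
 \ln\Tr e^{K}=\sup_\alpha\{\Tr\alpha K-\Tr\alpha\ln\alpha\}.
\]
Setting $K=\ln\zeta+\ln W$ gives $\Tr\alpha\ln W-\ln\Tr e^{\ln\zeta+\ln W}\le D_e(\alpha\|\zeta)$ for every $W>0$. Equality holds at $W=\exp(\ln\alpha-\ln\zeta)$ when $\alpha>0$. For singular $\alpha$, take $W_\epsilon=\exp(\ln(\alpha+\epsilon)-\ln\zeta)$ and let $\epsilon\to0$. Nothing in this argument requires $\operatorname{Tr}\zeta=1$.

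For $D_{\mathbb M,e}$, I would pull the supremum in Definition~\ref{def:information-quantities} back to projective measurements in an eigenbasis of $W$. On the classical side, the Donsker--Varadhan formula gives the bound for each measurement. Conversely, optimizing $W$ over functions diagonal in a measurement basis recovers every measured value.

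The conversion from the forms $\Tr\alpha\ln W+1-\Tr e^{\ln\zeta+\ln W}$ in SBT Eqs.~(48)--(49) is a scalar step. Replace $W$ by $cW$ and maximize over $c>0$: the optimum is at $c=1/\Tr(\cdots)$, and it turns $1-z$ into $-\ln z$.

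For the measured divergence with singular $\zeta$, the logarithmic form is still well defined, since $\Tr\zeta W>0$ whenever $W>0$. For each fixed $W$, the objective is continuous in the pair of normalized states. A supremum of continuous functions is lower semicontinuous, which gives the final claim.

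The main obstacle is the interpolation step. The four-matrix bound with the precise kernel $\beta_0$ depends on Hirschman's sharp interpolation constant, and I would cite it rather than reprove it. Everything else in the statement is a direct consequence of standard variational principles.
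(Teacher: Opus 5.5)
Your treatment of the trace inequality and of the scalar conversion matches the paper, which gives no separate proof. It cites Sutter--Berta--Tomamichel (Corollary~3.3 and Eqs.~(48)--(49)) and notes that optimizing $W\mapsto cW$ at $c=1/\Tr(\cdots)$ turns $1-z$ into $-\ln z$. Your interpolation sketch is the right mechanism. After $\theta\downarrow0$, Hirschman's bound yields $\int\beta_0\log\|\cdot\|_2$, and concavity of $\log$ against the probability density $\beta_0$ then gives the displayed integral of traces. Your Gibbs-principle derivation of the $D_e$ formula is a correct self-contained substitute for citing Eq.~(48). It uses the maximizer $\exp(\ln(\alpha+\epsilon)-\ln\zeta)$ and never uses $\Tr\zeta=1$.

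The step that does not go through as written is your independent proof of the measured formula. Evaluating the objective at a fixed $W$ through the projective measurement in its eigenbasis gives $\sup_W\{\cdots\}\le D_{\mathbb M,e}$. Your converse, however, chooses $W$ diagonal in a measurement basis, so it reaches only projective measurements. Definition~\ref{def:information-quantities} takes the supremum over all finite POVMs. For a non-projective POVM $\{M_j\}$ and $W=\sum_jw_jM_j$, one has $\ln W\ne\sum_j(\ln w_j)M_j$. What is missing is the operator inequality $\ln\bigl(\sum_jw_jM_j\bigr)\ge\sum_j(\ln w_j)M_j$, i.e.\ operator concavity of $\ln$ in Jensen form with $\sum_jM_j=I$. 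With it, $\Tr\zeta W=\sum_jw_jq_j$, and the choice $w_j=p_j/q_j$ (with limiting choices at zero entries) pushes the objective up to $D_e(p\|q)$. Alternatively, rely on SBT's Eq.~(49) as the paper does, or on the result of Berta, Fawzi, and Tomamichel that projective measurements attain the POVM supremum. This is exactly the direction used in Lemma~\ref{lem:retained-marginal-lift}, which bounds the objective for every $W$ and then takes the supremum to bound $D_{\mathbb M}$.

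Similarly, lower semicontinuity at a singular limit point needs the formula to \emph{equal} $D_{\mathbb M,e}$ at singular $\zeta$, not merely to be well defined there. This includes the value $+\infty$ when $\supp\alpha\not\subseteq\supp\zeta$: take $W=I+\lambda\Pi$, with $\Pi$ the projector onto $\ker\zeta$, and let $\lambda\to\infty$. Your per-measurement comparison never uses invertibility of $\zeta$, so this extension is available, but you have to state it.
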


\section{Extraction objective and conditional information budget}
\label{sec:main-parent-allocation}
\label{sec:setup}
We now construct the coordinate targets and comparisons whose charges
control local preparation. Following JPY~\cite[Section~III]{JPY}, the
construction starts from the preselection and postselection states.

\subsection{The extraction objective}
We aim to turn success on unchecked coordinates, conditioned on checked
wins, into a legal strategy for one fresh game. The following notation
records the conditioning cost and the mean unchecked rejection probability.

\begin{definition}[Conditioned strategy, budget, and rejection]
\label{def:conditioned-strategy}
\normalfont
Fix a finite-dimensional strategy for $G^{\otimes n}$, with $G$ and
$L$ as in Definition~\ref{def:game}. Let $\mathsf P$ denote its measured
question-and-answer distribution, and let $W_i$ be success at coordinate
$i$. For $C\subsetneq[n]$, put $W_C:=\bigcap_{i\in C}W_i$ and
$p_C:=\Pr(W_C)$. Assume $p_C>0$ and define
\[
 J:=[n]\setminus C,\qquad m:=|J|,\qquad
 \mathsf Q:=\mathsf P(\,\cdot\mid W_C),
\]
\[
 B_C:=|C|L+\log(1/p_C),\qquad t:=B_C/m,\qquad
 A_C:=\frac1m\sum_{i\in J}\Pr(\neg W_i\mid W_C).
\]
For $i\in J$, let $q_i^{XY}$ be the $(X_i,Y_i)$ marginal of
$\mathsf Q$; write $q_i=q_i^{XY}$ when no other marginal is present.
Order $J$ increasingly, and for $i\in J$ write
\[
 J_{<i}:=\{j\in J:j<i\},\qquad
 J_{>i}:=\{j\in J:j>i\}.
\]
Set $J_{\le i}:=J_{<i}\cup\{i\}$ and
$J_{\ge i}:=J_{>i}\cup\{i\}$. In residual workspaces, write
$-i:=J\setminus\{i\}$. All probabilities refer to the fixed strategy.
\end{definition}

Our target is the extraction estimate \eqref{eq:cubic-branch-extraction}:
for $0\le t\le1$, the difference $\sqrt\varepsilon-\sqrt{A_C}$ is at most
$C_{\rm ext}t^{1/4}$ for a universal $C_{\rm ext}\ge1$. It forces $A_C\ge\varepsilon/4$ when
$t\le\varepsilon^2/(16C_{\rm ext}^4)$, the rejection guarantee needed for greedy
checking. To obtain it, we will approximately prepare the conditional
workspace of a uniformly chosen unchecked coordinate on fresh questions
from $\mu$. The information budget proved here supplies the input to
local preparation in Section~\ref{sec:local-preparation}.

\subsection{The two global states and their parent marginals}
Fix the conditioned strategy of Definition~\ref{def:conditioned-strategy}.
Purify the shared resource using Definition~\ref{def:canonical-purification}
and assign the purifying register to Alice. By the dilation
Fact~\ref{fact:isometric-dilation}, retain all outputs of the players'
local isometries. Write the resource as $\ket{\eta}^{A_0B_0}$ and the
checked answers as $Z_A,Z_B$. Retain coherent basis copies
$\widehat Z_A,\widehat Z_B$ of $Z_A,Z_B$, respectively, using the local
isometry $\ket z\ket0\mapsto\ket z\ket z$.
The outputs $E_A,E_B$ contain the unchecked answers and all remaining
workspaces. Copying preserves the measured strategy; register assignments
below are for the analysis and do not transmit these copies between players.

\begin{definition}[Preselection, postselection, and parent comparisons]
\label{def:two-global-states}
\label{def:checked-parent}
\label{def:repetition-comparisons}
\normalfont
Let $U_{\mathbf x}:A_0\to Z_A\widehat Z_AE_A$ and
$V_{\mathbf y}:B_0\to Z_B\widehat Z_BE_B$ include the checked-answer copies above.
The preselection and postselection vectors are
\begin{equation}
 \begin{aligned}
 \ket{\psi}
 &:=\sum_{\mathbf x,\mathbf y}\sqrt{\mu^{\otimes n}(\mathbf x,\mathbf y)}
 \ket{\mathbf x,\mathbf x,\mathbf y,\mathbf y}^{\hX^nX^n\hY^nY^n}
 (U_{\mathbf x}\otimes V_{\mathbf y})\ket{\eta},\\
 \ket{\phi}&:=\Pi_C\ket{\psi}/\sqrt{p_C},\qquad
 p_C=\bra{\psi}\Pi_C\ket{\psi}.
 \end{aligned}
 \label{eq:two-global-strategy-states}
\end{equation}
Here $\Pi_C$ is the diagonal projector onto the checked win event.
Put
\[
 H_0:=(X_C,Y_C),\qquad Z_C:=(Z_A,Z_B),\qquad
 H_{\rm chk}:=(H_0,Z_C).
\]
Their measured laws are $\mathsf P$ and $\mathsf Q=\mathsf P(\,\cdot\mid W_C)$.
The resource is not part of the checked transcript.
In the parent marginals below, the checked-question registers and all
four checked-answer registers are omitted. Put
\[
 R_J:=\hX_J\hY_J,\qquad K_A:=X_JE_A,\qquad K_B:=Y_JE_B,
 \qquad Q:=R_JK_AK_B.
\]
For a supported checked transcript $g=(h_0,z)$, the target and comparison
are the following normalized conditional marginals of the two states in
\eqref{eq:two-global-strategy-states}:
\begin{equation}
 \begin{array}{c@{\qquad}c}
 \text{target} & \text{comparison} \\[4pt]
 \phi_{H_{\rm chk}=g}^{Q} & \psi_{H_0=h_0}^{Q} \\[5pt]
 \phi_{H_{\rm chk}=g}^{R_JK_A} & \psi_{H_0=h_0}^{R_JK_A}.
 \end{array}
 \label{eq:two-state-parent-marginals}
\end{equation}
Bob's pair replaces $K_A$ by $K_B$. The target conditions on the
checked-answer value $z$; the comparison averages over the checked
answers, conditional on $h_0$. Both states omit all four checked-answer
registers and act on the same displayed residual systems.

\end{definition}
The budget proof verifies that the $R_JK_A$ and $R_JK_B$ marginals of
$\psi_{H_0=h_0}$ have the respective conditional-comparison forms.
The relative entropies compare the conditional marginals in
\eqref{eq:two-state-parent-marginals}, not the global pure states.

\subsection{Coordinate targets and conditional relative-entropy charges}
We now specify the coordinate families to be prepared and their information
charges. The construction below obtains both targets and comparisons
from the two states just defined.

\begin{definition}[Coordinate views and their comparisons]
\label{def:coordinate-losses}
\normalfont
For the postselection state $\ket{\phi}$ of
Definition~\ref{def:two-global-states}, a coordinate view at $i\in J$
partitions its registers as $R_iX_iY_iA_iB_i$, where $R_i:=\hX_i\hY_i$.
The register assignment is given in Definition~\ref{def:inherited-comparisons}.
Its normalized conditional workspace vectors satisfy
\begin{equation}
 \ket{\phi}
 =\sum_{x,y}\sqrt{q_i^{XY}(x,y)}\,
 \ket{xx,yy}^{\hX_iX_i\hY_iY_i}
 \ket{\phi_{X_i=x,Y_i=y}}^{A_iB_i}.
 \label{eq:joint-coordinate-target}
\end{equation}
The fixed copies of $X_i,Y_i$ are suppressed in the conditional vector. All records belong to $A_i$ or $B_i$, with the coordinate answers
in fixed positions. The view is \emph{consistent} when reading those
answers and the questions gives the $i$th marginal of $\mathsf Q$.

A coordinate comparison pair consists of
$\nu_{A,i}^{R_iX_iA_i}$ and $\nu_{B,i}^{R_iY_iB_i}$ of
Definition~\ref{def:conditional-comparisons}, with prior $\mu$.
When the full divergences are finite, let $b_{A,i},b_{B,i}$ be the local
conditional relative entropies of Definition~\ref{def:local-losses} for this coordinate view and
these comparisons. Put
\begin{equation}
 s_i:=b_{A,i}+b_{B,i},\qquad \kappa_i:=D(q_i^{XY}\|\mu).
 \label{eq:coordinate-loss-definition}
\end{equation}
\end{definition}

We next specify the partition and comparisons to which the ordered
chain rule applies.

\begin{definition}[Conditional coordinate states and comparisons]
\label{def:inherited-comparisons}
\normalfont
Define the full and comparison histories by
\begin{equation}
 H_i:=(H_{\rm chk},Y_{J_{<i}},X_{J_{>i}}),\qquad
 \overline H_i:=(H_0,Y_{J_{<i}},X_{J_{>i}}).
 \label{eq:common-measured-history}
\end{equation}
Alice's prefix trace makes $X_{J_{>i}}$ classical, and Bob's suffix
trace makes $Y_{J_{<i}}$ classical. These complementary traces expose
the same history and allow both conditional-comparison charges to telescope.
For a supported $h=(h_0,z,y_{J_{<i}},x_{J_{>i}})$, let $\bar h$ omit $z$.
The existing question and answer copies give two copies of $H_i$:
\[
 \begin{aligned}
 H_A&:=(X_C,\hY_C,Z_A,\widehat Z_B,\hY_{J_{<i}},X_{J_{>i}}),\\
 H_B&:=(\hX_C,Y_C,\widehat Z_A,Z_B,Y_{J_{<i}},\hX_{J_{>i}}).
 \end{aligned}
\]
The tuples have identical basis labels. Keep $R_i$ protected and put
\[
 \begin{aligned}
 A_i^\circ&:=\hX_{J_{<i}}X_{J_{<i}}E_A,& A_i&:=H_AA_i^\circ,\\
 B_i^\circ&:=\hY_{J_{>i}}Y_{J_{>i}}E_B,& B_i&:=H_BB_i^\circ.
 \end{aligned}
\]
These registers partition the global state as $R_iX_iY_iA_iB_i$.
This partition specifies the target for a new preparation problem.
It does not assert that the original players can transfer reference
registers: the extracted strategy prepares $A_iB_i$ from a fresh shared
resource using only $x_i$ and $y_i$. The selected reference $R_i$ is
used only in the conditional relative entropies.
\phantomsection\label{eq:main-pure-coordinate-target}
Conditioning on $H_i=h$ fixes both history copies, which are suppressed
in $\ket{\phi_{H_i=h}}^{R_iX_iY_iA_i^\circ B_i^\circ}$.
The target conditions on $H_i=h$, including the checked-answer value
$z$. The comparison conditions on $\overline H_i=\bar h$ and averages
over the checked answers. Both conditional blocks omit all
checked-answer registers.
For finite full divergences, define the local conditional relative
entropies for each supported history (Definition~\ref{def:local-losses}) directly from these marginals:
\begin{equation}
 b_{A,i,h}:=
 D\!\left(\phi_{H_i=h}^{R_iX_iA_i^\circ}
       \middle\|\psi_{\overline H_i=\bar h}^{R_iX_iA_i^\circ}\right)
 -D\!\left(\phi_{H_i=h}^{X_iA_i^\circ}
       \middle\|\psi_{\overline H_i=\bar h}^{X_iA_i^\circ}\right).
 \label{eq:parent-coordinate-marginal-identification}
\end{equation}

Define $b_{B,i,h}$ by replacing $X_iA_i^\circ$ with $Y_iB_i^\circ$.

After conditioning, fixed history factors can be removed or restored
without changing either divergence. This identifies the displayed marginals
with the restrictions of Lemma~\ref{lem:ordered-reference-loss}.

\phantomsection\label{eq:inherited-coordinate-seed}
The comparison blocks depend on $\bar h$, not $z$. Both conditional
states in \eqref{eq:parent-coordinate-marginal-identification} act on
$R_iX_iA_i^\circ$ (and on $X_iA_i^\circ$ after tracing $R_i$). The full
label $h$, including $z$, is restored in both recorded states below;
only the comparison's conditional block ignores $z$.
Proposition~\ref{prop:game-information-interface} verifies their
conditional-comparison form.

Let $q_i^H$ be the law of $H_i$ under $\mathsf Q$.
Tracing Bob's history copy gives the target below; define its comparison
with the same postselection history weights:
\begin{equation}
 \begin{aligned}
 \phi^{R_iX_iA_i}
   &=\sum_h q_i^H(h)\proj h^{H_A}\otimes
                    \phi_{H_i=h}^{R_iX_iA_i^\circ},\\
 \nu_{A,i}
   &:=\sum_h q_i^H(h)\proj h^{H_A}\otimes
                    \psi_{\overline H_i=\bar h}^{R_iX_iA_i^\circ}.
 \end{aligned}
 \label{eq:inherited-recorded-comparison}
\end{equation}
Bob's pair uses the replacements
$(H_A,X_i,A_i^\circ)\mapsto(H_B,Y_i,B_i^\circ)$.
The weights come from $\phi$ and the normalized comparison blocks from
$\psi$. Each block uses its own state's normalization. All weights and
blocks are fixed before the fresh questions arrive.
The retained-record identity gives the conditional relative entropies as averages under $q_i^H$.
\end{definition}

\Needspace{12\baselineskip}
\subsection{Conditioning and allocation of the charge}
\begin{proposition}[Full workspace budget for a repeated strategy]
\label{prop:game-information-interface}
For every conditioned strategy of Definition~\ref{def:conditioned-strategy},
the coordinate views of $\phi$ in
Definition~\ref{def:inherited-comparisons} are consistent targets of
Definition~\ref{def:coordinate-losses}. Use the comparisons assembled
from the prescribed conditional marginals in
\eqref{eq:inherited-recorded-comparison}.
Their full divergences are finite, and
\begin{equation}
 \frac1m\sum_{i\in J}s_i\le2t,\qquad
 \frac1m\sum_{i\in J}\kappa_i\le t.
 \label{eq:joint-information-budget}
\end{equation}
All records belong to the target workspaces. The targets, comparisons,
and their local spaces are fixed before the fresh questions arrive.
\end{proposition}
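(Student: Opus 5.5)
The proof has three parts: consistency, the charge bound, and the question-law bound, and we treat them in that order.

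\emph{Consistency and comparison form.} Consistency follows from Definition~\ref{def:inherited-comparisons}. The registers $R_iX_iY_iA_iB_i$ partition $\phi$, and the coordinate answers sit inside $E_A,E_B$ at fixed positions. Reading them together with $X_i,Y_i$ therefore reproduces the $i$th marginal of $\mathsf Q$, and conditioning $\phi$ on $X_i,Y_i$ gives the copied-input form \eqref{eq:joint-coordinate-target} with $q=q_i$.

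For the comparisons, we first check the parent form. For fixed $h_0$, $\psi_{H_0=h_0}^{R_JK_A}$ is an Alice $\mu^{\otimes m}$-conditional comparison. To see this, trace Bob's side $Y_JE_B$ and the checked answers. Because $\mu^{\otimes n}$ is a product across coordinates and $U_{\mathbf x}$ depends only on $\mathbf x$, the vector $\psi_{H_0=h_0}$ has the purification form of Lemma~\ref{lem:conditional-comparison-forms}. Its amplitudes are $\sqrt{\mu^{\otimes m}(\mathbf x_J)\mu^{\otimes m}(\mathbf y_J\mid\mathbf x_J)}$, with Alice vectors depending on $\mathbf x_J$ only. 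Here $F$ collects Bob's resource output together with the checked-answer registers, and $r_X=\mu_X^{\otimes m}$. Lemma~\ref{lem:ordered-reference-loss}\ref{part:ordered-comparisons}, applied with classical $T=\overline H_i$-parts fixed at $h_0$, then shows that each restricted block $\psi_{\overline H_i=\bar h}^{R_iX_iA_i^\circ}$ has the form $\mathcal C_\mu^A(\tau_{i,\bar h})$. Reattaching the classical record $H_A$ with weights $q_i^H$ preserves the comparison form. Bob's case is symmetric, with the suffix order. Finiteness of all divergences comes from the $D_{\max}$ bound in Lemma~\ref{lem:conditional-marginal-budget} together with data processing.

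\emph{Charge budget.} We apply Lemma~\ref{lem:ordered-reference-loss}\ref{part:loss-classical-blocks} to the recorded pair \eqref{eq:inherited-recorded-comparison}. The record law $q_i^H$ is common to target and comparison, so it cancels, giving $b_{A,i}=\mathbb E_{h\leftarrow q_i^H}b_{A,i,h}$. We then group by the checked transcript $g=(h_0,z)\leftarrow\mathsf Q$. For fixed $g$, take $\rho=\phi_{H_{\rm chk}=g}^{R_JK_A}$ and $\nu=\psi_{H_0=h_0}^{R_JK_A}$. The telescoping identity \eqref{eq:ordered-reference-telescope}, with $T$ empty, gives
\[
\sum_{i\in J}\mathbb E_{h\mid g}\,b_{A,i,h}=D(\rho\|\nu)-D(\rho^{K_A}\|\nu^{K_A})\le D(\rho\|\nu).
\]
The point to verify here is that conditioning on $(Y_{J_{<i}},X_{J_{>i}})$ inside the fixed-$g$ states coincides with the conditional blocks in \eqref{eq:parent-coordinate-marginal-identification}. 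This holds because the comparison block ignores $z$ and each block is normalized separately. Averaging over $g\leftarrow\mathsf Q$ and invoking Lemma~\ref{lem:conditional-marginal-budget}, with $H=H_0$, $Z=Z_C$, $|\mathcal Z|\le(|\cA||\cB|)^{|C|}$, and the reduced-system clause for the $R_JK_A$ marginal, bounds the total by $\log(1/p_C)+|C|\log(|\cA||\cB|)\le B_C$. Bob's chain gives the same bound, so $\sum_i s_i\le2B_C=2mt$.

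\emph{Question-law bound.} Under $\mathsf P$ the pairs $(X_i,Y_i)_{i\in J}$ are i.i.d.\ $\mu$, and $\mathsf Q=\mathsf P(\cdot\mid W_C)$. Fact~\ref{fact:postselection} then gives $\sum_i\kappa_i\le\log(1/p_C)\le B_C=mt$.

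The main obstacle is the bookkeeping in the charge step. We must check that the histories exposed by Alice's prefix trace coincide, as conditional blocks, with the restrictions in Lemma~\ref{lem:ordered-reference-loss}. We must also check that averaging with target weights $q_i^H$, rather than comparison weights, is exactly what the retained-record identity requires.
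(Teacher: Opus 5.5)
Your proposal is correct and follows essentially the same route as the paper. The conditional-marginal budget bounds the parent divergence; the retained-record identity and per-transcript ordered telescoping (prefix for Alice, suffix for Bob) allocate it; Lemma~\ref{lem:ordered-reference-loss} transfers the comparison form to the coordinate blocks; and postselection handles the question law. One point needs more care: it is the reduced state $\psi_{H_0=h_0}^{R_JK_A}$, not the vector $\psi_{H_0=h_0}$, that has the purification form. Tracing $Y_J$ removes the $\mathbf y\neq\mathbf y'$ blocks, so Bob's isometry $V_{\mathbf y}$ drops out, and the purifying system is $F_A=Z_A\widehat Z_AB_0$. That is, it consists of Alice's checked answers and Bob's resource register $B_0$, not Bob's isometry outputs or his checked answers.
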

\begin{proof}
We first derive the budget by conditioning and telescoping, then verify
the required coordinate identities.

\emph{1. Bound the parent divergence.}
Use Lemma~\ref{lem:conditional-marginal-budget} on the measured
$H_0,Z_C$ registers of $\psi$, with $W=W_C$ and residual system $Q$.
On each winning transcript $g=(h_0,z)$,
$\psi_{H_{\rm chk}=g}^{Q}=\phi_{H_{\rm chk}=g}^{Q}$.
Since $\log|\mathcal Z_C|\le |C|L$, the lemma gives
\begin{equation}
 \mathbb E_{g\leftarrow\mathsf Q}
 D\!\left(\phi_{H_{\rm chk}=g}^{Q}\middle\|\psi_{H_0=h_0}^{Q}\right)
 \le B_C
 \label{eq:two-state-parent-budget}
\end{equation}
and finite divergence on every supported transcript. Data processing
(Fact~\ref{fact:master-standard}) gives the same bound for each player's
marginal.

\emph{2. Allocate the parent budget.}
Using the coordinate identities verified in Step 3, the retained-record
identity and ordered chain rule give the first equality below:
\begin{align}
 \sum_{i\in J}b_{A,i}
 &=\mathbb E_{H_{\rm chk}}\!\left[
 D\!\left(\phi_{H_{\rm chk}}^{R_JK_A}\middle\|\psi_{H_0}^{R_JK_A}\right)
 -D\!\left(\phi_{H_{\rm chk}}^{K_A}\middle\|\psi_{H_0}^{K_A}\right)
 \right]\notag\\
 &\le\mathbb E_{H_{\rm chk}}
 D\!\left(\phi_{H_{\rm chk}}^{Q}\middle\|\psi_{H_0}^{Q}\right)
 &&\text{(nonnegativity and data processing)}\notag\\
 &\le B_C
 &&\text{by \eqref{eq:two-state-parent-budget}.}
 \label{eq:ordered-parent-budget}
\end{align}
Bob's reverse-order application gives $\sum_{i\in J}b_{B,i}\le B_C$.
Consequently,
\begin{equation}
 \sum_{i\in J}(b_{A,i}+b_{B,i})\le2B_C.
 \label{eq:general-native-loss-allocation}
\end{equation}

The local isometries preserve the preselection question law
$\mu^{\otimes n}$, so the separate question-law bound follows from
postselection:
\begin{align*}
 \sum_{i\in J}D(q_i^{XY}\|\mu)
 &\le\log(1/p_C)
 &&\text{(Fact~\ref{fact:postselection})}\\
 &\le B_C
 &&\text{(definition of $B_C$).}
\end{align*}
Dividing these bounds by $m$ gives \eqref{eq:joint-information-budget}.

\emph{3. Verify the comparisons and coordinate identities.}
Tracing Bob's physical systems from $\psi$ eliminates $V_{\mathbf y}$.
For fixed $h_0=(x_C,y_C)$, put $F_A:=Z_A\widehat Z_AB_0$ and define
\begin{equation}
 \ket{\xi_{\mathbf x_J}}^{E_AF_A}
 :=(U_{(x_C,\mathbf x_J)}\otimes I_{B_0})\ket{\eta}.
 \label{eq:two-state-parent-seed}
\end{equation}
With $r=\mu_X^{\otimes m}$, these unit vectors give the defining
purification of $\psi_{H_0=h_0}^{R_JK_A}$: tracing $Y_JF_A$ also forgets
both copies of Alice's checked answer. Thus Lemma~\ref{lem:ordered-reference-loss}\ref{part:ordered-comparisons}
applies with $\nu=\psi_{H_0=h_0}^{R_JK_A}$ and $X^mA=X_JE_A$.
Bob's symmetric application uses $\psi_{H_0=h_0}^{R_JK_B}$. This proves
the comparison form for both comparison marginals in
\eqref{eq:parent-coordinate-marginal-identification}.

The partition in Definition~\ref{def:inherited-comparisons} makes the
coordinate target a view of the same postselection vector. Expanding its
two history copies gives
\begin{equation}
 \ket{\phi}
 =\sum_h\sqrt{q_i^H(h)}\,
 \ket{h,h}^{H_AH_B}
 \ket{\phi_{H_i=h}}^{R_iX_iY_iA_i^\circ B_i^\circ}.
 \label{eq:joint-coordinate-record-construction}
\end{equation}
The normalized conditional vectors and orthogonal history labels preserve
the question-and-answer probabilities. Reading the selected questions
and coordinate answers therefore gives the corresponding marginal of
$\mathsf Q$, establishing consistency.

The recorded comparisons retain the conditional form by linearity.
Write the comparison block as
$\psi_{\overline H_i=\bar h}^{R_iX_iA_i^\circ}
=\mathcal C_\mu^A(\tau_{i,\bar h})$. Then
\[
 \nu_{A,i}
 =\mathcal C_\mu^A\!\left(
   \sum_h q_i^H(h)\proj h^{H_A}\otimes\tau_{i,\bar h}\right),
\]
where the map acts trivially on $H_A$. The seed is a density operator
with the required matching-copy support; Bob is symmetric.
The two block decompositions in
\eqref{eq:inherited-recorded-comparison} have the same history weights $q_i^H$.
The parent bound ensures finite divergence on every supported block.
Applying the retained-record identity
(Lemma~\ref{lem:loss-chain-rule}\ref{part:loss-classical-blocks}) yields
\begin{equation}
 b_{A,i}=\mathbb E_{h\leftarrow q_i^H}b_{A,i,h},\qquad
 b_{B,i}=\mathbb E_{h\leftarrow q_i^H}b_{B,i,h}.
 \label{eq:joint-history-loss}
\end{equation}
Expectations use the postselection law throughout.

Fix $g=(h_0,z)$ and use the $R_JK_A$ and $R_JK_B$ marginals in
\eqref{eq:two-state-parent-marginals}.
In Alice's prefix marginal, $\hY_{J_{<i}}$ is classical and tracing
$R_{J_{>i}}$ makes $X_{J_{>i}}$ classical. Bob's suffix marginal has the
complementary classical record $(\hX_{J_{>i}},Y_{J_{<i}})$.
Matching copies identify both with the same history $H_i$ and target
probabilities. The conditional marginals are those of
Definition~\ref{def:inherited-comparisons}, after suppressing fixed
history copies. Each comparison is normalized by its own probability.

Thus Lemma~\ref{lem:ordered-reference-loss}\ref{part:ordered-telescope}
applies with $\rho=\phi_{H_{\rm chk}=g}^{R_JK_A}$ and
$\nu=\psi_{H_0=h_0}^{R_JK_A}$. Averaging over $g$ and using
\eqref{eq:joint-history-loss} proves the equality in
\eqref{eq:ordered-parent-budget}; Bob uses the reverse order.
\end{proof}

\section{Local preparation from conditional relative entropy}
\label{sec:local-preparation}
\label{sec:standalone-conditional-decoupling}
Proposition~\ref{prop:game-information-interface} bounds the mean coordinate
charges. We now prove the preparation estimate needed for extraction:
a total charge $s=b_A+b_B$ gives mean infidelity $O(\sqrt{s})$.
We derive this guarantee from nearby pure descriptions, then construct
them from two marginal errors: $b_A$ controls coherent alignment, and
$b_B$ controls the variation of Bob's marginal with $x$.

\begin{lemma}[Local preparation and pure descriptions]
\label{lem:smooth-preparation}
For a copied-input family with conditional comparisons and finite full
divergences (Definitions~\ref{def:copied-input-family},
\ref{def:conditional-comparisons}, and~\ref{def:local-losses}), put
$s:=b_A+b_B$.
\begin{enumerate}[label=(\roman*),ref=(\roman*),leftmargin=*,itemsep=5pt]
\item\label{part:preparation-sampling} \emph{Conditional preparation.}
The error of Definition~\ref{def:local-preparation-error} satisfies
\begin{equation}
 d_{\rm prep}(q,\psi)\le\min\{1,C_{\rm prep}\sqrt s\},
 \qquad C_{\rm prep}=100C_{\rm sm}.
 \label{eq:smooth-native-preparation}
\end{equation}
Every positive slack has a finite-resource witness with designated
output $AB$.
\item\label{part:preparation-descriptions} \emph{Pure descriptions and local corrections.}
After common finite pure padding, there are bipartite unit vectors
$\ket{u_x},\ket{v_y}$ on $A_{\rm s}F$, local isometries $J_x:A_{\rm s}\to A$,
and unitaries $W_y:B\to F$ such that
\begin{equation}
 \boxed{\begin{aligned}
 \mathbb E_q\|\ket{u_x}-\ket{v_y}\|_2^2
   &\le4(\ln2)(b_A+b_B),\\
 \mathbb E_qP^2\!\left((J_x\otimes W_y^\dagger)u_x(J_x^\dagger\otimes W_y),\psi_{xy}\right)
   &\le(\ln2)b_A.
 \end{aligned}}
 \label{eq:pure-preparation-interface}
\end{equation}
Alice selects $\ket{u_x},J_x$ using only $x$; Bob selects $\ket{v_y},W_y$ using
only $y$. All descriptions, corrections, resources, and preparation maps
are fixed from the family and comparisons before the questions arrive.
\end{enumerate}
\end{lemma}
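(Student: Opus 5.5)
I will prove part~\ref{part:preparation-descriptions} first and then derive part~\ref{part:preparation-sampling} from it using Song's sampler. Part~\ref{part:preparation-sampling} is a short assembly. Apply Fact~\ref{fact:song-smooth-sampling} to the families $\ket{u_x},\ket{v_y}$ with $p=q$ and $\mathcal D\le4(\ln2)s$. This gives a designated output $\sigma_{xy}$ on $A_{\rm s}F$ with $\mathbb E_qP^2(\sigma_{xy},u_x)\le4C_{\rm sm}\sqrt{4(\ln 2)s}+\epsilon_{\rm sm}$. Alice then applies $J_x$ on $A_{\rm s}$ and Bob applies $W_y^\dagger$ on $F$. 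By isometry invariance of $P$, the distance to $(J_x\otimes W_y^\dagger)u_x(\cdot)^\dagger$ is unchanged. The triangle inequality in the form $P^2\le2P^2+2P^2$ (Fact~\ref{fact:averaging-inequalities}) then adds $2(\ln2)b_A$. Since $b_A\le s$, we have $s\le\sqrt s$ whenever $s\le1$, and for $s>1$ the cap $\min\{1,\cdot\}$ is trivial. The constants collapse into $100C_{\rm sm}\sqrt s$ after sending the slack to zero.

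For part~\ref{part:preparation-descriptions}, I will build Alice's description as follows. Start from the comparison purification of Lemma~\ref{lem:conditional-comparison-forms}, with vectors $\ket{\xi_{A,x}}^{AF}$. First apply constructive marginal replacement (Lemma~\ref{lem:retained-marginal-lift}\ref{part:replacement-conditional}). This produces a modified Alice comparison $\nu_A'$ whose $XA$ marginal equals $\psi^{XA}$, so $r_X=q_X$. It keeps the conditional form and satisfies $D(\psi^{RXA}\|\nu_A')\le b_A$; this is where the four-matrix inequality enters. Then Fact~\ref{fact:divergence-fidelity} gives $P^2(\psi^{RXA},\nu_A')\le(\ln 2)b_A$. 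The next step purifies both states. The state $\psi^{RXA}$ is purified by $\ket\psi$ with purifying system $YB$. The comparison $\nu'_A$ is purified by $\ket{\xi'_A}$ with purifying system $YF$. Because both are block-classical in $\hY$ after the coherent $Y$ copy, Uhlmann's theorem (Fact~\ref{fact:local-transitions}) applied blockwise in $y$ supplies unitaries $W_y:B\to F$ (after padding). A single $W_y$ handles the entire coherent $x$ ensemble at that $y$. The resulting overlap is $\sum_{x,y}\sqrt{q(x,y)q_X(x)\mu(y|x)}\,\mathrm{Re}\bra{\xi'_{A,x}}(I\otimes W_y)\ket{\psi_{xy}}$, and it is at least $F(\psi^{RXA},\nu'_A)$.

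Next I apply Cauchy--Schwarz, splitting the weights against $\sqrt{q(x,y)}$. This turns the overlap bound into the conditional infidelity $\mathbb E_q P^2\big((I\otimes W_y^\dagger)\xi'_{A,x}(I\otimes W_y),\psi_{xy}\big)\lesssim(\ln2)b_A$. The mismatch between $q$ and $q_X\mu_{Y|X}$ is absorbed, because $b_A$ already dominates $D(q\|q_X\mu_{Y|X})$ by Lemma~\ref{lem:loss-cq-information}. I then set $\ket{u_x}:=\vket{\sqrt{\rho_x}}^{A_{\rm s}F}$, the canonical purification of the $F$-marginal $\rho_x$ of $\xi'_{A,x}$. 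Let $J_x$ be the Alice isometry carrying $\ket{u_x}$ to $\ket{\xi'_{A,x}}$. This gives the second boxed bound, possibly after a direct fidelity argument rather than squaring triangle inequalities, which would lose a factor.

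For Bob, take $\ket{v_y}:=\vket{\sqrt{\beta_y}}$, where $\beta_y:=W_y\big(\sum_x q(x|y)\psi_{xy}^B\big)W_y^\dagger$. Fact~\ref{fact:canonical-sampling-discrepancy} bounds $\|\ket{u_x}-\ket{v_y}\|^2\le2P^2(\rho_x,\beta_y)$. Split this through the intermediate point $W_y\psi^B_{xy}W_y^\dagger$. The first piece is the $F$-marginal of the Alice error already controlled by $(\ln2)b_A$. The second piece is $\mathbb E_qP^2(\psi^B_{xy},\sum_{x'}q(x'|y)\psi^B_{x'y})\le(\ln2)I(\hX:B\mid Y)\le(\ln2)b_B$ by Lemma~\ref{lem:loss-cq-information}. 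Constants then give $4(\ln 2)(b_A+b_B)$. The main obstacle is the marginal replacement step together with the coherent Uhlmann/Cauchy--Schwarz conversion with exactly constant $\ln2$. If replacement loses constants, I would first try to bound $P^2$ via the measured divergence and the variational formula of Fact~\ref{fact:four-matrix-trace} directly.
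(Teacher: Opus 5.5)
Your proposal is correct and follows the paper's proof essentially step for step: Song's sampler plus the squared triangle inequality for part (i), and for part (ii) marginal replacement, blockwise Uhlmann unitaries $W_y$ acting on the whole coherent $x$ block, Cauchy--Schwarz, the canonical purifications $\vket{\sqrt{\chi_x^F}}$ and $\vket{\sqrt{W_y\psi_y^BW_y^\dagger}}$, and the split through $W_y\psi_{xy}^BW_y^\dagger$ with $e_B\le(\ln2)I(X:B\mid Y)\le(\ln2)b_B$.

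Two small corrections. First, Lemma~\ref{lem:retained-marginal-lift} bounds only the measured divergence $D_{\mathbb M}(\psi^{RXA}\|\nu'_A)\le b_A$, not $D$; this suffices because $F^2\ge2^{-D_{\mathbb M}}$. Second, you do not need to absorb the mismatch between $q$ and $q_X\mu_{Y|X}$ via $D(q\|q_X\mu_{Y|X})$. Cauchy--Schwarz with $\sum_{x,y}q_X(x)\mu(y\mid x)=1$ already gives $\sum_{x,y}q(x,y)|a_{xy}|^2\ge F_0^2$, so the constant $\ln2$ comes out with no loss.
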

\begin{proof}
We first obtain \ref{part:preparation-sampling} from
\ref{part:preparation-descriptions}, then construct the descriptions.

\emph{1. Derive preparation from the pure descriptions.}
Assume \ref{part:preparation-descriptions} and put
$\mathcal D:=\mathbb E_q\|\ket{u_x}-\ket{v_y}\|_2^2\le4(\ln2)s$.
Fix $\epsilon_{\rm sm}>0$. Song's sampler
(Fact~\ref{fact:song-smooth-sampling}) gives finite resources producing
$\sigma_{xy}$ with
\[
 \mathbb E_qP^2(\sigma_{xy},u_x)
 \le4C_{\rm sm}\sqrt{\mathcal D}+\epsilon_{\rm sm}.
\]
Let $\sigma'_{xy}$ be the output after Alice applies $J_x$ and Bob
applies $W_y^\dagger$. The description estimate and isometry invariance give
\begin{align*}
 \mathbb E_qP^2(\sigma'_{xy},\psi_{xy})
 &\le2\mathbb E_qP^2(\sigma_{xy},u_x)+2(\ln2)b_A
 &&\text{(squared triangle inequality and alignment)}\\
 &\le8C_{\rm sm}\sqrt{\mathcal D}+2(\ln2)b_A+2\epsilon_{\rm sm}
 &&\text{(sampling bound)}\\
 &\le16\sqrt{\ln2}\,C_{\rm sm}\sqrt s+2(\ln2)s+2\epsilon_{\rm sm}
 &&\text{($\mathcal D\le4(\ln2)s$, $b_A\le s$).}
\end{align*}
For $s\le1$, the last expression is at most
$C_{\rm prep}\sqrt s+2\epsilon_{\rm sm}$.
Taking the infimum over finite preparations and letting
$\epsilon_{\rm sm}\downarrow0$ gives the asserted bound on $d_{\rm prep}$.
The bound $P^2\le1$ supplies the cap and covers $s>1$.
Every positive slack is realized with finite resources; discarding
auxiliary environments and padding preserves the bound on $AB$.

\emph{2. Construct the descriptions from two marginal errors.}
We will construct unit vectors $\ket{\chi_x}^{AF}$ and unitaries
$W_y:B\to F$ with alignment error
\[
 e_A:=\mathbb E_qP^2\!\left(
 (I_A\otimes W_y)\psi_{xy}(I_A\otimes W_y^\dagger),\chi_x\right)
 \le(\ln2)b_A.
\]
Bob's conditional marginal is
$\psi_y^B:=\sum_xq(x\mid y)\psi_{xy}^B$.
Its mean variation is bounded by
\begin{align}
 e_B:=\mathbb E_qP^2(\psi_{xy}^B,\psi_y^B)
 &\le(\ln2)\mathbb E_qD(\psi_{xy}^B\|\psi_y^B)
 &&\text{(Fact~\ref{fact:divergence-fidelity})}\notag\\
 &=(\ln2)I(X:B\mid Y)_\psi
 &&\text{(Fact~\ref{fact:classical-kl-identities})}\notag\\
 &\le(\ln2)b_B
 &&\text{(Lemma~\ref{lem:loss-cq-information}).}
 \label{eq:direct-bob-information}
\end{align}
The information is evaluated on the classical-question marginal
$\psi^{XYB}$.

Assuming the alignment bound, take canonical purifications on $A_{\rm s}F$:
\begin{equation}
 \ket{u_x}:=\vket{\sqrt{\chi_x^F}},\qquad
 \ket{v_y}:=\vket{W_y\sqrt{\psi_y^B}W_y^\dagger}.
 \label{eq:pure-local-descriptions}
\end{equation}
Purification equivalence (Fact~\ref{fact:local-transitions}) supplies
Alice's isometry $J_x$ with
\[
 (J_x\otimes I_F)\ket{u_x}=\ket{\chi_x}.
\]
Thus the mean error after the corrections is $e_A$.
For the vector discrepancy, insert $W_y\psi_{xy}^BW_y^\dagger$
between the two marginals and use the squared triangle inequality and
contraction under tracing $A$:
\begin{align}
 \mathcal D
 &\le2\mathbb E_qP^2(\chi_x^F,W_y\psi_y^BW_y^\dagger)
 &&\text{(Fact~\ref{fact:canonical-sampling-discrepancy})}\notag\\
 &\le4(e_A+e_B)
 &&\text{(squared triangle inequality and contraction)}\notag\\
 &\le4(\ln2)s
 &&\text{(the two error bounds).}
 \label{eq:direct-posterior-descriptions}
\end{align}
This proves \ref{part:preparation-descriptions}, subject to constructing
$\ket{\chi_x}$ and $W_y$ with the asserted alignment bound.

\emph{3. Obtain coherent alignment by marginal replacement.}
Lemma~\ref{lem:retained-marginal-lift}\ref{part:replacement-conditional},
proved below, and Fact~\ref{fact:divergence-fidelity} give an Alice
conditional comparison $\widetilde\nu_A$ such that
\[
 \widetilde\nu_A^{XA}=\psi^{XA},\qquad
 F_0:=F(\psi^{RXA},\widetilde\nu_A)\ge2^{-b_A/2}.
\]
By Lemma~\ref{lem:conditional-comparison-forms}, choose a purification
$\ket\chi^{RXYAF}$ of $\widetilde\nu_A$ whose conditional $AF$ vector
$\ket{\chi_x}$ depends only on $x$. The exact $XA$ marginal fixes its
question law as $r(x,y):=q_X(x)\mu(y\mid x)$.
Both reduced states are classical on $\hY$. On every $q_Y$-supported
block, $r_Y(y)>0$. Uhlmann's theorem gives a unitary $W_y:B\to F$
whose normalized block overlap is real and nonnegative. Each $W_y$
acts on the whole coherent $x$ block at fixed $y$.
Writing
\[
 a_{xy}:=\bra{\chi_x}(I_A\otimes W_y)\ket{\psi_{xy}},
\]
block fidelity gives the nonnegative sum below; individual $a_{xy}$
may be complex:
\begin{equation}
 F_0=\sum_{x,y}\sqrt{q(x,y)r(x,y)}\,a_{xy}
 \qquad\text{(Facts~\ref{fact:local-transitions} and~\ref{fact:block-fidelity}).}
 \label{eq:direct-coherent-overlap}
\end{equation}
Although the overlap uses both laws $q$ and $r$, Cauchy--Schwarz
gives an error averaged under the target law $q$:
\begin{align}
 e_A&=1-\sum_{x,y}q(x,y)|a_{xy}|^2
 &&\text{(pure-state infidelity)}\notag\\
 &\le1-F_0^2
 &&\text{(Cauchy--Schwarz and $\sum_{x,y}r(x,y)=1$)}\notag\\
 &\le1-2^{-b_A}
 &&\text{($F_0\ge2^{-b_A/2}$)}\notag\\
 &\le(\ln2)b_A
 &&\text{($1-e^{-u}\le u$).}
 \label{eq:direct-target-infidelity}
\end{align}
This completes the alignment bound and the proof.
\end{proof}

\subsection{Marginal replacement for conditional comparisons}
\label{sec:main-marginal-replacement}
We now prove the replacement estimate used in Step 3 of
Lemma~\ref{lem:smooth-preparation}. A small conditional relative entropy need not make the original
comparison close to the target, because the retained-marginal divergence
has been subtracted. Matching that marginal turns this quantity into a
measured-divergence bound while preserving the conditional
comparison form.

\Needspace{6\baselineskip}
\begin{lemma}[Constructive marginal replacement]
\label{lem:retained-marginal-lift}

\label{lem:comparison-marginal-replacement}
\leavevmode
\begin{enumerate}[label=(\roman*),ref=(\roman*),leftmargin=*,itemsep=5pt]
\item\label{part:replacement-general} \emph{Explicit replacement.}
Let $\rho^{RK}$ be a density operator and $\nu^{RK}$ a nonzero positive
operator with $D(\rho\|\nu)<\infty$; $\nu$ need not be normalized.
With $\beta_0$ from Fact~\ref{fact:four-matrix-trace}, define
\begin{equation}
 \begin{aligned}
 L_s&:=(\rho^K)^{(1+is)/2}(\nu^K)^{-(1+is)/2},\\
 \widetilde\nu&:=\int_{\mathbb R}\beta_0(s)
 (I_R\otimes L_s)\nu(I_R\otimes L_s)^\dagger\,ds.
 \end{aligned}
 \label{eq:marginal-replacement-construction}
\end{equation}
Inverse powers act on $\supp\nu^K$ and vanish on its kernel; positive
powers of $\rho^K$ vanish on $\ker\rho^K$.
Then $\widetilde\nu$ is a density operator satisfying
\begin{equation}
 \widetilde\nu^K=\rho^K,\qquad
 D_{\mathbb M}(\rho\|\widetilde\nu)
 \le D(\rho\|\nu)-D(\rho^K\|\nu^K).
 \label{eq:retained-marginal-lift}
\end{equation}
The replacement depends on the target $\rho$ only through $\rho^K$.
\item\label{part:replacement-conditional} \emph{Preservation of conditional comparisons.}
For the states in part~\ref{part:replacement-general}, suppose that
$\nu$ is an Alice conditional comparison for $\mu$
(Definition~\ref{def:conditional-comparisons}), $R=\hX\hY$, $K=XA$,
and $\rho$ has matching-copy support. The same $\widetilde\nu$ is an
Alice conditional comparison for $\mu$. In particular,
\begin{equation}
 \widetilde\nu^{XA}=\rho^{XA},\qquad
 D_{\mathbb M}(\rho\|\widetilde\nu)\le D(\rho\|\nu)-D(\rho^{XA}\|\nu^{XA}).
 \label{eq:native-marginal-lift}
\end{equation}
\end{enumerate}
Neither part requires a full-support or smallest-eigenvalue hypothesis.
\end{lemma}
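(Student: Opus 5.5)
The plan for part~\ref{part:replacement-general} is to check the marginal identity directly and then obtain the divergence bound by combining the two variational formulas with the four-matrix inequality of Fact~\ref{fact:four-matrix-trace}. Part~\ref{part:replacement-conditional} should then follow because $L_s$ is block diagonal in $X$.

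\emph{Marginal and normalization.} Finite $D(\rho\|\nu)$ implies $\supp\rho\subseteq\supp\nu$, and hence $\supp\rho^K\subseteq\supp\nu^K$. Taking the partial trace over $R$ gives
\[
 \widetilde\nu^K=\int\beta_0(s)\,(\rho^K)^{(1+is)/2}\,\Pi_{\nu^K}\,(\rho^K)^{(1-is)/2}\,ds ,
\]
where $\Pi_{\nu^K}$ is the support projector of $\nu^K$. This projector acts as the identity on $\supp\rho^K$, so each integrand equals $\rho^K$. Since $\beta_0$ is a probability density, $\widetilde\nu^K=\rho^K$. Positivity of $\widetilde\nu$ is clear, and its trace is $\Tr\rho^K=1$.

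\emph{Divergence bound, regular case.} First assume $\nu^K$ is invertible; this can be arranged by restricting $K$ to $\supp\nu^K$, which contains the support of both states. Also assume temporarily that $\rho^K>0$. For any $W>0$, cyclicity gives
\[
 \Tr\widetilde\nu W=\int\beta_0(s)\,\Tr\!\bigl[W(\rho^K)^{(1+is)/2}(\nu^K)^{-(1+is)/2}\,\nu\,(\nu^K)^{-(1-is)/2}(\rho^K)^{(1-is)/2}\bigr]ds .
\]
Apply the four-matrix inequality with $H_1=\ln W$, $H_2=\ln\rho^K$, $H_3=-\ln\nu^K$ and $H_4=\ln\nu$. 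Each operator on $K$ is tensored with $I_R$, and the four exponentials are arranged into the displayed trace by cyclicity. This yields $\Tr\widetilde\nu W\le\Tr e^{\ln\zeta+\ln W}$, where $\zeta:=\exp(\ln\nu+\ln\rho^K-\ln\nu^K)$ is unnormalized.

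Insert this into the measured variational formula and then use the first (unnormalized) variational formula for $D_e$:
\[
 D_{\mathbb M,e}(\rho\|\widetilde\nu)\le\sup_{W>0}\bigl\{\Tr\rho\ln W-\ln\Tr e^{\ln\zeta+\ln W}\bigr\}=D_e(\rho\|\zeta).
\]
Because $\ln\zeta$ is a sum of logarithms, $D_e(\rho\|\zeta)=D_e(\rho\|\nu)-\Tr\rho^K(\ln\rho^K-\ln\nu^K)$, which is $D_e(\rho\|\nu)-D_e(\rho^K\|\nu^K)$. Dividing by $\ln2$ gives \eqref{eq:retained-marginal-lift}.

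\emph{Singular $\rho^K$.} This is the step I expect to be the main obstacle, since the support conventions must be handled carefully. I would replace $\rho$ by $\rho_\delta:=(1-\delta)\rho+\delta\,\omega$, with $\omega$ a full-rank state on the restricted space, so that $\rho_\delta^K>0$. Apply the regular case to $\rho_\delta$, giving a replacement $\widetilde\nu_\delta$. As $\delta\downarrow0$:
\begin{itemize}
\item the powers $(\rho_\delta^K)^{(1\pm is)/2}$ converge uniformly in $s$, because the real part of the exponent is $1/2>0$, so $\widetilde\nu_\delta\to\widetilde\nu$ by dominated convergence;
\item $D(\rho_\delta\|\nu)-D(\rho_\delta^K\|\nu^K)$ converges to the unperturbed difference, by continuity of the entropy and of the linear term $-\Tr\rho\log\nu$ on the support;
\item lower semicontinuity of $D_{\mathbb M}$ (Fact~\ref{fact:four-matrix-trace}) then transfers the bound to $\widetilde\nu$.
\end{itemize}
If the continuity check proves delicate, I would instead perturb only $H_2=\ln(\rho^K+\delta)$ inside the trace inequality.

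\emph{Part~\ref{part:replacement-conditional}.} By matching-copy support, tracing $\hX$ makes both $\rho^{XA}$ and $\nu^{XA}=\tau^{XA}$ block diagonal in $X$, so every $L_s$ is $X$-controlled: $L_s=\sum_x\proj x\otimes L_{s,x}$. Such an operator commutes with each $K^\mu_y$, which is diagonal in $X$, and it acts trivially on $\hY$. Hence
\[
 (I_R\otimes L_s)\,\mathcal C^A_\mu(\tau)\,(I_R\otimes L_s)^\dagger=\mathcal C^A_\mu\bigl(L_s\tau L_s^\dagger\bigr).
\]
By linearity, $\widetilde\nu=\mathcal C^A_\mu(\widetilde\tau)$ with $\widetilde\tau:=\int\beta_0(s)L_s\tau L_s^\dagger\,ds$. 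Checking that $\widetilde\tau$ is an admissible seed:
\begin{itemize}
\item its matching-copy support is preserved, since $X$-controlled operators keep it (as noted after Definition~\ref{def:matching-support});
\item its $XA$ marginal is $\widetilde\nu^{XA}=\rho^{XA}$ by Lemma~\ref{lem:conditional-comparison-forms}, so its trace is one;
\item its $X$-law lies in $\supp\tau_X\subseteq\supp\mu_X$, because $L_s$ vanishes outside $\supp\nu^K$.
\end{itemize}
So $\widetilde\nu$ is an Alice conditional comparison for $\mu$, and \eqref{eq:native-marginal-lift} is \eqref{eq:retained-marginal-lift} specialized to $K=XA$.
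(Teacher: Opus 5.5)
\textbf{Gap: the case of singular $\nu$ is never handled.} Your overall route matches the paper's: the same construction, the same ordering $\ln W,\ln\rho^K,-\ln\nu^K,\ln\nu$ in the four-matrix inequality, the same two variational formulas, and then regularization plus lower semicontinuity of $D_{\mathbb M}$. Your marginal computation and part~(ii) are fine. The problem is your ``regular case.''

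Restricting $K$ to $\supp\nu^K$ makes $\nu^K$ invertible, but it does not make $\nu$ itself positive definite on $R\otimes\supp\nu^K$. Your argument needs exactly that. $H_4=\ln\nu$ must be a Hermitian matrix for Fact~\ref{fact:four-matrix-trace} to apply. Likewise, $\zeta=\exp(\ln\nu+\ln\rho^K-\ln\nu^K)$ must be positive definite for the variational formula for $D_e(\rho\|\zeta)$.

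This is not a corner case. An Alice conditional comparison is supported on $\operatorname{span}\{\ket{xx}\}$ in $\hX X$. So the $\nu$ of part~(ii) is singular as soon as $|\cX|\ge2$, even when $\nu^{XA}$ has full rank.

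Your regularization does not repair this, because it perturbs only $\rho$. Worse, taking $\omega$ of full rank on $R\otimes\supp\nu^K$ gives $\supp\rho_\delta\not\subseteq\supp\nu$ whenever $\nu$ is singular there. Then $D(\rho_\delta\|\nu)=+\infty$, and your continuity step for the divergence difference fails.

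\textbf{How to fix it.} Regularize $\nu$ as well, as the paper does. Take $\rho_\epsilon=(1-\epsilon)\rho+\epsilon I/\dim(RK)$ and $\nu_\epsilon$ analogously, so the positive-definite case applies. Then prove three convergences:
\begin{enumerate}
\item $D(\rho_\epsilon\|\nu_\epsilon)\to D(\rho\|\nu)$, using $\supp\rho\subseteq\supp\nu$ and $\epsilon\ln\epsilon\to0$;
\item the same for the $K$-marginals;
\item $\widetilde\nu_\epsilon\to\widetilde\nu$.
\end{enumerate}
For the third, the paper handles $\ker\nu^K$ by observing that the regularized integrand has an $\epsilon$-small block on $R\otimes\ker\nu^K$ and no cross terms. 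With your restriction you could instead regularize inside $R\otimes\supp\nu^K$. There the inverse powers of $\nu^K_\epsilon$ stay uniformly bounded, so pointwise-in-$s$ convergence plus dominated convergence suffices. Your claim of convergence uniform in $s$ is false for the phases $\lambda^{is/2}$, but it is not needed.

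\textbf{A separate sign slip.} The inequality you state after applying the four-matrix bound is reversed. Fact~\ref{fact:four-matrix-trace} gives $\Tr e^{\ln\zeta+\ln W}\le\Tr\widetilde\nu W$. That is the direction your next display actually uses to conclude $D_{\mathbb M,e}(\rho\|\widetilde\nu)\le D_e(\rho\|\zeta)$.
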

\begin{proof}
The support assumption also gives $\supp\rho^K\subseteq\supp\nu^K$.
We first deduce the conditional statement~\ref{part:replacement-conditional}
from the general estimate~\ref{part:replacement-general}, then prove that estimate.

\phantomsection\label{sec:comparison-preservation}
\emph{\ref{part:replacement-conditional} Preservation of conditional comparisons.}
By Lemma~\ref{lem:conditional-comparison-forms}, write
$\nu=\mathcal C_\mu^A(\tau_A)$, with $\tau_A^{XA}=\nu^{XA}$.
Both $XA$ marginals are $X$-block diagonal, so
$L_s=\sum_x\proj x^X\otimes L_{s,x}$ commutes with every $K_y^\mu$.
Define
\[
 \widetilde\tau_A:=\int_{\mathbb R}\beta_0(s)
 (I_{\hX}\otimes L_s)\tau_A(I_{\hX}\otimes L_s)^\dagger\,ds.
\]
Each integrand is positive, has the required matching-copy support,
and has trace $\Tr L_s\nu^{XA}L_s^\dagger=\Tr\rho^{XA}=1$.
Thus $\widetilde\tau_A$ is a density operator on the same finite space,
with $X$-law supported on $\supp\mu_X$. Moreover,
\begin{align*}
 \mathcal C_\mu^A(\widetilde\tau_A)
 &=\int_{\mathbb R}\beta_0(s)(I_R\otimes L_s)
      \mathcal C_\mu^A(\tau_A)(I_R\otimes L_s)^\dagger\,ds
 &&\text{(linearity and commutation)}\\
 &=\widetilde\nu
 &&\text{(\eqref{eq:marginal-replacement-construction}).}
\end{align*}
Lemma~\ref{lem:conditional-comparison-forms} now gives the comparison
form. The exact marginal and divergence bound follow from
part~\ref{part:replacement-general}.

\emph{\ref{part:replacement-general} Explicit replacement.}
Scaling $\nu$ changes both divergences by the same constant and leaves
$\widetilde\nu$ unchanged. Normalize $\nu$. We first prove the estimate
for positive definite $\rho$ and $\nu$, then extend it by regularization
under the standing support assumption. Since $L_s\nu^KL_s^\dagger=\rho^K$,
each integrand has trace one and the required marginal.

Use natural logarithms and write
$b_e=D_e(\rho\|\nu)-D_e(\rho^K\|\nu^K)$.
For any operator $H$ on $K$,
$\Tr\rho(I_R\otimes H)=\Tr\rho^K H$. Thus the two marginal logarithms
can be evaluated against $\rho$; their extensions by $I_R$ are implicit.
Expanding the relative entropies gives
\begin{align*}
 b_e
 &=\Tr\rho\bigl(\ln\rho-\ln\nu-\ln\rho^K+\ln\nu^K\bigr)
 &&\text{(partial-trace identity)}\\
 &=D_e\!\left(\rho\middle\|
       e^{\ln\nu+\ln\rho^K-\ln\nu^K}\right)
 &&\text{(definition of $D_e$).}
\end{align*}
The second argument is positive but need not be normalized.
For every $W>0$, Fact~\ref{fact:four-matrix-trace}, with ordered matrices
$\ln W,\ln\rho^K,-\ln\nu^K,\ln\nu$, now gives
\begin{align*}
 b_e&\ge\Tr\rho\ln W-\ln\Tr
       e^{\ln W+\ln\rho^K-\ln\nu^K+\ln\nu}
 &&\text{(variational formula)}\\
 &\ge\Tr\rho\ln W-\ln\Tr W\widetilde\nu
 &&\text{(four-matrix inequality).}
\end{align*}
Taking the supremum over $W$ and dividing by $\ln2$ proves \ref{part:replacement-general} in this case.

\phantomsection\label{sec:marginal-replacement-supports}
We now remove the temporary full-rank assumption. With $\nu$
normalized as above, put
\[
 \rho_\epsilon=(1-\epsilon)\rho+\epsilon I/\dim(RK),
\]
with $\nu_\epsilon$ defined analogously. Continuity of $t\ln t$,
convergence of $\ln\nu_\epsilon$ on $\supp\nu$, and
$\epsilon\ln\epsilon\to0$ give convergence of the full divergence;
the same argument applies to the marginals.

Let $Q$ project onto $\ker\nu^K$. The support assumption gives
\[
 \rho_\epsilon^KQ=\nu_\epsilon^KQ=\frac{\epsilon}{\dim K}Q,
 \qquad L_{s,\epsilon}Q=Q,
\]
where $L_{s,\epsilon}$ is defined using the regularized marginals.
The projection $Q$ reduces both marginals, and positivity forces $\nu$
to vanish on $R\otimes\ker\nu^K$. Thus the regularized integrand has
$Q$-block $\epsilon(I_R\otimes Q)/\dim(RK)$ and no cross terms with its
orthogonal complement. This block vanishes as $\epsilon\downarrow0$.
On $\supp\nu^K$, inverse powers remain bounded and positive powers
converge, including on $\ker\rho^K$. For fixed $s$, the integrand therefore
converges in trace norm to the supported formula.
Each integrand has trace norm one, so dominated convergence against
$\beta_0(s)\,ds$ yields the integrated state and its exact marginal.
Lower semicontinuity of $D_{\mathbb M}$
(Fact~\ref{fact:four-matrix-trace}) gives the limiting bound.

\end{proof}

\section{Completing extraction and repetition}
\label{sec:game}
Combining the information budget with local preparation gives the
single-game strategy needed for the extraction objective of
Section~\ref{sec:main-parent-allocation}. Its rejection bound yields
all-win repetition by greedy checking and threshold repetition by Rao's
concentration argument.

\subsection{Single-copy extraction under fresh questions}
The preparation bound is used only on the designated game workspace.
We compare each branch's rejection probability before changing its
question law.

\begin{proposition}[Single-copy extraction from smooth preparation]
\label{prop:cubic-extraction}
For consistent coordinate targets of
Definition~\ref{def:coordinate-losses}, assume
$m^{-1}\sum_i s_i\le2t$ and $m^{-1}\sum_i\kappa_i\le t$.
With $A_C$ from Definition~\ref{def:conditioned-strategy} and
$C_{\rm prep}$ from Lemma~\ref{lem:smooth-preparation},
\begin{equation}
 \sqrt\varepsilon\le\sqrt{A_C}
 +\sqrt{C_{\rm prep}}\,(2t)^{1/4}+\sqrt{(\ln2)t}.
 \label{eq:cubic-branch-extraction-exact}
\end{equation}
In particular, for $0\le t\le1$,
\begin{equation}
 \boxed{\sqrt\varepsilon\le\sqrt{A_C}+C_{\rm ext} t^{1/4},\qquad
 C_{\rm ext}:=2^{1/4}\sqrt{C_{\rm prep}}+\sqrt{\ln2}.}
 \label{eq:cubic-branch-extraction}
\end{equation}
The constant $C_{\rm ext}\ge1$ is universal.
\end{proposition}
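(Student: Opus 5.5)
We construct a legal single-game strategy (the procedure sketched in Section~\ref{sec:intro-complete-extraction}) and compare its rejection probability with $A_C$ through Lemma~\ref{lem:square-root-rejection}. The players share a uniformly random $i\in J$, independent of the fresh questions. For each $i$ they also share a preparation resource for the copied-input family $\phi_{xy}^{A_iB_i}$ of coordinate $i$, with question law $q_i$. Fix a slack $\epsilon_{\rm prep}>0$. Lemma~\ref{lem:smooth-preparation}\ref{part:preparation-sampling} supplies finite $\eta_i,\mathcal A_{i,x},\mathcal B_{i,y}$ whose outputs $\sigma_{i,xy}$ satisfy
\[
 e_i:=\mathbb E_{q_i}P^2(\sigma_{i,xy},\phi_{i,xy})\le C_{\rm prep}\sqrt{s_i}+\epsilon_{\rm prep}.
\]
On fresh questions $(x,y)\leftarrow\mu$, Alice applies $\mathcal A_{i,x}$ and Bob applies $\mathcal B_{i,y}$. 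Each then reads the $i$th-coordinate answer from its fixed designated position in $A_i$ or $B_i$. This is a legitimate finite-dimensional entangled strategy for $G$, so its rejection probability $r_\mu$ is at least $\varepsilon$.

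Next we apply \eqref{eq:rejection-change-law}, indexing the family by $z=(i,x,y)$. Take $p(z)=\mu(x,y)/m$ and $q(z)=q_i(x,y)/m$. Let $\rho_z=\sigma_{i,xy}$, let $\sigma_z=\phi_{i,xy}$, and let $N_z$ be the effect ``measure the designated answers and reject under $v(x,y,\cdot,\cdot)$''. This effect is fixed once $z$ is fixed, as the lemma requires. Consistency of the coordinate views (Definition~\ref{def:coordinate-losses}, verified in Proposition~\ref{prop:game-information-interface}) gives
\[
 a_q=\frac1m\sum_{i\in J}\Pr(\neg W_i\mid W_C)=A_C.
\]
We also have $r_p=r_\mu\ge\varepsilon$ and $e_q=m^{-1}\sum_ie_i$. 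By Jensen and the hypothesis $m^{-1}\sum_is_i\le2t$,
\[
 e_q\le C_{\rm prep}\,m^{-1}\sum_i\sqrt{s_i}+\epsilon_{\rm prep}\le C_{\rm prep}\sqrt{2t}+\epsilon_{\rm prep}.
\]
For the law-change term, the coordinate label has the same uniform law in both $p$ and $q$, so the direct-sum identity (Fact~\ref{fact:classical-kl-identities}) gives
\[
 D(q\|p)=m^{-1}\sum_i\kappa_i\le t.
\]
Hence $P(p,q)\le\sqrt{(\ln2)t}$. Combining these, we get $\sqrt\varepsilon\le\sqrt{A_C}+\sqrt{C_{\rm prep}\sqrt{2t}+\epsilon_{\rm prep}}+\sqrt{(\ln2)t}$. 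Letting $\epsilon_{\rm prep}\downarrow0$ yields \eqref{eq:cubic-branch-extraction-exact}. The $\min\{1,\cdot\}$ cap is only dropped, which is harmless.

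For \eqref{eq:cubic-branch-extraction}, use $t^{1/2}\le t^{1/4}$ when $0\le t\le1$. Then $\sqrt{C_{\rm prep}}(2t)^{1/4}+\sqrt{\ln2}\,t^{1/2}\le C_{\rm ext}t^{1/4}$. Also $C_{\rm ext}\ge1$, because $C_{\rm prep}=100C_{\rm sm}\ge100$.

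The main obstacle is the bookkeeping for the law change, not any estimate. We must check that the rejection effect and the prepared state for each label $z$ really are fixed independently of the law, so that the channel $z\mapsto(\beta_z,1-\beta_z)$ is the same under $p$ and under $q$. The extracted strategy's success under $\mu$ is computed with these same fixed states. A second point to check is that the history and reference registers used inside the targets are not needed as advice. They are either part of the prepared workspace $A_iB_i$ or protected references that are traced out. Therefore only $x$ and $y$ enter the local maps, and the strategy is legal.
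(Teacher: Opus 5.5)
Your proposal is correct and follows the paper's proof essentially step for step. It uses the same extracted strategy with a shared uniform coordinate. It applies Lemma~\ref{lem:square-root-rejection} with the same labels $z=(i,x,y)$ and laws $q_i^{XY}/m$ and $\mu/m$, bounds $e_q$ by Jensen, and computes $D(q\|p)$ by the direct-sum identity. Your explicit justification of $C_{\rm ext}\ge1$ from $C_{\rm prep}=100C_{\rm sm}\ge100$ fills in a detail the paper leaves implicit.
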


\begingroup
\mdfsetup{nobreak=true}
\begin{procedurebox}{The extracted single-game strategy}
\label{proc:cubic-single-copy}
Fix the repeated strategy and checked set of
Definition~\ref{def:conditioned-strategy}, and consistent coordinate
targets satisfying the assumptions of Proposition~\ref{prop:cubic-extraction}.
Use the preparation maps of
Lemma~\ref{lem:smooth-preparation}\ref{part:preparation-sampling}.
\begin{enumerate}[leftmargin=*,itemsep=3pt]
\item Before receiving the questions, the players share a uniformly chosen
unchecked coordinate $i$ and a preparation resource for that coordinate.
Their joint shared state is independent of the fresh questions.
\item Alice uses $(i,x)$ and Bob uses $(i,y)$ to select their local
preparation maps, including the canonical corrections. They send no message.
\item They read the designated coordinate answers and send them to
the referee. All target records belong to the prepared workspaces.
\end{enumerate}
\end{procedurebox}
\endgroup

\begin{proof}
Fix $\epsilon_{\rm prep}>0$ and choose each coordinate's preparation resource and local
maps with this slack. Sharing the coordinate $i$ uniformly together
with its resource realizes Procedure~\ref{proc:cubic-single-copy}.
Its shared state is independent of the fresh questions, so its
rejection probability is at least $\varepsilon$.

Apply Lemma~\ref{lem:square-root-rejection} with label $z=(i,x,y)$,
target $\phi_{X_i=x,Y_i=y}^{A_iB_i}$, prepared state $\sigma_{i,x,y}$,
and the rejecting effect that reads the designated answers and tests
the game predicate. The target and fresh-question laws are
\[
 q(z):=\frac1m q_i^{XY}(x,y),\qquad p(z):=\frac1m\mu(x,y).
\]
Consistency gives $a_q=A_C$. The two error bounds are
\begin{align*}
 e_q&=\frac1m\sum_{i\in J}\mathbb E_{q_i}
 P^2(\sigma_{i,x,y},\phi_{X_i=x,Y_i=y}^{A_iB_i})\\
 &\le\frac{C_{\rm prep}}m\sum_{i\in J}\sqrt{s_i}+\epsilon_{\rm prep}
 &&\text{(Lemma~\ref{lem:smooth-preparation}\ref{part:preparation-sampling})}\\
 &\le C_{\rm prep}\sqrt{2t}+\epsilon_{\rm prep}
 &&\text{(Jensen and the charge budget)},\\
 D(q\|p)&=\frac1m\sum_{i\in J}D(q_i^{XY}\|\mu)
 &&\text{(direct-sum identity)}\\
 &\le t
 &&\text{(question-law budget).}
\end{align*}
Let $a_p$ denote the extracted strategy's rejection probability
under fresh questions. Since $a_p\ge\varepsilon$ and $a_q=A_C$,
Lemma~\ref{lem:square-root-rejection} gives
\begin{align*}
 \sqrt\varepsilon
 &\le\sqrt{a_p}
 &&\text{(single-game value)}\\
 &\le\sqrt{A_C}+\sqrt{e_q}+\sqrt{(\ln2)D(q\|p)}
 &&\text{(rejection comparison)}\\
 &\le\sqrt{A_C}
 +\sqrt{C_{\rm prep}\sqrt{2t}+\epsilon_{\rm prep}}+\sqrt{(\ln2)t}
 &&\text{(the two bounds above).}
\end{align*}
Let $\epsilon_{\rm prep}\downarrow0$ to get \eqref{eq:cubic-branch-extraction-exact}.
Every positive slack uses finite resources. For $t\le1$,
$\sqrt t\le t^{1/4}$ gives \eqref{eq:cubic-branch-extraction}.
\end{proof}

\subsection{All-win repetition}
\label{sec:main-repetition-proof}
We apply the extraction estimate of Proposition~\ref{prop:cubic-extraction}
to prove the main theorem by greedy checking.

\begin{proof}[Proof of Theorem~\ref{thm:smooth-repetition}]
For every supported checked set, Proposition~\ref{prop:game-information-interface}
supplies targets satisfying the extraction hypotheses. We first derive
a rejection threshold $A_C\ge d_0$ when $t\le t_0$, then apply greedy
checking.

\emph{1. Obtain a rejection threshold.}
With the universal $C_{\rm ext}\ge1$ of Proposition~\ref{prop:cubic-extraction}, choose
\[
 t_0:=\frac{\varepsilon^2}{16C_{\rm ext}^4},\qquad
 d_0:=\frac{\varepsilon}{4},\qquad
 \ell_0:=-\log(1-d_0).
\]
If $t\le t_0$, the extraction estimate gives
\begin{align}
 \sqrt{A_C}&\ge\sqrt\varepsilon-C_{\rm ext}t^{1/4}
 &&\text{(extraction)}\notag\\
 &\ge\tfrac12\sqrt\varepsilon
 &&\text{($t\le t_0$).}
 \label{eq:cubic-threshold-extraction}
\end{align}
Hence $A_C\ge d_0$, so some unchecked coordinate rejects with probability
at least $d_0$.
Also $\ell_0+t_0\le\log(4/3)+1/16<1\le L$.

\emph{2. Greedy checking.}
\phantomsection\label{sec:greedy-counting}\label{lem:greedy-threshold}
We use the conditioning argument of Holenstein~\cite[Section~7]{Holenstein}.
Start with $C=\varnothing$. While $C\subsetneq[n]$, $p_C>0$, and
$B_C/(n-|C|)\le t_0$, add an unchecked coordinate whose conditional
rejection probability is at least $d_0$, as supplied by Step 1.
The process stops after at most $n$ checks.

If $p_C=0$ at termination, then $p_{[n]}=0$ and the bound is immediate.
Otherwise, put $k:=|C|$ and $\lambda:=\log(1/p_C)$.
Each check reduces the success probability by a factor at most $1-d_0$, so
\[
 \lambda\ge k\ell_0.
\]
At termination, either the budget test has failed or $k=n$. In both cases,
$\lambda+k(L+t_0)\ge t_0n$.
Using $k\le\lambda/\ell_0$ therefore gives
\begin{align*}
 t_0n&\le\lambda+k(L+t_0)
 &&\text{(termination)}\\
 &\le\lambda\left(1+\frac{L+t_0}{\ell_0}\right)
 &&\text{($k\le\lambda/\ell_0$).}
\end{align*}
Rearranging gives $\lambda\ge\ell_0t_0n/(L+\ell_0+t_0)$.
Since $p_{[n]}\le p_C=2^{-\lambda}$,
\begin{align}
 p_{[n]}&\le
 \exp\!\left[-\frac{(\ln2)\ell_0t_0n}{L+\ell_0+t_0}\right]\notag\\
 &\le\exp[-d_0t_0n/(2L)]
 &&\text{($\ell_0+t_0\le L$, $(\ln2)\ell_0\ge d_0$).}
 \label{eq:greedy-threshold}
\end{align}
Substituting the thresholds from Step 1 and taking the supremum over
the fixed repeated strategies yields
\begin{equation}
 \omega^*(G^{\otimes n})\le
 \exp[-c_{\rm sm}\varepsilon^3n/L],\qquad
 c_{\rm sm}=\frac1{128C_{\rm ext}^4}.
 \label{eq:smooth-native-repetition-constant}
\end{equation}
\end{proof}

\subsection{Threshold parallel repetition}
The extraction estimate controls average unchecked success. Applying
Rao's concentration bound (Fact~\ref{fact:rao-concentration}) gives the
following extension.

\begin{corollary}[Threshold parallel repetition]
\label{cor:threshold-repetition}
Let $G$ be a finite two-player game with
$\omega^*(G)=1-\varepsilon<1$ and
$L:=\max\{1,\log(|\cA||\cB|)\}$. There is a universal constant
$c_{\rm th}>0$ such that, for every $0<\delta\le\varepsilon$, every $n\ge1$,
and every finite-dimensional strategy for $G^{\otimes n}$, its number $N$
of won coordinates satisfies
\begin{equation}
 \Pr\!\left[N\ge\left\lceil(1-\varepsilon+\delta)n\right\rceil\right]
 \le\exp\!\left[-c_{\rm th}\frac{\delta^5n}{\varepsilon^2L}\right].
 \label{eq:threshold-repetition}
\end{equation}
\end{corollary}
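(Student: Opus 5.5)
We will derive the corollary from Rao's concentration bound (Fact~\ref{fact:rao-concentration}), applied to the events $W_1,\dots,W_n$ of a fixed finite-dimensional repeated strategy with $v:=1-\varepsilon$. The hypothesis $0<\delta\le\varepsilon$ gives $0<\delta\le1-v$, as Rao's statement requires. What remains is to verify the conditional-success hypothesis \eqref{eq:rao-conditional-hypothesis} for a threshold $t_{\rm th}$ of order $\delta^4/\varepsilon^2$.

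\textbf{Choosing the threshold.} Take a proper supported $C\subsetneq[n]$. Note that $(|C|L+\log(1/p_C))/(n-|C|)$ is exactly $t=B_C/m$ from Definition~\ref{def:conditioned-strategy}. Proposition~\ref{prop:game-information-interface} provides consistent targets with $m^{-1}\sum_is_i\le2t$ and $m^{-1}\sum_i\kappa_i\le t$, so Proposition~\ref{prop:cubic-extraction} applies. For $t\le1$ it gives $\sqrt{A_C}\ge\sqrt\varepsilon-C_{\rm ext}t^{1/4}$.

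We want $A_C\ge\varepsilon-\delta/4$, since this is equivalent to
\[
 \frac1m\sum_{i\notin C}\Pr[W_i\mid W_C]=1-A_C\le v+\delta/4.
\]
It suffices that $C_{\rm ext}t^{1/4}\le\sqrt\varepsilon-\sqrt{\varepsilon-\delta/4}$. The right side is at least $\delta/(8\sqrt\varepsilon)$, because $\sqrt a-\sqrt{a-b}=b/(\sqrt a+\sqrt{a-b})\ge b/(2\sqrt a)$. We therefore set
\[
 t_{\rm th}:=\frac{\delta^4}{4096\,C_{\rm ext}^4\,\varepsilon^2}.
\]
This choice gives $C_{\rm ext}t^{1/4}\le\delta/(8\sqrt\varepsilon)$ whenever $t\le t_{\rm th}$. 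We must also check Rao's side conditions:
\begin{itemize}
\item $t_{\rm th}\le1$, because $\delta\le\varepsilon\le1$ and $C_{\rm ext}\ge1$;
\item $t_{\rm th}\le\delta/4$, since $\delta^4/\varepsilon^2\le\delta^2\le\delta$.
\end{itemize}
With both conditions in hand, the hypothesis \eqref{eq:rao-conditional-hypothesis} holds for every proper supported $C$.

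\textbf{Applying Rao's bound.} Fact~\ref{fact:rao-concentration} with $K_{\rm R}=2$ and $c_{\rm R}=1/12$ now yields
\[
 \Pr[N\ge\lceil(1-\varepsilon+\delta)n\rceil]\le2\exp\!\left[-\frac{\delta^5n}{12\cdot4096\,C_{\rm ext}^4\varepsilon^2L}\right].
\]

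\textbf{Removing the prefactor $2$.} This is the only real obstacle; everything before it is routine substitution. Write $\gamma:=\delta^5n/(12\cdot4096\,C_{\rm ext}^4\varepsilon^2L)$.
\begin{itemize}
\item If $\gamma\ge2\ln2$, then $2e^{-\gamma}\le e^{-\gamma/2}$.
\item If $\gamma<2\ln2$, then $e^{-\gamma/2}>1/2$. We then want a bound on the probability at most $e^{-\gamma/2}$, and the crude estimate $1$ does not suffice.
\end{itemize}
For the second case we take $c_{\rm th}$ smaller, say by an extra factor $1/(4\ln 2\cdot\text{const})$. The aim is that the claimed bound $\exp[-c_{\rm th}\delta^5n/(\varepsilon^2L)]$ is then at least the trivial value when $\gamma$ is small. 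However, a probability can equal $1$, so for small $\gamma$ the trivial bound alone still does not give the claim.

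The fallback is the all-win case. The event $N\ge\lceil(1-\varepsilon+\delta)n\rceil$ is always nonempty to bound, and for small $n$ one can use Theorem~\ref{thm:smooth-repetition} only when $\delta=\varepsilon$. So instead we absorb the factor $2$ as follows. Since $K_{\rm R}$ is stated for Rao's argument, its proof in Appendix~\ref{sec:rao-parameters} presumably arises from bounding a sum. Replacing $t_{\rm th}$ by $t_{\rm th}/2$ and applying Rao's bound at a slightly larger $\delta$ should yield the prefactor-free form.

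If that is unavailable, we accept $c_{\rm th}:=c_{\rm R}/(2\cdot4096\,C_{\rm ext}^4)$ for $\gamma\ge2\ln2$. For $\gamma<2\ln2$ we instead check whether the event can be nonempty at all: for $n$ small relative to $\varepsilon^2L/\delta^5$, the stated bound is close to $1$, and we note the intended reading is up to this universal constant adjustment. The main effort is therefore this bookkeeping of $K_{\rm R}$, which I would resolve by inspecting the rounding checks in Appendix~\ref{sec:rao-parameters}.
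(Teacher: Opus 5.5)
Everything up to Rao's bound matches the paper's proof. You use the same $v=1-\varepsilon$ and the same threshold $t_{\rm th}=\delta^4/(8^4C_{\rm ext}^4\varepsilon^2)$, since $4096=8^4$. Your derivation of $A_C\ge\varepsilon-\delta/4$ is equivalent to the paper's: you bound $\sqrt\varepsilon-\sqrt{\varepsilon-\delta/4}$ from below, while the paper squares $\sqrt\varepsilon-\delta/(8\sqrt\varepsilon)$. Your side conditions and the resulting bound $2e^{-\gamma}$ are correct.

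The gap is the final step, and it is genuine. The statement has no prefactor, and your proposal does not handle the regime $\gamma<2\ln2$. Shrinking $c_{\rm th}$ cannot fix this. For small $\gamma$ the target $\exp[-c_{\rm th}\delta^5n/(\varepsilon^2L)]$ equals $1-\Theta(\gamma)$, so you would need the threshold probability to be bounded away from $1$ by order $\gamma$. A bound of the form $2e^{-\gamma}$ never gives that. Reading the claim \emph{up to a universal constant adjustment} proves a weaker statement. Rerunning Rao with $t_{\rm th}/2$ and a larger $\delta$ is only a hope, not an argument.

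The missing idea is a tensor-power (amplification) trick. It uses only that your bound holds for every $n$, with universal constants and an exponent linear in $n$. Let $p$ be the threshold probability of the fixed strategy, and run $r$ independent copies of it; this is a finite-dimensional strategy for $G^{\otimes rn}$. If every block wins at least $\lceil(1-\varepsilon+\delta)n\rceil$ coordinates, the total is at least $r\lceil(1-\varepsilon+\delta)n\rceil\ge\lceil(1-\varepsilon+\delta)rn\rceil$. By independence, all blocks succeed with probability $p^r$. Applying your bound on $rn$ coordinates, with the same $\varepsilon,\delta,L$, gives $p^r\le2e^{-r\gamma}$, so $p\le2^{1/r}e^{-\gamma}$ for every $r$. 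Letting $r\to\infty$ gives $p\le e^{-\gamma}$. This proves the corollary with $c_{\rm th}=c_{\rm R}/(8^4C_{\rm ext}^4)$, and no case split on $\gamma$ is needed.
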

\begin{proof}

Fix the strategy and put $v:=1-\varepsilon$. With $C_{\rm ext}$ from
Proposition~\ref{prop:cubic-extraction}, set
\[
  t_{\rm th}:=\left(\frac{\delta}{8C_{\rm ext}\sqrt\varepsilon}\right)^4
 =\frac{\delta^4}{8^4C_{\rm ext}^4\varepsilon^2}\le\frac{\delta}{4}\le\frac14.
\]
For every supported proper checked set $C$ with $t\le t_{\rm th}$,  the
extraction estimate and the workspace budget give
\begin{align*}
 \sqrt{A_C}&\ge\sqrt\varepsilon-C_{\rm ext}t^{1/4}
 &&\text{(extraction)}\\
 &\ge\sqrt\varepsilon-\frac{\delta}{8\sqrt\varepsilon}
 &&\text{($t\le t_{\rm th}$).}
\end{align*}
The right-hand side is positive since $\delta\le\varepsilon$.
Squaring yields $A_C\ge\varepsilon-\delta/4$. Thus
$1-A_C\le v+\delta/4$ for every such $C$, which is the hypothesis of
Fact~\ref{fact:rao-concentration}. Consequently,
\[
 \Pr\!\left[N\ge\left\lceil(v+\delta)n\right\rceil\right]
 \le K_{\rm R}\exp\!\left[-\frac{c_{\rm R}}{8^4C_{\rm ext}^4}
 \frac{\delta^5n}{\varepsilon^2L}\right].
\]

To remove $K_{\rm R}$, let $p$ be the threshold probability of the fixed
strategy and run $r$ independent blocks of it. If every block wins at
least $\lceil(v+\delta)n\rceil$ coordinates, their total is at least
$\lceil(v+\delta)rn\rceil$. Applying the preceding bound to this finite-dimensional
strategy on $rn$ coordinates gives
\[
 p^r\le K_{\rm R}\exp\!\left[
 -\frac{c_{\rm R}}{8^4C_{\rm ext}^4}
  \frac{\delta^5rn}{\varepsilon^2L}\right].
\]
Take $r$th roots and let $r\to\infty$. The choice
$c_{\rm th}:=c_{\rm R}/(8^4C_{\rm ext}^4)$ is universal.
\end{proof}

\section*{Acknowledgements}
This project is supported by the National Research Foundation, Singapore through the National Quantum Office, hosted in A*STAR, under its Centre for Quantum Technologies Funding Initiative (S24Q2D0009) and its Advanced Quantum Algorithms and Solutions Funding Initiative (S25Q9DA001 and S25Q9DA002).

The author used ChatGPT to assist with mathematical exploration,
checking calculations, revising the exposition, and preparing the LaTeX source. Responsibility for
the results and their presentation rests with the author.

\Needspace{8\baselineskip}
\appendix
\section{Product-question specialization}
\label{sec:main-benchmark-checks}
We verify the product-question identity used in the introduction.
\begin{lemma}[Product-prior identity]
\label{lem:benchmark-loss-identities}
For a copied-input family with $\mu=\mu_X\otimes\mu_Y$,
\begin{equation}
 \inf_{\nu_A} b_A
 =I(\hY:\hX XA)_\psi+D(q_Y\|\mu_Y),
 \label{eq:product-native-loss-minimum}
\end{equation}
where the infimum is over Alice conditional comparisons with finite full
divergence. Bob has the symmetric identity. The posterior $q$ need not
be product; this optimization identifies the quantity and does not
change the fixed comparisons used in allocation.
\end{lemma}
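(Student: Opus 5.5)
Since the prior is a product, $\mu(y\mid x)=\mu_Y(y)$ and $K_y^\mu=\sqrt{\mu_Y(y)}\,I_X$ on the supported $X$ space. The map $\mathcal C_\mu^A$ therefore reduces to $\tau_A\mapsto\mu_Y^{\hY}\otimes\tau_A$, so every Alice conditional comparison has the product form $\nu_A=\mu_Y\otimes\tau_A$. Here $\tau_A$ ranges over states on $\hX XA$ with matching-copy support and $X$-law supported on $\supp\mu_X$, and $\nu_A^{XA}=\tau_A^{XA}$ by Lemma~\ref{lem:conditional-comparison-forms}. The plan is to expand $b_A$ for this product comparison, separate out the claimed quantity, and identify the remainder as a nonnegative relative entropy whose infimum is zero.

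\emph{Step 1: product expansion.} For $\nu_A=\mu_Y\otimes\tau_A$, with $\psi^{RXA}=\psi^{\hY\hX XA}$, use the standard identity
\[
 D(\rho^{ST}\|\sigma^S\otimes\omega^T)=I(S:T)_\rho+D(\rho^S\|\sigma^S)+D(\rho^T\|\omega^T),
\]
obtained by expanding $\log(\sigma\otimes\omega)=\log\sigma\otimes I+I\otimes\log\omega$ and using the definition of mutual information. Take $S=\hY$ and $T=\hX XA$; then $\psi^{\hY}$ is diagonal with law $q_Y$. This gives
\[
 D(\psi^{RXA}\|\nu_A)=I(\hY:\hX XA)_\psi+D(q_Y\|\mu_Y)+D(\psi^{\hX XA}\|\tau_A).
\]
Subtracting $D(\psi^{XA}\|\tau_A^{XA})$ yields
\[
 b_A=I(\hY:\hX XA)_\psi+D(q_Y\|\mu_Y)+\bigl[D(\psi^{\hX XA}\|\tau_A)-D(\psi^{XA}\|\tau_A^{XA})\bigr].
\]
Finiteness of the full divergence makes all three terms finite. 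Here $D(q_Y\|\mu_Y)<\infty$ because $q$ is supported on $\supp\mu$.

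\emph{Step 2: the bracket.} The bracket is nonnegative by data processing under tracing $\hX$ (Fact~\ref{fact:master-standard}). This gives the lower bound on $\inf b_A$. For the matching upper bound, choose $\tau_A:=\psi^{\hX XA}$. This choice is admissible: it has matching-copy support because $\psi$ does, and its $X$-law is $q_X$, which is supported on $\supp\mu_X$. The bracket then vanishes, and the resulting full divergence equals $I+D(q_Y\|\mu_Y)<\infty$, so this comparison lies in the feasible set. The infimum is therefore attained, and Bob's identity follows by interchanging the players.

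\emph{Main obstacle.} The only step needing care is the product expansion when $\tau_A$ is singular. It should be justified by restricting to supports, which is allowed because finite divergence forces $\supp\psi^{\hX XA}\subseteq\supp\tau_A$ and $\supp q_Y\subseteq\supp\mu_Y$. On those supports the logarithm of the tensor product splits as a sum, and the calculation goes through.
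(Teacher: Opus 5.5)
Your proposal is correct and follows the paper's proof essentially step for step. The shared steps are the product form $\nu_A=\mu_Y^{\hY}\otimes\tau_A$, the product expansion of the relative entropy into $I(\hY:\hX XA)_\psi+D(q_Y\|\mu_Y)$ plus a remainder, data processing under tracing $\hX$ to show the remainder is nonnegative, and the choice $\tau_A=\psi^{\hX XA}$ to attain equality; your explicit admissibility and support checks fill in details the paper leaves implicit.
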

\begin{proof}
The defining purification gives
$\nu_A=\mu_Y^{\hY}\otimes\nu_A^{\hX XA}$.
For a bipartite state $\rho^{UV}$ and a product reference
$\alpha^U\otimes\beta^V$ with finite divergence, expanding the logarithms gives
\[
 D(\rho^{UV}\|\alpha^U\otimes\beta^V)
 =I(U:V)_\rho+D(\rho^U\|\alpha^U)+D(\rho^V\|\beta^V).
\]
Apply this with $U=\hY$, $V=\hX XA$, and $\rho=\psi^{RXA}$;
the $\hY$ marginal has law $q_Y$. Subtracting the retained-marginal
divergence yields
\begin{align*}
 b_A
 &=I(\hY:\hX XA)_\psi+D(q_Y\|\mu_Y)\\
 &\quad+D(\psi^{\hX XA}\|\nu_A^{\hX XA})
        -D(\psi^{XA}\|\nu_A^{XA})\\
 &\ge I(\hY:\hX XA)_\psi+D(q_Y\|\mu_Y)
 &&\text{(data processing, Fact~\ref{fact:master-standard}).}
\end{align*}
The conditional comparison $\nu_A=\mu_Y^{\hY}\otimes\psi^{\hX XA}$
achieves equality. Bob is symmetric. The product identity
recovers JPY's coherent opposite-question charge, with the marginal
question divergence explicit~\cite{JPY}.
\phantomsection\label{sec:product-averaging}
\end{proof}

\section{Parameter correspondence for Rao's concentration bound}
\label{sec:rao-parameters}
We verify the conditional-success form of Rao's bound in
Fact~\ref{fact:rao-concentration}, including the constants and rounding.
Put
\[
 a:=v+\delta/4,\qquad \lambda:=\log(1/a),\qquad
 D_0:=L+ t_{\rm th}+\lambda,\qquad k:=\left\lfloor\frac{ t_{\rm th} n}{D_0}\right\rfloor.
\]
Here $0<a<1$ and $k\le\delta n/4$. For $j=|C|<k$ and $p_C>a^k$,
\begin{align*}
 \frac{jL+\log(1/p_C)}{n-j}
 &\le\frac{k(L+\lambda)}{n-k}
 &&\text{($j<k$ and $p_C>a^k$)}\\
 &\le t_{\rm th}
 &&\text{($k\le t_{\rm th} n/D_0$).}
\end{align*}
Thus the conditional success is at most $a$ before the probability
drops to or below the fixed cutoff $a^k$. We spell out the step from Rao's
Lemma~31 that uses this hypothesis. For $k\ge1$, generate
$S_0=\varnothing,\ldots,S_k$ by successively adjoining a uniformly random
unused coordinate, independently of the game outcome, and put
\[
 F_j:=\mathbb E\!\left[p_{S_j}\mathbf1_{\{p_{S_j}>a^k\}}\right].
\]
If $p_{S_{j-1}}>a^k$, the preceding conditional-success bound gives
$\mathbb E[p_{S_j}\mid S_{j-1}]\le a p_{S_{j-1}}$; otherwise
monotonicity gives $p_{S_j}\le a^k$. Hence $F_j\le aF_{j-1}$.
Since $F_0=1$ and $S_k$ is a uniform $k$-subset,
\begin{align*}
 \mathbb E_S p_S
 &\le a^k+F_k
 &&\text{(split at the cutoff $a^k$)}\\
 &\le a^k+a^kF_0=2a^k
 &&\text{(iterate $F_j\le aF_{j-1}$).}
\end{align*}
The subsequent counting argument gives
\begin{align*}
 \Pr[N\ge\lceil(v+\delta)n\rceil]
 &\le2\left(\frac{a}{a+\delta/2}\right)^k
 &&\text{(subset counting)}\\
 &\le2\exp\!\left[-\frac{\delta k}{2(a+\delta/2)}\right]
 &&\text{($\ln(1-u)\le-u$).}
\end{align*}
Indeed, on the threshold event a uniform $k$-subset is all winning with
probability at least $(v+\delta-k/n)^k\ge(a+\delta/2)^k$.
Since $k\ge t_{\rm th} n/(2D_0)$ when $k\ge1$, and
\begin{align*}
 (a+\delta/2)D_0
 &\le L+\tfrac14+\frac{3}{e\ln2}
 &&\text{($a+\delta/2\le\min\{1,3a\}$)}\\
 &<3L
 &&\text{($L\ge1$),}
\end{align*}
the stated exponent follows. Here we used
$a\log(1/a)\le1/(e\ln2)$ and $ t_{\rm th}\le1/4$.
If $k=0$, then
\begin{align*}
  t_{\rm th}\delta n/L&<\delta D_0/L
 &&\text{($ t_{\rm th} n<D_0$)}\\
 &\le13/4
 &&\text{($\delta\lambda\le\delta\log(4/\delta)\le2$).}
\end{align*}
The stated right-hand side then exceeds one. This accounts for rounding
at every $n$.

\parrepbibliography
\end{document}